\documentclass[a4paper, 12pt]{article}
\usepackage[left=2.4cm,right=2.4cm,top=3cm]{geometry}

\usepackage{amsmath}
\usepackage{algorithm}
\usepackage{algpseudocode}
\usepackage{amsfonts}
\usepackage{amssymb}
\usepackage{amsthm} 
\usepackage{multicol}
\usepackage{graphicx}
\usepackage{enumerate}
\usepackage{booktabs}
\usepackage[inline,shortlabels]{enumitem}
\usepackage{bbm}
\usepackage[usenames,dvipsnames]{color}
\usepackage[colorlinks=true,linkcolor=blue, citecolor=blue, urlcolor=blue]{hyperref}
\usepackage{fancyhdr}
\usepackage{epsfig}
\usepackage{epstopdf}
\usepackage{caption}
\usepackage[textfont=footnotesize,justification=centering]{subcaption}
\usepackage{multirow}
\usepackage{appendix}
\usepackage{tabularx}
\usepackage{threeparttable}
\usepackage{float}
\usepackage{color}

\usepackage{titling}
\usepackage{sectsty}
\subsubsectionfont{\normalfont\itshape}

\newtheorem*{theorem-non}{Theorem}
\newtheorem{theorem}{Theorem}
\newtheorem{definition}{Definition}
\newtheorem{proposition}{Proposition}
\newtheorem{lemma}{Lemma}
\newtheorem{corollary}{Corollary}
\newtheorem{remark}{Remark}

\providecommand{\keywords}[1]
{
  	
  \textbf{\textit{Keywords ---}} #1
}

\usepackage[UTF8,scheme=plain]{ctex}

\def \E{\mathbb{E}}
\def \P{\mathbb{P}}

\DeclareMathOperator*{\esssup}{ess\,sup}
\usepackage[utf8]{inputenc}

\title{PreFER: Interactive Robo-Advisor with Scoring Mechanism}
\author{Yuwei Wang\thanks{School of Mathematics, Shanghai University of Finance and Economics, Shanghai 200433, China. Email: \href{mailto:wangyuwei@shufe.edu.cn}{wangyuwei@shufe.edu.cn}}
\and Hoi Ying Wong\thanks{Corresponding author. Department of Statistics and Data Science, The Chinese University of Hong Kong, Shatin, N.T., Hong Kong. Email: \href{mailto: hywong@cuhk.edu.hk}{hywong@cuhk.edu.hk}}}

\date{}

\begin{document}

\maketitle
\vspace{-6em}
\begin{abstract}
We propose an interactive robo-advising framework that learns personalized risk preferences from scores provided by clients. The resulting preference-learning problem is closely related to inverse reinforcement learning (IRL), as the robo-advisor infers the client's latent reward specification from feedback. The robo-advisor interacts with clients iteratively as follows. At each interaction time, the advisor generates investment advice based on the optimal policy distribution derived from an inferred personalized risk preference. The client scores the advice. 
The advisor updates its assessment of the client's risk preference based on the feedback.
This learning procedure motivates us to investigate discrete-time Predictable Forward Exploratory Reward (PreFER) processes and derive an exploratory investment strategy. By interpreting the score as the acceptance probability of a piece of advice, our inverse learning procedure learns the client's exploratory investment distribution using the acceptance-rejection method pioneered by von Neumann. Under CARA preferences, we show that, even though the scores contain noise, the robo-advisor can identify the client's current risk aversion after a sufficiently large number of interactions. The PreFER process then carries the learned preference forward and generates future recommendations under updated market conditions.

\end{abstract}

\keywords{robo-advising; inverse reinforcement learning; relaxed control; forward performance processes; 
 binomial tree model; acceptance-rejection method; portfolio selection}

\newpage
\section{Introduction}\label{sec:Introduction}
Robo-advisors are algorithm-driven wealth managers that provide personalized financial advice and automated investment management services \cite{d2021robo}. Learning individual clients' characteristics is a key feature of an effective robo-advisor, with risk preference being one of the most significant factors. Such preferences may evolve over time and are generally difficult for clients to articulate through a single self-reported risk-aversion parameter.

To motivate our framework, consider how a human financial advisor typically interacts with a client. 
The advisor first forms a view of the current market and communicates this view to the client, based on which an investment product or portfolio is recommended. 
The client can then evaluate whether the proposed investment prospect is consistent with her own preferences, even before the eventual investment outcome is realized. 
Through repeated interactions, the advisor gradually develops a better understanding of the client's preferences and can provide increasingly personalized recommendations. 
A robo-advisor can automate this interaction process, allowing more frequent communication with clients at a lower cost.

As indicated by \cite{bianchi2024human}, human--robot interactions can effectively promote investors' financial literacy. In this spirit, our framework regards a key role of the robo-advisor as helping clients better understand their own risk profiles.
We are aware of three main approaches in the literature to eliciting personal risk aversion through human interaction. The first approach allows clients to declare their risk aversion directly during the interaction \cite{capponi2022personalized, liang2023predictable}. Self-declaring a single number requires a complete understanding of the meaning of risk aversion for different objective functions and may lead to emotional responses. The second approach learns risk aversion from investors' portfolio choices, and the performance is improved through sequential elicitation, in which the investor is asked to select from a finite number of pre-constructed portfolios \cite{alsabah2021robo, cheng2023eliciting}. 
The third approach \cite{wang2021robo} infers risk aversion from a continuous set using historical portfolio choices and then mitigates potential biases with subjective client surveys. These approaches rely on either self-reported risk aversion or choices from pre-specified alternatives, motivating our scoring-based interaction.

Building upon the aforementioned works, we propose a novel robo-advising mechanism that learns personal risk preference through scoring. This mechanism involves a three-step sequential and iterative learning procedure:
\begin{itemize}
\item[Step 1] At time $t$, based on its current assessment of the client, the robo-advisor recommends an investment prospect that it considers suitable for the client.
\item[Step 2] The client either accepts the recommendation or provides feedback indicating how well the proposed prospect matches her preferences.
\item[Step 3] The robo-advisor uses the feedback to update its assessment of the client's risk preference and generates a new personalized recommendation, repeating the interaction until the client is satisfied. The learned preference is then used to predict the forward utility at time $t+1$.
\end{itemize}

To implement this interactive mechanism, we introduce predictable forward exploratory reward (PreFER) processes by incorporating relaxed stochastic control into predictable forward performance processes (PFPPs). We characterize the associated optimal control distribution and show that its density is maximized at the classical PFPP optimal strategy, while the exploration parameter governs its dispersion. We then develop a scoring-based learning procedure that identifies the client's risk preference under error-free feedback and increases the probability of acceptable recommendations under noisy feedback.

Our learning problem is closely related to inverse reinforcement learning (IRL), as the robo-advisor infers a latent reward specification from client feedback. 
For the robo-advising application considered in this paper, we specialize the PreFER framework to a structured parametric setting with CARA preferences, in which the reward function is characterized by the client's risk-aversion parameter. This structured specification yields a tractable inverse problem and allows an explicit characterization of the learning procedure. While the underlying interactive framework can in principle accommodate richer parametric reward classes, such extensions may no longer admit the one-to-one identification obtained under the CARA specification and are left for future study.
The learning procedure is interactive and adaptive, and a recommendation is deemed acceptable once its score exceeds the client's threshold, without requiring exact recovery of risk aversion. Table \ref{tab:Comparison} presents a comparative overview of how risk aversion is elicited in the robo-advising literature.

\begin{table}[!htb]
  \centering
  \caption{Comparison of risk aversion elicitation in key literature}
  \begin{tabularx}{\textwidth}{l
      >{\raggedright\arraybackslash\setlength\hsize{1.2\hsize}}X
      >{\raggedright\arraybackslash\setlength\hsize{.65\hsize}}X
      >{\raggedright\arraybackslash\setlength\hsize{.65\hsize}}X}
    \hline
     & Required inputs & Risk aversion set & Performance guarantee \\
    \hline
    \cite{capponi2022personalized,liang2023predictable}
      & self-declaring risk aversion
      & continuous
      & N/A \\
    \cite{wang2021robo}
      & historical data
      & continuous
      & N/A \\
    \cite{alsabah2021robo}
      & observed portfolio choices
      & finite
      & $\epsilon$-close \\ 
    \cite{cheng2023eliciting}
      & hypothetical portfolio choices
      & finite
      & convergence of Gibbs measure \\
    This paper
      & scores of recommendations
      & continuous
      & threshold score \\
    \hline
  \end{tabularx}
  \label{tab:Comparison}
\end{table}

Figure \ref{figDiagram} illustrates the process of risk-preference learning and investment recommendation.
In our framework, the robo-advisor first forms and communicates a market outlook, which is treated as given during the subsequent preference-learning interaction. PreFER then learns the client's risk preference conditional on this shared environment rather than learning the market model itself. Market-model misspecification may affect the implementation of the learned preference without necessarily distorting its identification. A robust treatment of this implementation error is beyond the scope of this paper and is left for future study.
\begin{figure}[h]
    \centering
   \includegraphics[width=12cm]{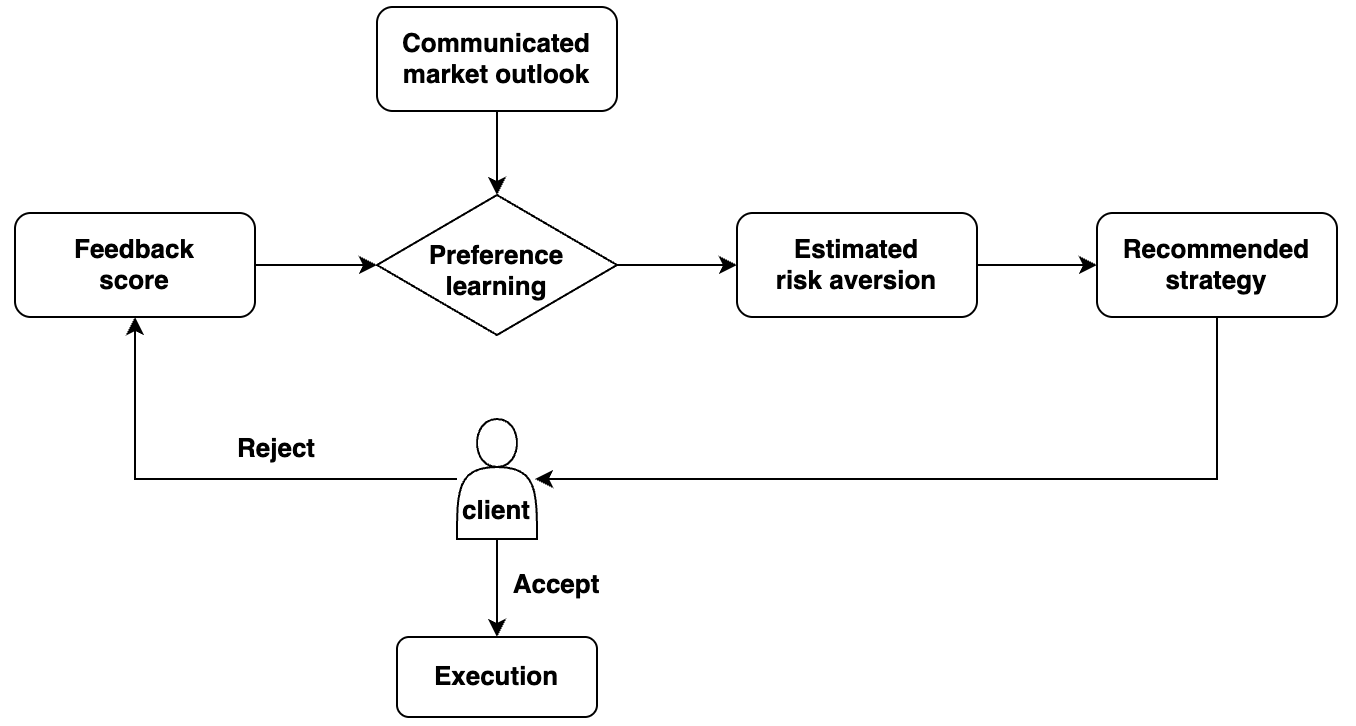}
  \caption{Schematic of the interactive learning process}
  \label{figDiagram}
\end{figure}

To model the portfolio recommendations generated by the robo-advisor, we adopt a relaxed stochastic control formulation, in which admissible controls are represented by probability distributions over the action space. Entropy-regularized relaxed control has been widely used to model exploration in reinforcement learning and related stochastic control problems
\cite{wang2020reinforcement, wang2020continuous, jia2022policya, jia2022policyb, jia2023q, guo2022entropy, dai2023learning, wu2024reinforcement, gao2022state, cui2023discrete}.
Most existing formulations, however, are developed within expected-utility frameworks with a fixed investment horizon and a terminal preference criterion. This is less suited to our setting, where the client's current risk preference is learned through interaction and then needs to be carried forward consistently over time, as described in Step 3.

Forward performance processes provide a natural framework for this purpose, as they avoid ex ante horizon specification and allow dynamic adaptation to changes in both investor preferences and market conditions. Since their introduction in various forms by
\cite{musiela2006investments, musiela2008optimal, musiela2009portfolio, musiela2010portfolio, henderson2007horizon, choulli2007minimal, vzitkovic2009dual},
forward performance processes have been developed in a wide range of settings
\cite{kallblad2018dynamically, källblad2020black, hu2020systems, he2021forward, Bo2022, mohamed2013exact, el2014, el2018consistent, Avanesyan2020, nadtochiy2017optimal, shkolnikov2016asymptotic, chong2019ergodic, choulli2017explicit, nadtochiy2014class, liang2017representation}.
Their discrete-time counterpart, predictable forward performance processes (PFPPs), was introduced in \cite{angoshtari2020predictable} and further studied in
\cite{strub2021evolution, liang2023predictable, liang2023predictableRelative, angoshtari2022predictable, Angoshtari2024, waldon2024forward}.

As established in \cite{liang2023predictable}, PFPPs provide a natural basis for robo-advising, allowing flexible investment horizons and dynamic adaptation to changing market conditions. In our setting, the forward framework further allows information learned from past interactions to be carried forward over time and used to generate future recommendations that remain consistent with the client's learned preference. These recommendations can adapt to evolving market conditions without requiring repeated human involvement, while new client feedback at subsequent interaction times can be incorporated to further refine the learned preference.

Specifically, we consider exploratory control in discrete time and continuous action space in our PreFER processes, where feasible controls are randomized and represented by distributions. 
By employing distributional control, our approach explicitly models the client's willingness to consider a spectrum of strategy recommendations. The client need not express her preference through an explicit risk-aversion parameter, but can instead evaluate whether a presented investment prospect is too aggressive, too conservative, or otherwise consistent with her preferences. This naturally motivates the scoring mechanism, through which clients can express varying degrees of satisfaction with the recommended investment prospects.
For any given initial utility, we show that the density of the optimal exploratory distribution is maximized at the classical PFPP optimal strategy without exploration, regardless of the intended exploration level, while the exploration parameter governs the dispersion of the distribution. This is consistent with the entropy-regularized exploratory formulation \cite{wang2020reinforcement}, where exploitation and exploration are captured by the mean and variance of a Gaussian distribution.
Between interaction times, the forward framework allows investment recommendations to adapt to evolving market conditions while remaining consistent with the client's learned preference. 
Even when client feedback contains observation errors, under the CMA update we derive a lower bound on the probability of accepting a recommendation that increases with the number of interactions and converges to one.

The score is provided before the realized investment performance is observed. Rather than asking the client to evaluate a technical portfolio allocation, the robo-advisor can present the recommendation as an interpretable investment prospect, for example, in terms of potential gains and losses and their associated probabilities. The client can then assess whether the prospect is consistent with her preferences under the communicated market outlook. This is consistent with the human-advisor interaction described above, where clients can evaluate a proposed investment before its eventual outcome is realized.
The continuous score used in our theoretical formulation represents the intensity of client feedback rather than requiring clients to report a precise numerical value. In practice, such feedback may be elicited through coarse ratings, sliders, ordinal responses, or other noisy signals, while our model explicitly allows for observation errors in client feedback. 
More broadly, scoring-based feedback is widely used in personalized recommendation systems to infer user preferences from explicit signals, such as ratings, or implicit signals, such as clicks and watch time. Platforms such as Amazon, Netflix, and YouTube rely on such feedback to model user preferences and improve recommendation accuracy \cite{covington2016deep, gomez2015netflix}. Satisfaction-oriented recommendation systems have been studied using collaborative filtering, content-based models, and hybrid approaches \cite{ricci2010introduction}, with more recent work using machine learning and reinforcement learning to adapt recommendations dynamically \cite{chen2019top, zhao2018deep}.

In Step 2, we propose an algorithm to learn and update the control distribution using client scores. Within the model, the score is interpreted as the probability that a recommendation sampled from the proposal distribution is accepted by the client. 
Our learning procedure builds on rejection sampling, dating back to von Neumann \cite{von1951variate}. Rejection sampling is a simple and widely used Monte Carlo technique with applications in both theoretical and applied settings
\cite{devroye2006nonuniform,robert1999monte,ripley2009stochastic}.
In standard rejection sampling, samples are generated from a proposal distribution and accepted according to an acceptance probability determined by the target and proposal distributions, so that the accepted samples follow the target distribution. Our procedure reverses this logic: the robo-advisor observes the acceptance information for a recommendation generated from a known proposal distribution and uses it to infer the latent target distribution associated with the client's preference. 
This inversion is tractable because the target and proposal distributions share the same parametric form, with their difference characterized by the client's risk-aversion parameter. More specifically, under the CARA specification studied analytically in this paper, error-free feedback, together with the directional indication of whether the recommendation is too conservative or too aggressive, uniquely identifies the client's risk-aversion parameter. When feedback is subject to errors, the robo-advisor instead seeks recommendations whose scores exceed the client's acceptance threshold, and the resulting approximation accuracy of the inferred risk aversion is controlled by this threshold.

To the best of our knowledge, this is the first robo-advising framework to use scoring-based client feedback to learn risk preferences through sequential interaction. A menu-based approach provides a natural benchmark: if the client could compare the full family of investment prospects and select the preferred one without error, her risk preference could in principle be identified directly. In practice, however, a menu can offer only finitely many alternatives and therefore provides a discretized representation of a continuous risk-preference space. It also requires the client to compare several alternatives simultaneously, and a single menu choice, when affected by noise or inconsistency, may not uniquely reveal the client's underlying risk preference. By contrast, PreFER shifts the inferential burden to the robo-advisor: the client evaluates one prospect at a time, while the robo-advisor sequentially aggregates repeated, potentially noisy feedback and adapts subsequent recommendations.

The organization of the paper is as follows.
Section \ref{sec:ModelSetup} introduces the market model and proposes the discrete-time PreFER process, designed to capture investor satisfaction with portfolio choices. In Section \ref{sec:Construction}, we construct and analyze PreFER processes and their associated optimal distributional strategies under CARA risk preferences.
We develop in Section \ref{sec:InteractiveLearning} an interactive learning procedure that infers the client's risk aversion and identifies acceptable investment recommendations based on client feedback.
Section \ref{sec:conclusion} concludes. Extensions to (i) multiple stocks and (ii) CRRA preferences are relegated to the Appendix, and all proofs are provided there.

\section{Model and definition}\label{sec:ModelSetup}
In this section, we first introduce and formalize the discrete-time predictable forward exploratory reward (PreFER) process.
Let $(\Omega, \mathcal{F}, \P)$ be a probability space.
There is a risk-free bond which does not offer any interest, and a binomial stock 
in the market.
The number of stocks can be increased, as discussed in the appendix.
Throughout the paper, $\mathbb{N}$ denotes the set of positive integers and $\mathbb{N}_0$ the set of nonnegative integers. 

Let $R_t$ denote the total return over period $[t-1, t)$ of the stock. 
We suppose that
$R_t$ is described by a binomial model, $R_t = u_t B_t + d_t (1-B_t), t \in \mathbb{N},$
where $B_t$ takes values in $\{0,1\}$, i.e., $(B_t)_{t \in \mathbb{N}}$ is a sequence of Bernoulli random variables with transition probabilities $(p_t)_{t \in \mathbb{N}}$.  
The market parameters $(d_t)_{t \in \mathbb{N}}$, $(u_t)_{t \in \mathbb{N}}$, and $(p_t)_{t \in \mathbb{N}}$ are allowed to be stochastic processes.
In order to assure absence of arbitrage, we assume that $0 < d_t  < 1 < u_t$ and that $0 < p_t < 1$ for any $t \in \mathbb{N}$. We further assume that the market considered herein offers positive expected excess return, $\E[R_t-1\vert \mathcal{F}_{t-1}]>0$.


We define the filtration $\mathbb{F} = (\mathcal{F}_t)_{t \in \mathbb{N}_0}$ by letting $\mathcal{F}_t$ be the augmented $\sigma$-algebra generated by $(B_j)_{j=1}^t$, and the market parameters $(u_j, d_j, p_j)_{j=1}^{t+1}$.
Note that we have $\mathbb P[B_t=1\mid\mathcal F_{t-1}]=p_t$, and $u_t,d_t,p_t$ are predictable. In the robo-advising interpretation, these parameters represent the market outlook adopted and communicated by the robo-advisor at the beginning of each trading period and are treated as given during the corresponding preference-learning interaction. The outcomes of the stock returns are only known at the end of each trading period.

{
Denote the trading strategies by
$f = \left( f_t \right)_{t \in \mathbb{N}}$, where $f_t$ represents the dollar amount invested in the risky asset over trading period $[t-1, t)$. 
A portfolio is constructed by following the trading strategy on the stock while investing all the remaining wealth in the risk-free bond. 
Given a self-financing trading strategy $f$, the wealth process $X^{f} = \left(X_t^{f}  \right)_{t \in \mathbb{N}}$ evolves according to $X_t^{f} = x + \sum\nolimits_{k=1}^tf_k(R_k-1)$ with initial endowment $X_0=x \in \mathbb{R}$. 
In the framework of discrete-time predictable forward performance processes introduced by \cite{angoshtari2020predictable},
all that is required for a trading strategy $f$ and the associated wealth process $X^{f}$ to be \textit{admissible} is that $f_t$ is real-valued and $\mathcal{F}_{t-1}$-measurable. 
In contrast, a key feature of our model is the incorporation of the relaxed control and an entropy regularization term into the stochastic control framework. 
In the robo-advising interpretation, this randomized control represents exploration over candidate investment recommendations while the client's preference is not yet fully learned.
Within this framework, the relaxed control process $f^{\pi} = \left( f_t^{\pi} \right)_{t \in \mathbb{N}}$
is represented as a measure-valued or distributional control process, characterized by the sequence of density functions \(\pi = \{\pi_t, t \in \mathbb{N}\}\). The wealth dynamics becomes $X_t^{\pi} = x + \sum\nolimits_{k=1}^tf_k^{\pi}(R_k-1)$.

At the beginning of each trading period $[t-1,t)$, an interval $[\underline{\pi}_t,\overline{\pi}_t]$ is specified to represent the uncertainty range of an investor's portfolio allocation. 
Let \(\pi_t^0\) denote the optimal control under the predictable forward process in the absence of exploration.
We assume that $\underline{\pi}_t<\pi_t^0<\overline{\pi}_t$, and denote by $B(\underline{\pi}_t,\overline{\pi}_t)$ the
Borel algebra on $[\underline{\pi}_t,\overline{\pi}_t]$.
Trading constraints can be imposed, for instance, \(\underline{\pi}_t = -\frac{X_{t-1}}{u_t - 1}\) and \(\overline{\pi}_t = -\frac{X_{t-1}}{1 - d_t}\) ensure non-negative wealth.

An admissible control is a process \(\pi = \left( \pi_t \right)_{t \in \mathbb{N}}\) such that \(\pi_t\) is a density function of probability measures on \([\underline{\pi}_t, \overline{\pi}_t]\), absolutely continuous with respect to the Lebesgue measure almost surely for each \(t \in \mathbb{N}\). Additionally, for any \(A \in B([\underline{\pi}_t, \overline{\pi}_t])\), \((\int_A \pi_t(u) du)_{t \in \mathbb{N}}\) is \(\mathcal{F}_{t-1}\)-measurable.  
The set of admissible distributions for trading strategies starting at time \(t\) is denoted by \(\mathcal{A}(t)\), and the corresponding wealth processes, starting from \(X_t = x\), is denoted by \(\mathcal{X}(t, x)\). For brevity, \(\mathcal{A}(0)\) and \(\mathcal{X}(0, x)\) are written as \(\mathcal{A}\) and \(\mathcal{X}(x)\). To simplify notation, the explicit dependence of wealth processes on strategies is often omitted, and \(X \in \mathcal{X}(t, x)\) is used instead.

Since the control at each trading time is represented by a probability distribution over the admissible strategy interval, maximizing the expected utility alone would generally result in a degenerate distribution that places all probability mass on the strategy with the highest expected utility. To encourage exploration and prevent the control distribution from degenerating into a point mass, we incorporate an entropy regularization term into the objective.
In order to encourage and capture the overall level of model exploration and uncertainty in terms of $\pi_t, t\in \mathbb{N}$, we consider in this paper Tsallis entropy, 
$H(\pi_t)=\frac{1}{m-1}(1-\int_{\underline{\pi}_t}^{\overline{\pi}_t}\pi_t^m(s)ds)$,
where $m$ is referred to as Tsallis entropy index characterizing the degree of nonlinearity. When $m\rightarrow 1,$ the Tsallis entropy reduces to the Shannon entropy. For example, we choose $m=2$, then $H(\pi_t)=1-\int_{\underline{\pi}_t}^{\overline{\pi}_t}\pi_t^2(s)ds$. One can equivalently consider maximizing $H(\pi_t)=-\int_{\underline{\pi}_t}^{\overline{\pi}_t}\pi_t^2(s)ds$. 

The investor starts at time \(t = 0\) with preferences over her wealth described by a utility function \(U_0\). We assume that any utility function \(U: \mathbb{R} \to \mathbb{R}\) is twice continuously differentiable, strictly increasing, strictly concave, and satisfies the Inada conditions.  
Given any admissible distributional control $\pi \in \mathcal{A}$, let $$\E_{\pi}\left[U(X_t)\vert \mathcal{F}_{t}\right]:=\int_{\underline{\pi}_1}^{\overline{\pi}_1}\dots\int_{\underline{\pi}_t}^{\overline{\pi}_t}
U(x+\sum\limits_{i=1}^ts_{i}(R_i-1))\pi_1(s_1)\dots\pi_t(s_t)ds_1\dots ds_t.$$
Although the framework and the following definition accommodate general utility functions, our primary results focus on the case where the initial preferences belong to the CARA or CRRA families.
This allows for greater mathematical tractability.

\begin{definition}\label{def:ExploratoryFrowardPreferences}
A family of random functions $\left\lbrace U_{t}: \mathbb{R} \times \Omega \rightarrow \mathbb{R} \vert t \in \mathbb{N}_0 \right\rbrace$ together with the associated family of temperature parameters $\left\lbrace \lambda_{t}: \mathbb{R} \times \Omega \rightarrow \mathbb{R}_{+} \vert t \in \mathbb{N} \right\rbrace$, is called a discrete-time predictable forward exploratory reward (PreFER) process under entropy $H:\mathcal{A}\rightarrow \mathbb{R}$ if it satisfies the following conditions:
\begin{enumerate}
\item[(i)] $U_0\left(x,\cdot\right)$ is constant, $U_{t}\left(x,\cdot \right)$ and $\lambda_{t}\left(x,\cdot \right)$ are $\mathcal{F}_{t-1}$-measurable for each $x\in\mathbb{R}$ and $t \in \mathbb{N}$. 
\item[(ii)] $U_{t}(\cdot ,\omega)$ is a utility function for almost all $\omega \in \Omega$ and all $t \in \mathbb{N}_0$.
\item[(iii)] For any initial wealth $x$ and any $\pi \in \mathcal{A}$, the resulting wealth process $X \in \mathcal{X}(x)$ satisfies
\begin{align*}
\E_{\pi}\left[U_{t-1}(X_{t-1})\vert \mathcal{F}_{t-1}\right] \geqslant \E_{\pi} \left[ U_{t}\left( X_{t} \right)+\lambda_t H(\pi_t) \big\vert \mathcal{F}_{t-1} \right], \quad t \in \mathbb{N}.
\end{align*}
\item[(iv)] For any initial wealth $x$, there exists control distribution $\pi^{*}\in \mathcal{A}$ such that the resulting wealth process $X^{*}$ satisfies
\begin{align*}
\E_{\pi^*}\left[U_{t-1}(X_{t-1}^*)\vert \mathcal{F}_{t-1}\right] = \E_{\pi^*} \left[ U_{t} \left(X^{*}_{t} \right)+\lambda_t H(\pi_t^{*}) \big\vert \mathcal{F}_{t-1} \right], \quad t \in \mathbb{N}.
\end{align*}
\end{enumerate}
\end{definition}

According to Definition \ref{def:ExploratoryFrowardPreferences}, the investment strategy is assessed using the PreFER process, a stochastic framework that emphasizes consistency in performance over time. 
Analogously to the non-exploratory case, conditions (iii) and (iv) require that after integrating against the investment strategy \((f_n^{\pi})_{n=1}^t\), whose density function is given by \((\pi_n)_{n=1}^t\), across all exploration intervals \(\{[\underline{\pi}_n,\overline{\pi}_n]\}_{n=1}^t\) and thus excluding the randomness arising from the control distribution, the entropy-regularized processes  
are a supermartingale and a martingale, respectively, with respect to the filtration \(\mathcal{F}_n\).
A strategy is considered optimal if its wealth process and accumulated entropy dynamically preserve expected performance. Specifically, the strategy's conditional performance at any future time, given the current information, must be equal to its performance today. 
Strategies that fail to satisfy this criterion are deemed suboptimal, consistent with the martingale optimality principle (MOP).
However, since the cumulated entropy term \(\sum\limits_{i=1}^{t} \lambda_i H(\pi_i)\) is \(\mathcal{F}_{t-1}\)-measurable and appears on both sides of the (in)equality of MOP, it cancels out, reducing the expressions to Properties (iii) and (iv) in Definition \ref{def:ExploratoryFrowardPreferences}.

Given wealth $x$ at time $t$, we introduce the set of admissible strategies for single period $[t,t+1)$ by $\mathcal{A}_{t,t+1}(x)$, the corresponding set of wealth at time $t+1$ by $\mathcal{X}_{t,t+1}(x)$.
Properties $(iii)$ and $(iv)$ in Definition \ref{def:ExploratoryFrowardPreferences} imply that
\begin{align}\label{eq:InverseEqn_Dynamic}
\E_{\pi} \left[U_{t-1} \left( X^{*}_{t-1} \right)\vert \mathcal{F}_{t-1}\right] = \esssup_{X_{t} \in \mathcal{X}_{t-1,t} \left( X^{*}_{t-1}\right)} \E_{\pi} \left[ U_{t}  \left(X_{t}\right)+\lambda_t H(\pi_t)  \bigg\vert \mathcal{F}_{t-1}\right].
\end{align}
By iteratively solving \eqref{eq:InverseEqn_Dynamic}, the construction of PreFER processes can be achieved. For a comprehensive discussion on the case with 
$\lambda_t=0$, refer to \cite{angoshtari2020predictable}.
The crucial step is then to solve the following \textit{inverse investment problem with relaxed control}:
Given a profile of initial preferences $U_0$, we seek to determine forward exploratory utility $U_1$, and density function for the investment strategy $\pi_1^{*}$ such that
\begin{equation}\label{eq:InverseProblem}
\begin{aligned}
    U_{0} \left( x \right) = \sup_{X_{1} \in \mathcal{X}_{0,1} \left(x\right)} \E_{\pi} \left[ U_{1}  \left(X_{1}\right)+\lambda_1 H(\pi_1) \right]=\sup_{{\pi_1} \in \mathcal{A}_{0,1}(x)} \E _{\pi}\left[ U_1 \left( x + f_1(R_1-1)\right)+\lambda_1 H(\pi_1) \right]
\end{aligned}
\end{equation}
for any initial wealth $x\in \mathbb{R}$, where $f_1$ is a random control whose probability density is $\pi_1$.
Subsequently, one can iteratively construct $U_{2}$, $U_{3}$, and so forth.
This is achieved by repeatedly solving 
a problem of the form \eqref{eq:InverseProblem}, conditionally on updated information available at the next time point, and arguing that this solution satisfies the required measurability conditions.

Note that the market is sequentially complete across each trading period, assuming full knowledge of the model parameters at the start of the period. Therefore, the equivalent martingale measure for the truncated model of a single period is unique. 
We denote it by $\mathbb{Q}$ and denote by $q_t = \frac{1-d_t}{u_t - d_t}$ the risk-neutral probability for an upward move of the stock in period $[t-1,t), t\in \mathbb{N}$.
The positivity of the expected excess return is then equivalent to $\log \frac{p_t(1-q_t)}{(1-p_t)q_t}>0$. 

Let \(f^{0} = (f_t^{0})_{t \in \mathbb{N}}\) denote the optimal strategy and \((X_t^{0})_{t \in \mathbb{N}}\) the corresponding wealth process, under the predictable forward performance process without exploration. We refer to this as the benchmark strategy. Recall from \cite{angoshtari2020predictable} that \(f^{0}\) follows the Dirac measure \(\pi^{0}\) solving
\begin{equation}\label{PFPPOptimalControl}
\begin{aligned}
    U_{t-1} \left( X_{t-1} \right) = \esssup_{\pi_t \in \mathcal{A}(t-1,X_{t-1})} \E \left[ U_t \left( X_{t-1} + f_t^0(R_t-1)\right)\vert \mathcal{F}_{t-1}\right].
\end{aligned}
\end{equation}
Suppose that 
$\pi^*(s)\in\mathcal{A}$ is the density function of the optimal control $f^*$ such that 
\begin{equation}\label{PreFEROptimalControl}
\begin{aligned}
    \E_{\pi^*} \left[U_{t-1} \left( X_{t-1} \right) \vert \mathcal{F}_{t-1}\right]=\E_{\pi^*} \left[ U_t \left( X_{t-1} + f_t^*(R_t-1)\right)-\lambda_t \int_{\underline{\pi}_t}^{\overline{\pi}_t}(\pi_t^*)^2(s)ds\bigg\vert \mathcal{F}_{t-1}\right].
\end{aligned}
\end{equation}
We establish the following property for the density function of the optimal strategy when constructing the PreFER processes with distributional control.

\begin{proposition}\label{Prop:MaxControlDist}
    For any trading period \([t-1, t)\), the probability density function \(\pi_t^{*}(s)\) of the associated optimal control for constructing the PreFER process, defined over \( s \in [\underline{\pi}_t, \overline{\pi}_t] \), attains its maximum at \( s^* = \pi_t^{0} \).
\end{proposition}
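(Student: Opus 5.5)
The proposition follows from solving the one-period maximization in \eqref{PreFEROptimalControl} explicitly. The entropy term turns it into a strictly concave quadratic problem in the density, and its solution is a nondecreasing transform of the expected forward utility as a function of the dollar amount $s$.

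\textbf{Step 1: reduce to a linear--quadratic problem in the density.} Fix $t$ and condition on $\mathcal{F}_{t-1}$, so that $X_{t-1}=x$, $u_t,d_t,p_t$, $\lambda_t$ and the forward utility $U_t$ are all known. Define
$$g_t(s):=\E\left[U_t\left(x+s(R_t-1)\right)\vert\mathcal{F}_{t-1}\right]=p_tU_t\left(x+s(u_t-1)\right)+(1-p_t)U_t\left(x+s(d_t-1)\right),$$
for $s\in[\underline{\pi}_t,\overline{\pi}_t]$. By Fubini, the right-hand side of \eqref{PreFEROptimalControl} for a candidate density $\pi_t$ becomes
$$J(\pi_t)=\int_{\underline{\pi}_t}^{\overline{\pi}_t} g_t(s)\pi_t(s)\,ds-\lambda_t\int_{\underline{\pi}_t}^{\overline{\pi}_t}\pi_t^2(s)\,ds .$$
The optimal $\pi_t^*$ maximizes $J$ over densities on $[\underline{\pi}_t,\overline{\pi}_t]$.

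\textbf{Step 2: identify the maximizer of $g_t$.} Because $U_t(\cdot,\omega)$ is a strictly concave utility (Definition \ref{def:ExploratoryFrowardPreferences}(ii)), $g_t$ is strictly concave in $s$. Its unique maximizer is, by definition, the benchmark control $\pi^0_t$ in \eqref{PFPPOptimalControl}, computed for the same forward utility $U_t$. By assumption $\pi^0_t$ lies in the interior of $[\underline{\pi}_t,\overline{\pi}_t]$. Consequently $g_t$ is strictly increasing on $[\underline{\pi}_t,\pi^0_t]$ and strictly decreasing on $[\pi^0_t,\overline{\pi}_t]$.

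\textbf{Step 3: solve the density problem by Lagrange/KKT.} We maximize $J$ subject to $\pi_t\ge 0$ and $\int\pi_t=1$. Since $J$ is strictly concave in $\pi_t$, the maximizer is unique. Pointwise maximization of $g_t(s)\pi-\lambda_t\pi^2-\mu\pi$ over $\pi\ge0$, where $\mu$ is the multiplier for the mass constraint, gives
$$\pi^*_t(s)=\frac{\left(g_t(s)-\mu_t\right)^+}{2\lambda_t}.$$
The constant $\mu_t$ is chosen so that $\int\pi_t^*=1$. Such a $\mu_t$ exists and is unique, because $\mu\mapsto\int (g_t-\mu)^+/(2\lambda_t)$ is continuous and strictly decreasing wherever it is positive, tends to $+\infty$ as $\mu\to-\infty$, and tends to $0$ as $\mu\uparrow\max g_t$.

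Sufficiency is a standard concavity argument. For any admissible density $\pi$, we have $J(\pi)-J(\pi^*)\le\int(g_t-\mu_t-2\lambda_t\pi^*_t)(\pi-\pi^*_t)\le 0$, using the complementary-slackness structure of $\pi^*_t$.

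\textbf{Step 4: conclude.} The map $y\mapsto (y-\mu_t)^+/(2\lambda_t)$ is nondecreasing, so $\pi_t^*$ attains its maximum wherever $g_t$ does, namely at $s^*=\pi^0_t$. The maximum value $\pi_t^*(\pi_t^0)$ is strictly positive, since otherwise $\pi^*_t\equiv0$, which contradicts $\int\pi^*_t=1$. By Step 2, $\pi^*_t$ is nondecreasing to the left of $\pi_t^0$ and nonincreasing to the right, so $s^*=\pi^0_t$ is the maximizer regardless of $\lambda_t$. The exploration parameter $\lambda_t$ only affects the dispersion of the density, through $2\lambda_t$ and $\mu_t$.

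\textbf{Main obstacle.} The point that needs the most care is Step 2's identification of the argmax of $g_t$ with $\pi^0_t$. The forward utility $U_t$ in the PreFER construction is itself produced by the inverse problem \eqref{eq:InverseProblem}. The argument therefore requires that $\pi^0_t$ be read as the non-exploratory optimizer of $\E[U_t(X_{t-1}+s(R_t-1))\vert\mathcal{F}_{t-1}]$ for this same $U_t$. It also requires that the inverse problem does not alter the shape of $g_t$, which it cannot, since the construction of $U_t$ only adds $\mathcal{F}_{t-1}$-measurable quantities to the objective.
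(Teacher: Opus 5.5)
Your proof is correct and is essentially the paper's argument: a Lagrange multiplier for the unit-mass constraint gives a density that is a nondecreasing function of $s\mapsto\E[U_t(X_{t-1}+s(R_t-1))\vert\mathcal{F}_{t-1}]$, so it peaks at the non-exploratory optimizer $\pi_t^0$. The paper likewise identifies $\pi_t^0$ as the maximizer of this expectation for the same $U_t$. The only difference is that you handle $\pi_t\geq 0$ through the positive part $(g_t-\mu_t)^+$ and check sufficiency, whereas the paper writes the affine formula $(g_t+\beta_t)/(2\lambda_t)$ and relies on Assumption \eqref{AssumptionForPositive} for nonnegativity.
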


\section{The construction of the PreFER processes}\label{sec:Construction}
In this section, we propose an algorithm for constructing the PreFER processes under Tsallis entropy with \(m = 2\), based on CARA risk preferences. The formulation and results extend to (i) CRRA preferences and (ii) multi-stock binomial models under CARA preferences; full details are relegated to the Appendix.

Consider a criterion of exponential type and construct the associated PreFER processes with initial preferences given by $U_0(x)=-e^{-\gamma x}$, where the constant $\gamma>0$ denotes the coefficient of risk aversion.
To ensure that the forward pair \((U_{t-1}, U_t)\), satisfying the relationship \eqref{eq:InverseEqn_Dynamic}, remains within the class of CARA utilities, we adopt a state-dependent specification of the temperature parameter, thereby introducing an additional degree of freedom that facilitates homotheticity and enhances analytical tractability.

For any $\rho_t>0$, define
$P_t=C_{t-1}J_t$ and $C_t=\prod\limits_{n=1}^{t}P_n$ with $C_0=1$ iteratively, where $J_t=\frac{1}{\beta_t-\frac{1}{4\rho_t}\int_{\underline{\pi}_t}^{\overline{\pi}_t}h_t^2(s)ds},$
\begin{align}\label{Expression-h}
    h_t(s)=-C_{t-1}\left(p_te^{-\gamma(u_t-1)s}+(1-p_t)e^{-\gamma(d_t-1)s}\right)+\beta_t,
\end{align}
and 
\begin{align}\label{Expression-beta}
    \beta_t=\frac{2\rho_t-C_{t-1}\left(\frac{p_t}{\gamma(u_t-1)}(e^{-\gamma(u_t-1)\overline{\pi}_t}-e^{-\gamma(u_t-1)\underline{\pi}_t})+\frac{1-p_t}{\gamma(d_t-1)}(e^{-\gamma(d_t-1)\overline{\pi}_t}-e^{-\gamma(d_t-1)\underline{\pi}_t})\right)}{\overline{\pi}_t-\underline{\pi}_t}.
\end{align}
Consider temperature parameter $\lambda_t:=\rho_t P_te^{-\gamma X_{t-1}}$ for control problem over trading period $[t-1,t)$, where the coefficient $\rho_t>0$ is $\mathcal{F}_{t-1}$-measurable and introduced for quantifying the trade-off between exploitation and exploration. 
A higher temperature reflects greater willingness to explore, or greater indecisiveness among candidate portfolio recommendations, resulting in a more dispersed control distribution. In contrast, a lower temperature reflects a stronger preference for recommendations near the benchmark strategy. Thus, $\rho_t$ controls the intensity of exploration and may also be interpreted as the client's degree of indecisiveness over candidate recommendations.

To ensure the well-posedness of the PreFER processes under CARA risk preferences, we assume that for any prescribed exploration coefficient \(\rho_t\), the interval \([ \underline{\pi}_t, \overline{\pi}_t ]\) must be chosen such that the following conditions are satisfied: 
\begin{align}\label{AssumptionForPositive}
    \beta_t - \frac{1}{4\rho_t} \int_{\underline{\pi}_t}^{\overline{\pi}_t} h_t^2(s) \, ds > 0, \quad h_t(\overline{\pi}_t) > 0 \quad \text{and} \quad h_t(\underline{\pi}_t) > 0.
\end{align}
The assumption is consistent with the interpretation of $\rho_t$, a lower \(\rho_t\), corresponding to less indecisiveness among candidate recommendations, suggests that the interval \([\underline{\pi}_t, \overline{\pi}_t]\) should be narrower and remain closer to the benchmark strategy. In contrast, a higher \(\rho_t\) allows for a wider interval of exploratory strategies.

In the following theorem, we present an explicit construction of the PreFER process and the associated optimal relaxed control. 
\begin{theorem}\label{Thm:ExploForwardRL}
For $t\in \mathbb{N}$, the process $U_t(x)=-C_te^{-\gamma x}$, is a PreFER process.
The density function of the optimal strategy $f_t^{*}$ is given by
\begin{align}\label{OptimalStrategyRL}
    \pi_t^{*}(s)=\frac{-C_{t-1}p_te^{-\gamma(u_t-1)s}-C_{t-1}(1-p_t)e^{-\gamma(d_t-1)s}+\beta_t}{2\rho_t},
\end{align}
which generates the corresponding optimal wealth process
\begin{align}\label{OptimalWealthRL}
     X_t^{*}= X_0+\sum_{n=1}^{t}f_n^{*}(R_n-1).
\end{align}
\end{theorem}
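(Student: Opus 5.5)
The argument goes by induction on $t$, solving at each step the one-period inverse problem \eqref{eq:InverseEqn_Dynamic} with the ansatz $U_t(x)=-C_te^{-\gamma x}$, where $C_t$ is $\mathcal{F}_{t-1}$-measurable. Fix $t$ and condition on $\mathcal{F}_{t-1}$, so that $u_t,d_t,p_t,\rho_t,\underline{\pi}_t,\overline{\pi}_t$ and $C_{t-1}$ are known constants. For a density $\pi$ on $[\underline{\pi}_t,\overline{\pi}_t]$ and wealth $X_{t-1}=x$, averaging over $B_t$ gives
\[
\E_{\pi}\bigl[U_t(X_t)+\lambda_tH(\pi)\bigm|\mathcal{F}_{t-1}\bigr]
=-P_te^{-\gamma x}\Bigl(\int g_t(s)\pi(s)\,ds+\rho_t\int\pi^2(s)\,ds\Bigr),
\]
where
\[
g_t(s)=C_{t-1}\bigl(p_te^{-\gamma(u_t-1)s}+(1-p_t)e^{-\gamma(d_t-1)s}\bigr).
\]
This uses $C_t=C_{t-1}P_t$ and the choice $\lambda_t=\rho_tP_te^{-\gamma x}$, with the $m=2$ entropy taken as $-\int\pi^2$. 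The common factor $e^{-\gamma x}$ is what keeps the CARA class closed. Maximizing the left side is therefore equivalent to minimizing the strictly convex functional $F(\pi)=\int g_t\pi+\rho_t\int\pi^2$ over probability densities on the interval.

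To minimize $F$, introduce a Lagrange multiplier $\beta_t$ for the constraint $\int\pi=1$. Pointwise stationarity gives $g_t(s)+2\rho_t\pi(s)-\beta_t=0$, so $\pi(s)=(\beta_t-g_t(s))/(2\rho_t)=h_t(s)/(2\rho_t)$, which is \eqref{OptimalStrategyRL}. Integrating the constraint $\int\pi=1$ and using the explicit primitive of $g_t$ yields exactly \eqref{Expression-beta}. The density must also be nonnegative. Since $g_t$ is strictly convex in $s$ (a sum of exponentials), $h_t$ is strictly concave, so $h_t>0$ on the whole interval follows from $h_t(\underline{\pi}_t)>0$ and $h_t(\overline{\pi}_t)>0$ in \eqref{AssumptionForPositive}. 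Hence no positivity constraint binds, and $\pi_t^*$ is admissible. The required $\mathcal{F}_{t-1}$-measurability of $\pi_t^*$ holds because all its parameters are $\mathcal{F}_{t-1}$-measurable.

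Optimality is then checked directly rather than through a variational argument. For any admissible $\pi=\pi^*+\eta$ with $\int\eta=0$,
\[
F(\pi)-F(\pi^*)=\int(g_t+2\rho_t\pi^*)\eta+\rho_t\int\eta^2=\beta_t\int\eta+\rho_t\int\eta^2\ge0.
\]
This gives the supermartingale inequality (iii) for every $\pi$ and equality (iv) at $\pi^*$.

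It remains to identify the value and check that it reproduces $U_{t-1}$. Using $g_t=\beta_t-h_t$ and $\pi^*=h_t/(2\rho_t)$,
\[
F(\pi^*)=\beta_t-\frac{1}{2\rho_t}\int h_t^2+\frac{1}{4\rho_t}\int h_t^2=\beta_t-\frac{1}{4\rho_t}\int h_t^2=\frac{1}{J_t}.
\]
So the optimal value is $-P_te^{-\gamma x}/J_t=-C_{t-1}e^{-\gamma x}=U_{t-1}(x)$, because $P_t=C_{t-1}J_t$. The remaining conditions follow quickly. The first condition in \eqref{AssumptionForPositive} gives $J_t>0$, hence $P_t>0$ and $C_t>0$, so $U_t$ is a genuine CARA utility. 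The process $\lambda_t$ is positive and $\mathcal{F}_{t-1}$-measurable. The base case is $C_0=1$. The wealth formula \eqref{OptimalWealthRL} is just the self-financing dynamics under $f^*$.

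I expect the main difficulty to be bookkeeping rather than analysis. Two points need care. First, $J_t$, and hence $C_t$, depends on $C_{t-1}$, so the ansatz must be made consistent: one has to confirm that the value $-P_te^{-\gamma x}/J_t$ equals $-C_{t-1}e^{-\gamma x}$ exactly. Second, the sign conventions for $\lambda_tH$ combined with the negative utility must be tracked correctly, so that the problem really reduces to minimizing $F$ and not maximizing it.
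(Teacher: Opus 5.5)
Your proposal is correct and follows essentially the same route as the paper: factor out $P_te^{-\gamma X_{t-1}}$, solve the one-period quadratic problem with a Lagrange multiplier to get $\pi_t^*=h_t/(2\rho_t)$ with optimal value $-1/J_t$ (the content of Lemma~\ref{LemmaSup}), and close the recursion via $P_t=C_{t-1}J_t$. Your identity $F(\pi^*+\eta)-F(\pi^*)=\rho_t\int\eta^2$ and your concavity argument for $h_t>0$ on the whole interval make precise two steps the paper only asserts, namely its ``usual verification technique'' and its appeal to \eqref{AssumptionForPositive}.
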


\begin{remark}
Although the expressions and notation introduced above may appear involved, each component has a transparent interpretation. From equations \eqref{Expression-h} and \eqref{OptimalStrategyRL}, we obtain \(\pi_t^*(s) = \frac{h_t(s)}{2\rho_t}\), showing that \(h_t(s)\) determines the shape and functional form of the optimal density, while \(\rho_t\) scales its magnitude. Let \(\pi_t^{un}\) denote the maximizer of the objective in \eqref{SupremumEqn} in the Appendix, without the constraint that the probability density function integrates to one, \eqref{Expression-beta} together with \eqref{OptimalStrategyRL} shows that \(\pi_t^{un}(s)=\pi_t^*(s)+\frac{\beta_t}{2\rho_t}\), thus $\frac{\beta_t}{2\rho_t}$ represents the constant difference between the unconstrained solution and the optimal density.
\end{remark}

Forward performance processes with non-exploratory controls in binomial models have been studied in \cite{musiela2016forward} for exponential risk preferences. An interesting finding of Theorem \ref{Thm:ExploForwardRL} is that the optimal strategy allocates a constant dollar amount to the risky asset, regardless of initial wealth, consistent with the forward exponential framework without exploration.

\begin{corollary}\label{Cor:MaxPi}
For the trading period \([t-1, t)\), \(t \in \mathbb{N}\), the density function $\pi_t^*(s), s\in [\underline{\pi}_t, \overline{\pi}_t]$ of the optimal strategy determined by \eqref{OptimalStrategyRL} is strictly increasing for \(s < \pi_t^0\), strictly decreasing for \(s > \pi_t^0\), and achieves its maximum value at \(s = \pi_t^0\), where 
    \begin{align}\label{BenchmarkStrategy}
        \pi_t^0=\frac{1}{\gamma (u_t-d_t)}\log\frac{p_t(1-q_t)}{(1-p_t)q_t},
    \end{align}
and 
\begin{align*}
    \pi_t^{*}(\pi_t^0)=\frac{-C_{t-1}\left(\frac{1-p_t}{1-q_t}\right)^{1-q_t}\left(\frac{p_t}{q_t}\right)^{q_t}+\beta_t}{2\rho_t}.
\end{align*}
Furthermore, \(\pi_t^*(s)\) becomes increasingly concave as \(\rho\) decreases, and \(\pi_t^{*}(\pi_t^0)\) increases with a decrease in \(\rho\).
\end{corollary}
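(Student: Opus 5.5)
The plan is to work directly from the explicit density \eqref{OptimalStrategyRL} of Theorem \ref{Thm:ExploForwardRL}. Write $\pi_t^*(s)=h_t(s)/(2\rho_t)$ with $h_t(s)=\beta_t-C_{t-1}g_t(s)$, where
\begin{align*}
g_t(s)=p_te^{-\gamma(u_t-1)s}+(1-p_t)e^{-\gamma(d_t-1)s}.
\end{align*}
Since $C_{t-1}>0$ and $\rho_t>0$, maximizing $\pi_t^*$ is equivalent to minimizing $g_t$. This function is a positive combination of exponentials. Its second derivative is
\begin{align*}
g_t''(s)=\gamma^2\left[p_t(u_t-1)^2e^{-\gamma(u_t-1)s}+(1-p_t)(1-d_t)^2e^{-\gamma(d_t-1)s}\right]>0,
\end{align*}
so $g_t$ is strictly convex. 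Hence $\pi_t^*$ is strictly concave, and it suffices to locate the unique critical point of $g_t$.

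To find it, set
\begin{align*}
g_t'(s)=-\gamma p_t(u_t-1)e^{-\gamma(u_t-1)s}+\gamma(1-p_t)(1-d_t)e^{\gamma(1-d_t)s}=0.
\end{align*}
This gives $e^{\gamma(u_t-d_t)s}=\frac{p_t(u_t-1)}{(1-p_t)(1-d_t)}$. Using $q_t=\frac{1-d_t}{u_t-d_t}$, we have $\frac{u_t-1}{1-d_t}=\frac{1-q_t}{q_t}$, so the critical point is exactly \eqref{BenchmarkStrategy}.

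I will also check that this point is the PFPP benchmark from \eqref{PFPPOptimalControl}. With $U_t=-C_te^{-\gamma x}$, the non-exploratory first-order condition is the same equation $g_t'(s)=0$. This is consistent with Proposition \ref{Prop:MaxControlDist}. Strict convexity then gives $g_t'<0$ for $s<\pi_t^0$ and $g_t'>0$ for $s>\pi_t^0$. Therefore $\pi_t^*$ is strictly increasing to the left of $\pi_t^0$ and strictly decreasing to the right, and $\pi_t^0\in(\underline\pi_t,\overline\pi_t)$ by the standing assumption.

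For the maximal value, I will substitute $\pi_t^0$ into $g_t$. Write $K=\frac{p_t(1-q_t)}{(1-p_t)q_t}$. Then
\begin{align*}
e^{-\gamma(u_t-1)\pi_t^0}=K^{-(1-q_t)},\qquad e^{\gamma(1-d_t)\pi_t^0}=K^{q_t},
\end{align*}
because $\frac{u_t-1}{u_t-d_t}=1-q_t$ and $\frac{1-d_t}{u_t-d_t}=q_t$. Simplifying,
\begin{align*}
g_t(\pi_t^0)=p_tK^{q_t-1}+(1-p_t)K^{q_t}=\left(\tfrac{1-p_t}{1-q_t}\right)^{1-q_t}\left(\tfrac{p_t}{q_t}\right)^{q_t},
\end{align*}
where the last equality uses $p_t+(1-p_t)K=p_t/q_t$. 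This is pure algebra and gives the stated expression for $\pi_t^*(\pi_t^0)$.

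The main obstacle is the final claims about monotonicity in $\rho$, because $\beta_t$ itself depends on $\rho_t$ through \eqref{Expression-beta}. I will write $\beta_t=\frac{2\rho_t-C_{t-1}G_t}{L_t}$, where $L_t=\overline\pi_t-\underline\pi_t$ and $G_t=\int_{\underline\pi_t}^{\overline\pi_t}g_t(s)\,ds$ (this is the bracketed term in \eqref{Expression-beta}). Then
\begin{align*}
\pi_t^*(s)=\frac{1}{L_t}+\frac{C_{t-1}}{2\rho_t}\left(\frac{G_t}{L_t}-g_t(s)\right).
\end{align*}
This shows that $\pi_t^*$ is the uniform density plus $\frac{1}{2\rho_t}$ times a fixed zero-mean, strictly concave function. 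Its curvature $-C_{t-1}g_t''/(2\rho_t)$ grows in magnitude as $\rho_t$ decreases, which is the "increasingly concave" statement. At $s=\pi_t^0$, the quantity $\frac{G_t}{L_t}-g_t(\pi_t^0)$ is positive, since the average of $g_t$ exceeds its strict minimum. Hence $\pi_t^*(\pi_t^0)$ increases as $\rho_t$ decreases, with the interval held fixed and subject to \eqref{AssumptionForPositive}.
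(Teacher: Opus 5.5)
Your proposal is correct and follows essentially the same route as the paper. You locate the unique critical point of $\pi_t^*$ from the first-order condition, use strict concavity to get the monotonicity and the maximum, and substitute $\pi_t^0$ to obtain the peak value. For the $\rho_t$-dependence you read off the $1/\rho_t$ scaling of the curvature, and you write $\pi_t^*(\pi_t^0)$ as $\frac{1}{L_t}$ plus $\frac{C_{t-1}}{2\rho_t}$ times the gap between the average of $g_t$ and its minimum; the paper handles that last step with the mean value theorem.

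One small slip: the bracket in \eqref{Expression-beta} equals $-G_t$, not $G_t$, so $\beta_t=(2\rho_t+C_{t-1}G_t)/L_t$. That corrected formula is in fact the one your subsequent decomposition of $\pi_t^*$ uses, so the conclusion is unaffected.
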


Notice that \(\pi_t^0\), as computed in \eqref{BenchmarkStrategy}, corresponds to the original optimal strategy, which is obtained when \(\rho_t = 0\), without incorporating any uncertainty. Intuitively, in investment practice, the optimal control distribution assigns the highest weight to the optimal strategy recognized in the classical forward framework, as exploitation. The willingness to attempt nearby strategies diminishes as they deviate further from \(\pi_t^0\). 
On the other hand, the increasing concavity of \(\pi_t^*\) as \(\rho_t\) decreases implies that, with a lower value of \(\rho_t\), \(\pi_t^*\) becomes steeper in both its increase and decrease.
As a result, when $\rho$ is lower, 
our model demonstrates a greater tendency to exploit the strategy near the known optimal solution and a reduced willingness to experiment with choices that are substantially different. Such behavior reflects a more cautious approach to portfolio selection as \(\rho_t\) decreases.

\section{The application in robo-advising}\label{sec:InteractiveLearning}

In this section, we study how PreFER processes under CARA preferences can be used to learn a client's risk preference from feedback on recommended investment prospects. The client evaluates each recommendation before its realized investment performance is observed, based on the prospect communicated by the robo-advisor. We model the intensity of this evaluation through a continuous score, while allowing the observed feedback to be subject to error.

The acceptance-rejection method, often referred to as rejection sampling, is a classical Monte Carlo technique for universal sampling. It was first introduced by \cite{von1951variate} and has since become a cornerstone in computational statistics.
It enables generating samples from any desired distribution by drawing a simulation from a simpler proxy distribution. In the acceptance-rejection algorithm for generating continuous random variable $X$ with distribution as $P_X$, a candidate \(Y\) is drawn from the proxy distribution \(P_Y\) along with an independent uniform random variable \(U\) over the unit interval. If \(U\leq \frac{P_X(Y)}{c \, P_Y(Y)}\) for a constant $c\geq\sup_s\{P_X(s)/P_Y(s)\}$, \(Y\) is accepted as a sample drawn from \(X\); otherwise, the candidate is discarded and the procedure is repeated. 
The number of iterations required to generate an acceptable \(X\) follows a geometric distribution, with ``success probability" $\P(U \leq \frac{P_X(Y)}{c\,P_Y(Y)})$. Building on this property, we interpret the acceptance probability as a latent satisfaction score associated with the recommended investment prospect. The score measures how well the prospect aligns with the client's preference under the communicated market outlook, before the realized investment outcome is observed.

Our inverse learning problem is more structured than unrestricted IRL. Under the CARA specification considered in this section, the unknown reward specification is characterized by the scalar risk-aversion parameter $\gamma$. For fixed market parameters and a given recommendation, the acceptance probability, together with whether the client regards the recommendation as too conservative or too aggressive, provides an invertible mapping to the client's risk aversion. This parametric structure therefore enables explicit identification of the preference parameter.

Even when the client's risk preference remains unchanged, the acceptance probability associated with a given recommendation may vary with market conditions. To accommodate such changes, we construct an acceptance interval based on the collected feedback and a threshold score $\alpha$. Any recommendation within this interval is guaranteed to achieve a score of at least $\alpha$.

\subsection{Inferring investor's risk aversion through scoring}
A central challenge in robo-advising is to infer the client's current risk preference, which may evolve over time and cannot generally be elicited through a single precise self-reported parameter. Accurate inference of risk tolerance is therefore important for generating personalized investment recommendations.

Throughout the preference-learning procedure within a fixed trading period, the market parameters communicated by the robo-advisor are held fixed, and only the client's risk-aversion parameter is inferred from feedback.
In this paper, we design a robo-advising framework that utilizes the one-to-one correspondence between the benchmark strategy \(\pi_t^0\) and the underlying risk aversion parameter \(\gamma_t\). For any period $[t-1,t)$ and a prescribed range of possible risk aversion values \([\underline{\gamma_t}, \overline{\gamma_t}]\), the corresponding interval of candidate strategies for exploration is then \([\underline{\pi}_t, \overline{\pi}_t]\) with $\underline{\pi}_t=\frac{1}{\overline{\gamma_t} (u_t - d_t)} \log \frac{p_t(1-q_t)}{(1-p_t)q_t}$ and $\overline{\pi}_t=\frac{1}{\underline{\gamma_t} (u_t - d_t)} \log \frac{p_t(1-q_t)}{(1-p_t)q_t}$, which readily follows from standard optimization arguments.

Assume that the pair of $\underline{\gamma_t}$ and $\overline{\gamma_t}$ is chosen such that 
\begin{align}\label{Assum:Lemma1}
    \frac{1- (1-q_t)\frac{\underline{\gamma_t}}{\overline{\gamma_t}}\log \frac{p_t(1-q_t)}{(1-p_t)q_t}}{1+ q_t\frac{\underline{\gamma}_t}{\overline{\gamma_t}}\log \frac{p_t(1-q_t)}{(1-p_t)q_t}}\left(\frac{p_t(1-q_t)}{(1-p_t)q_t}\right)^{(1-\frac{\underline{\gamma}_t}{\overline{\gamma_t}})}<1.
\end{align}
We remark that a larger exploration parameter $\rho$ corresponds to greater willingness to explore, or greater indecisiveness among candidate recommendations, and therefore leads to a wider exploration range.
At each interaction, the robo-advisor proposes the maximizer of its estimated distributional control, that is, the strategy that would be accepted with probability one under the hypothetical condition that the estimate is correct. 
The score assigned by the client to the corresponding investment prospect is then incorporated to update the robo-advisor's inference of the client's underlying risk aversion in subsequent iterations.

Consider any period $[t-1,t)$, let \(\gamma^a\) denote the client's actual risk aversion, and \(\gamma^{(n)}\) the robo-advisor's $n$-th estimate. Suppose that the utility function of the client at time $t-1$ is given by $U_{t-1}(x)=-C_{t-1}e^{-\gamma^a x}$.
The probability density functions of the optimal controls associated with these parameters are denoted by \(\pi_t^{*,a}(s)\) and \(\pi_t^{*,(n)}(s)\), respectively. 
The strategy corresponding to the client's true risk aversion and that generated under the robo-advisor's $n$-th estimate \(\gamma^{(n)}\) are respectively given by 
\[ 
S_{\gamma^{a}} = \frac{1}{\gamma^{a} (u_t - d_t)} \log \frac{p_t(1-q_t)}{(1-p_t)q_t}, \qquad
S_{\gamma^{(n)}} = \frac{1}{\gamma^{(n)} (u_t - d_t)} \log \frac{p_t(1-q_t)}{(1-p_t)q_t}.
\]  
Building on the acceptance-rejection method, we define the acceptance probability as a satisfaction score associated with the investment prospect generated by the proposed risk-aversion estimate. This quantity reflects the client's evaluation of the recommended prospect relative to her actual preferences.
\begin{definition}\label{Def:AcceptanceProbability}
    Consider any trading period $[t-1,t)$, $t\in\mathbb N$. The acceptance probability associated with the investment recommendation generated from the estimate $\gamma^{(n)}$ is defined by $\frac{\pi_t^{*,a}(S_{\gamma^{(n)}})}{c_t^{(n)} \pi_t^{*,(n)}(S_{\gamma^{(n)}})},$ where $c_t^{(n)} = \sup_{s \in [\underline{\pi}_t, \overline{\pi}_t]} \frac{\pi_t^{*,a}(s)}{\pi_t^{*,(n)}(s)}.$
\end{definition}
\begin{remark}
The continuous score in Definition \ref{Def:AcceptanceProbability} is a theoretical representation of the intensity of client feedback. It does not require the client to know or report her risk-aversion parameter, nor to communicate an exact acceptance probability. In practice, the observed feedback may be coarser or noisy; Section~\ref{Sec:InteractionProcedures} models the resulting observation error in the preference inferred from such feedback.
\end{remark}

In the definition, we choose $c_t^{(n)}$ as the minimal constant that introduces the least overhead while ensuring a full score when the inferred risk aversion coincides with the client's true risk aversion.
To investigate the property of acceptance probability,
we present the following auxiliary result for
the monotonicity of $(\pi_t^*)'(s)$, the first derivative in $s$, in risk aversion $\gamma$. 
\begin{lemma}\label{lemma:MonoPiPrime}
    For any fixed $s\in[\underline{\pi}_t, \overline{\pi}_t]$,  $(\pi_t^*)'(s)$ decreases strictly as $\gamma$ increases. Moreover, a lower value of $\gamma$ results in $\pi_t^*(\underline{\pi}_t)$ being smaller, while $\pi_t^*(\overline{\pi}_t)$ becomes larger. The equation $\pi_t^{*,a}(s)=\pi_t^{*,(n)}(s)$ admits a unique solution on the interval $[\underline{\pi}_t, \overline{\pi}_t]$, given $\gamma^{*,a}\neq \gamma^{*,(n)}$.
\end{lemma}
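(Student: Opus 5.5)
The plan is to work directly with the explicit density from Theorem \ref{Thm:ExploForwardRL}, $\pi_t^*(s)=h_t(s)/(2\rho_t)$. The constant $\beta_t$ is independent of $s$, so it drops out on differentiation. Writing $a=p_t(u_t-1)$ and $b=(1-p_t)(1-d_t)$, we get
\begin{align*}
(\pi_t^*)'(s)=\frac{C_{t-1}}{2\rho_t}\,\gamma\left(a e^{-\gamma(u_t-1)s}-b e^{\gamma(1-d_t)s}\right)=:\frac{C_{t-1}}{2\rho_t}\,g(\gamma,s).
\end{align*}
Here $C_{t-1}$ and $\rho_t$ are treated as fixed, since only $\gamma$ varies in the comparison. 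The first claim then reduces to showing $\partial_\gamma g<0$. A direct computation gives
\begin{align*}
\partial_\gamma g = a e^{-\gamma(u_t-1)s}\bigl(1-\gamma(u_t-1)s\bigr)-b e^{\gamma(1-d_t)s}\bigl(1+\gamma(1-d_t)s\bigr).
\end{align*}
We normalize using $a/b=K:=\frac{p_t(1-q_t)}{(1-p_t)q_t}$, $u_t-1=(u_t-d_t)(1-q_t)$ and $1-d_t=(u_t-d_t)q_t$, and set $y=\gamma(u_t-d_t)s$. Then $\partial_\gamma g<0$ is equivalent to
\begin{align*}
\varphi(y):=\frac{K e^{-y}\bigl(1-(1-q_t)y\bigr)}{1+q_t y}<1 .
\end{align*}

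The main step, and the one I expect to be the real obstacle, is to verify $\varphi(y)<1$ uniformly over the admissible range of $y$. I would proceed in three moves.

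\begin{enumerate}
\item Locate the range of $y$. Since $\gamma\in[\underline{\gamma_t},\overline{\gamma_t}]$ and $s\in[\underline{\pi}_t,\overline{\pi}_t]$ with $\underline{\pi}_t=\log K/(\overline{\gamma_t}(u_t-d_t))>0$, we have $y\ge y_0:=\frac{\underline{\gamma_t}}{\overline{\gamma_t}}\log K>0$. Positivity uses $\log K>0$, i.e.\ the positive excess return assumption.
\item Show $\varphi$ is decreasing on $y\ge 0$ wherever $1-(1-q_t)y>0$. On that region the factors $e^{-y}$, $1-(1-q_t)y$ and $1/(1+q_ty)$ are all positive and decreasing. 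Where $1-(1-q_t)y\le 0$, we have $\varphi\le 0<1$ trivially.
\item Conclude that $\sup_{y\ge y_0}\varphi(y)\le\varphi(y_0)$. The value $\varphi(y_0)$ is exactly the left-hand side of \eqref{Assum:Lemma1}, which the standing assumption makes $<1$.
\end{enumerate}

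This gives $\partial_\gamma (\pi_t^*)'(s)<0$ for every $s$, so $(\pi_t^*)'(s)$ is strictly decreasing in $\gamma$. A point to double-check is whether the comparison must hold for all intermediate $\gamma$ between two values, rather than only at the endpoints. Since $y_0$ is the global lower bound over the whole box, integrating $\partial_\gamma g$ in $\gamma$ is legitimate.

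For the endpoint and uniqueness claims, take $\gamma_1<\gamma_2$ and set $D(s)=\pi_t^{*}(s;\gamma_1)-\pi_t^{*}(s;\gamma_2)$.
\begin{itemize}
\item By the first part, $D'(s)>0$ on $[\underline{\pi}_t,\overline{\pi}_t]$, so $D$ is strictly increasing.
\item Both functions are probability densities on the same interval, fixed by the normalization built into $\beta_t$, so $\int_{\underline{\pi}_t}^{\overline{\pi}_t}D(s)\,ds=0$.
\item A strictly increasing continuous function with zero integral must satisfy $D(\underline{\pi}_t)<0<D(\overline{\pi}_t)$. This is precisely the statement that the lower-$\gamma$ density is smaller at $\underline{\pi}_t$ and larger at $\overline{\pi}_t$.
\item The intermediate value theorem gives a zero of $D$, and strict monotonicity makes it unique.
\end{itemize}
Applying this with $\{\gamma_1,\gamma_2\}=\{\gamma^a,\gamma^{(n)}\}$ shows that $\pi_t^{*,a}(s)=\pi_t^{*,(n)}(s)$ has exactly one solution on $[\underline{\pi}_t,\overline{\pi}_t]$ whenever the two risk aversions differ.
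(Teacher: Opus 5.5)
Your proposal is correct and is essentially the paper's proof. It uses the same $\gamma$-derivative of $(\pi_t^*)'(s)$ and the same reduction to a ratio inequality: your $\varphi(y)$ is exactly $e^{\Phi(s,\gamma)}$, with the paper's separate monotonicity in $s$ and in $\gamma$ collapsed into the single variable $y=\gamma(u_t-d_t)s$. It also uses the same evaluation at the corner $(\underline{\pi}_t,\underline{\gamma}_t)$ yielding Assumption \eqref{Assum:Lemma1}, and the same monotone-difference/zero-integral argument for the endpoint and single-crossing claims, which you state more cleanly than the paper. One cosmetic slip: $\sup_{y\ge y_0}\varphi(y)\le\varphi(y_0)$ need not hold when $\varphi(y_0)<0$, but your step 2 already gives $\varphi\le 0<1$ on that region, so the conclusion is unaffected.
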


We know from Lemma \ref{lemma:MonoPiPrime} that a smaller $\gamma$ reduces the density of optimal control at $\underline{\pi}$ while increasing it at $\overline{\pi}$, thereby lowering the weight on high–risk‐aversion strategies and favoring low–risk‐aversion ones. The optimal densities under any two unequal risk aversion parameters intersect exactly once.
Consider $\beta_t$ given by \eqref{Expression-beta} as a function of risk aversion $\gamma$, denoted by $\beta_t(\gamma)$, we define $\beta_t(\gamma^{a})=\beta_t^{a}$ and $\beta_t(\gamma^{(n)})=\beta_t^{(n)}$. 
The following proposition establishes that the supremum \(c_t^{(n)}\) is attained at one of the boundary points, $\underline{\pi}_t$ or $\overline{\pi}_t$, depending on whether the recommendation generated from $\gamma^{(n)}$ made by the robo-advisor is less or greater than the actual value.
\begin{proposition}\label{Prop: AcceptanceProb.c}
    If $\gamma^a<\gamma^{(n)}$, then the supremum of $\frac{\pi_t^{*,a}(s)}{\pi_t^{*,(n)}(s)}$ is attained at $\overline{\pi}_t$, and $$c_t^{(n)}=\frac{-C_{t-1}p_te^{-\gamma^a(u_t-1)\overline{\pi}_t}-C_{t-1}(1-p_t)e^{-\gamma^a(d_t-1)\overline{\pi}_t}+\beta_t^a}{-C_{t-1}p_te^{-\gamma^{(n)}(u_t-1)\overline{\pi}_t}-C_{t-1}(1-p_t)e^{-\gamma^{(n)}(d_t-1)\overline{\pi}_t}+\beta_t^{(n)}}.$$ 
    Otherwise, the supremum of $\frac{\pi_t^{*,a}(s)}{\pi_t^{*,(n)}(s)}$ is attained at the lower bound $\underline{\pi}_t$, and $$c_t^{(n)}=\frac{-C_{t-1}p_te^{-\gamma^a(u_t-1)\underline{\pi}_t}-C_{t-1}(1-p_t)e^{-\gamma^a(d_t-1)\underline{\pi}_t}+\beta_t^a}{-C_{t-1}p_te^{-\gamma^{(n)}(u_t-1)\underline{\pi}_t}-C_{t-1}(1-p_t)e^{-\gamma^{(n)}(d_t-1)\underline{\pi}_t}+\beta_t^{(n)}}.$$ 
\end{proposition}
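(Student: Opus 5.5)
Write $\pi^a:=\pi_t^{*,a}$ and $\pi^n:=\pi_t^{*,(n)}$. The plan is to show that the ratio $r(s)=\pi^a(s)/\pi^n(s)$ is monotone on $[\underline{\pi}_t,\overline{\pi}_t]$, increasing when $\gamma^a<\gamma^{(n)}$ and decreasing when $\gamma^a>\gamma^{(n)}$. The supremum is then attained at the corresponding endpoint, and the stated formula for $c_t^{(n)}$ follows by inserting \eqref{OptimalStrategyRL} at that endpoint. The common factor $2\rho_t$ cancels in the ratio. Both densities are strictly positive on the whole interval: they are concave-shaped by Corollary \ref{Cor:MaxPi} and positive at both endpoints by \eqref{AssumptionForPositive}. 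Hence $r$ is well defined and continuous, and
\[
r'(s)=\frac{(\pi^a)'(s)\pi^n(s)-\pi^a(s)(\pi^n)'(s)}{(\pi^n(s))^2}.
\]

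First I would pin down the relative position of the two densities using Lemma \ref{lemma:MonoPiPrime}. Take $\gamma^a<\gamma^{(n)}$. The lemma gives $\pi^a(\underline{\pi}_t)<\pi^n(\underline{\pi}_t)$ and $\pi^a(\overline{\pi}_t)>\pi^n(\overline{\pi}_t)$. It also says the two curves cross at exactly one point $s_0$. So $\pi^a<\pi^n$ on $[\underline{\pi}_t,s_0)$ and $\pi^a>\pi^n$ on $(s_0,\overline{\pi}_t]$; in particular $r(\overline{\pi}_t)>1$ is the candidate maximum. The same lemma gives the pointwise ordering of derivatives: $(\pi^a)'(s)>(\pi^n)'(s)$ for every $s$, because $(\pi_t^*)'$ decreases strictly in $\gamma$.

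The key step is the sign of $r'$, which means showing $N(s):=(\pi^a)'\pi^n-\pi^a(\pi^n)'>0$. On $(s_0,\overline{\pi}_t]$ the ordering of values alone does not settle this. So I would work with the explicit form $(\pi^*)'(s)=\frac{C_{t-1}\gamma}{2\rho_t}\big(p_t(u_t-1)e^{-\gamma(u_t-1)s}+(1-p_t)(d_t-1)e^{-\gamma(d_t-1)s}\big)$. Note this is independent of $\beta_t$ and vanishes at $\pi_t^0(\gamma)$.

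I would split the interval at the two modes $\pi_t^0(\gamma^{(n)})<\pi_t^0(\gamma^a)$. On the region where $(\pi^a)'\ge0\ge(\pi^n)'$, positivity of $N$ is immediate, with at least one strict inequality. On the outer regions, where both derivatives have the same sign, I would compare the logarithmic derivatives $(\log\pi^a)'$ and $(\log\pi^n)'$ directly.

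A cleaner route I would try first is a single-crossing argument. It suffices that $\pi^a-k\pi^n$ has at most one zero, with sign change from negative to positive, for every $k>0$. Under that property $r(s)=k$ has at most one solution, and $r$ is monotone. Single crossing for every $k$ should follow from the same argument used in Lemma \ref{lemma:MonoPiPrime} for $k=1$, since scaling $\pi^n$ just multiplies $C_{t-1}$ and $\beta^{(n)}_t$ by $k$. The derivative comparison must then be redone for $k\neq1$, and I would use condition \eqref{Assum:Lemma1} exactly there.

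This step, establishing monotonicity of the ratio rather than just the single intersection, is the main obstacle. I expect \eqref{Assum:Lemma1} to be what guarantees it. The case $\gamma^a>\gamma^{(n)}$ is symmetric with the roles of the endpoints swapped: $r$ is decreasing, and the supremum is at $\underline{\pi}_t$. If $\gamma^a=\gamma^{(n)}$, then $r\equiv1$ and either formula gives $c_t^{(n)}=1$.
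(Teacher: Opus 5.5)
Your setup is sound: positivity of both densities, the formula for $r'$, the consequences of Lemma~\ref{lemma:MonoPiPrime}, and the middle region $[S_{\gamma^{(n)}},S_{\gamma^a}]$ between the two modes are all handled correctly. But the proof stops where the real work begins. On the outer regions you only say you would ``compare the logarithmic derivatives directly'', and you flag this yourself as the main obstacle.

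The single-crossing fallback does not close the gap. For $k=1$, Lemma~\ref{lemma:MonoPiPrime} relies on $(\pi^a-\pi^n)'>0$ on the whole interval, together with both densities having unit mass. For $k\neq1$, the derivative $(\pi^a)'-k(\pi^n)'$ has no fixed sign. At any point left of $S_{\gamma^{(n)}}$ it becomes negative once $k$ is large, while right of $S_{\gamma^a}$ it stays positive for every $k\ge1$. Locating its sign change means comparing $(\pi^a)'/(\pi^n)'$ with $k$. Condition \eqref{Assum:Lemma1} says nothing about that; it only secures the sign of $\partial_\gamma(\pi_t^*)'$, which is the ordering $(\pi^a)'>(\pi^n)'$ you already have.

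You are also aiming at more than is needed. On $(S_{\gamma^a},s_0)$, when nonempty, both densities decrease with $\pi^a<\pi^n$ but $|(\pi^n)'|>|(\pi^a)'|$. The available orderings push $N$ in opposite directions there, and nothing you have assembled fixes its sign.

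The paper avoids that region and supplies the missing estimate. Since $r<1$ on $[\underline{\pi}_t,s_0)$ and $r(\overline{\pi}_t)>1$, it suffices that $r$ increases on $[s_0,\overline{\pi}_t]$, where $\pi^a\ge\pi^n$. Right of $S_{\gamma^a}$ this is immediate from $\pi^a\ge\pi^n>0$ and $|(\pi^n)'|>|(\pi^a)'|$; between the modes it is your easy case.

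The only hard piece is $[s_0,S_{\gamma^{(n)}})$ when $s_0<S_{\gamma^{(n)}}$. There both densities increase and $\pi^a>\pi^n$, so orderings alone are inconclusive. The paper writes $r=(P+F)/(P+G)$ with $P=\pi^a(s_0)=\pi^n(s_0)$, $F(s)=\int_{s_0}^s(\pi^a)'$ and $G(s)=\int_{s_0}^s(\pi^n)'$. The numerator of $r'$ is then $P(F'-G')+G^2(F/G)'$, and $(F/G)'>0$ follows from the monotone L'H\^opital rule once $\varphi=(\pi^a)'/(\pi^n)'$ is increasing.

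That last fact uses the explicit exponential form. Write $A_\gamma,B_\gamma,C_\gamma$ for $p_t(u_t-1)^je^{-\gamma(u_t-1)s}+(1-p_t)(d_t-1)^je^{-\gamma(d_t-1)s}$ with $j=1,2,3$. Then $(\log\varphi)'=\gamma^{(n)}B_{\gamma^{(n)}}/A_{\gamma^{(n)}}-\gamma^aB_{\gamma^a}/A_{\gamma^a}$, and $\partial_\gamma(B_\gamma/A_\gamma)=s(B_\gamma^2-A_\gamma C_\gamma)/A_\gamma^2>0$ because $B_\gamma^2-A_\gamma C_\gamma=-p_t(1-p_t)e^{-\gamma(u_t+d_t-2)s}(u_t-1)(d_t-1)(u_t-d_t)^2>0$. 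Since $A_\gamma>0$ on this region and $\gamma^{(n)}>\gamma^a$, this gives $(\log\varphi)'>0$. This computation is the ingredient your proposal lacks.
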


Moreover, since 
\begin{align*}
    \frac{\pi_t^{*,a}(S_{\gamma^{(n)}})}{\pi_t^{*,(n)}(S_{\gamma^{(n)}})}=\frac{-C_{t-1}p_t\left(\frac{p_t(1-q_t)}{(1-p_t)q_t}\right)^{-\frac{\gamma^{a}}{\gamma^{(n)}}(1-q_t)}-C_{t-1}(1-p_t)
    \left(\frac{p_t(1-q_t)}{(1-p_t)q_t}\right)^{\frac{\gamma^{a}}{\gamma^{(n)}}q_t}+\beta_t^{a}}{-C_{t-1}\left(\frac{1-p_t}{1-q_t}\right)^{1-q_t}\left(\frac{p_t}{q_t}\right)^{q_t}+\beta_t^{(n)}},
\end{align*}
for instance when $\gamma^{a}<\gamma^{(n)}$, 
it follows from Proposition \ref{Prop: AcceptanceProb.c} that the acceptance probability for the recommended strategy $S_{\gamma^{(n)}}$ can be expressed by $\frac{\pi^{*,a}(S_{\gamma^{(n)}})}{c_t^{(n)}\pi^{*,(n)}(S_{\gamma^{(n)}})}=\frac{g_1(\gamma^{(n)},\gamma^{a})}{g_2(\gamma^{a})}h(\gamma^{(n)})$, where
\begin{align*}
    g_1(\gamma^{(n)},\gamma^{a})=-C_{t-1}p_t\left(\frac{p_t(1-q_t)}{(1-p_t)q_t}\right)^{-\frac{\gamma^{a}}{\gamma^{(n)}}(1-q_t)}-C_{t-1}(1-p_t)\left(\frac{p_t(1-q_t)}{(1-p_t)q_t}\right)^{\frac{\gamma^{a}}{\gamma^{(n)}}q_t}+\beta_t^{a},
\end{align*}
\begin{align*}
    g_2(\gamma^{a})=-C_{t-1}p_te^{-\gamma^a(u_t-1)\overline{\pi}_t}-C_{t-1}(1-p_t)e^{-\gamma^a(d_t-1)\overline{\pi}_t}+\beta_t^a,
\end{align*}
and 
\begin{align*}
    h(\gamma^{(n)})=\frac{-C_{t-1}p_te^{-\gamma^{(n)}(u_t-1)\overline{\pi}_t}-C_{t-1}(1-p_t)e^{-\gamma^{(n)}(d_t-1)\overline{\pi}_t}+\beta_t^{(n)}}{-C_{t-1}\left(\frac{1-p_t}{1-q_t}\right)^{1-q_t}\left(\frac{p_t}{q_t}\right)^{q_t}+\beta_t^{(n)}}.
\end{align*}
An interesting result is that the acceptance probability is strictly decreasing as the guess $\gamma^{(n)}$ deviates further from $\gamma^{a}$, or as $\gamma^{a}$ deviates further from the guess $\gamma^{(n)}$.
We establish this property in the following proposition. 
\begin{proposition}\label{Prop:MonotonicityAcceptProb}
    For any fixed $\gamma^{a}\in[\underline{\gamma_t},\overline{\gamma_t}]$, $\frac{\pi^{*,a}(S_{\gamma^{(n)}})}{c_t^{(n)}\pi^{*,(n)}(S_{\gamma^{(n)}})}$ is increasing in $\gamma^{(n)}\in[\underline{\gamma_t},\gamma^{a})$ while decreasing in $\gamma^{(n)}\in[\gamma^{a},\overline{\gamma_t})$, and attains its maximum at $\gamma^{(n)}=\gamma^{a}$. For any fixed $\gamma^{(n)}\in[\underline{\gamma_t},\overline{\gamma_t}]$, $\frac{\pi^{*,a}(S_{\gamma^{(n)}})}{c_t^{(n)}\pi^{*,(n)}(S_{\gamma^{(n)}})}$ is increasing on $\gamma^{a}\in[\underline{\gamma_t},\gamma^{(n)})$ while decreasing on $\gamma^{a}\in[\gamma^{(n)},\overline{\gamma_t})$, and attains its maximum at $\gamma^{a}=\gamma^{(n)}$.
\end{proposition}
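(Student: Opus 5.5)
We would prove the statement by working directly with the factorised representation of the score derived just before it. For $\gamma^a<\gamma^{(n)}$ the score is $A(\gamma^{(n)},\gamma^a)=\frac{g_1(\gamma^{(n)},\gamma^a)}{g_2(\gamma^a)}h(\gamma^{(n)})$. For $\gamma^a\ge\gamma^{(n)}$ the analogous representation holds with $\overline{\pi}_t$ replaced by $\underline{\pi}_t$ in $g_2$ and $h$, by Proposition \ref{Prop: AcceptanceProb.c}. The maximum statement is immediate. When $\gamma^{(n)}=\gamma^a$ the two densities coincide, so $c_t^{(n)}=1$ and the score equals $1$. By construction of $c_t^{(n)}$ the score is always at most $1$. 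It therefore remains to prove strict monotonicity on each side of the diagonal, and we treat the two cases $\gamma^a<\gamma^{(n)}$ and $\gamma^a>\gamma^{(n)}$ separately.

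\emph{Monotonicity in $\gamma^a$ for fixed $\gamma^{(n)}$.} Here $h(\gamma^{(n)})$ is a fixed positive constant, and the score is proportional to $\pi_t^{*,a}(S_{\gamma^{(n)}})/\pi_t^{*,a}(\overline{\pi}_t)$ when $\gamma^a<\gamma^{(n)}$. We would show that this ratio increases as $\gamma^a$ increases toward $\gamma^{(n)}$, using the following steps.
\begin{enumerate}
\item Write the ratio as $1-\int_{S_{\gamma^{(n)}}}^{\overline{\pi}_t}(\pi_t^{*,a})'(s)\,ds\big/\pi_t^{*,a}(\overline{\pi}_t)$.
\item By Lemma \ref{lemma:MonoPiPrime}, $(\pi_t^{*,a})'$ decreases strictly in $\gamma^a$.
\item On $[S_{\gamma^{(n)}},\overline{\pi}_t]$ we have $s\ge S_{\gamma^{(n)}}>S_{\gamma^a}=\pi_t^{0,a}$, so by Corollary \ref{Cor:MaxPi} the derivative is negative there. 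Hence increasing $\gamma^a$ makes the integrand more negative.
\item By Lemma \ref{lemma:MonoPiPrime}, $\pi_t^{*,a}(\overline{\pi}_t)$ decreases as $\gamma^a$ increases.
\end{enumerate}
Since the integral is negative, steps 3 and 4 act in the same direction: the quantity subtracted from $1$ becomes more negative and its denominator shrinks, so the ratio increases. Both effects push the score up as $\gamma^a\uparrow\gamma^{(n)}$. The case $\gamma^a>\gamma^{(n)}$ is symmetric, using the left endpoint $\underline{\pi}_t$ and the increasing branch of $\pi_t^{*,a}$ below its mode.

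\emph{Monotonicity in $\gamma^{(n)}$ for fixed $\gamma^a$.} The cleanest route avoids differentiating $\beta_t$ directly. We rewrite the score as
$$\frac{\pi_t^{*,a}(S_{\gamma^{(n)}})}{\pi_t^{*,a}(\overline{\pi}_t)}\cdot\frac{\pi_t^{*,(n)}(\overline{\pi}_t)}{\pi_t^{*,(n)}(S_{\gamma^{(n)}})},$$
and handle the two factors separately.
\begin{itemize}
\item \emph{First factor.} $S_{\gamma^{(n)}}$ decreases toward $S_{\gamma^a}$ as $\gamma^{(n)}\downarrow\gamma^a$. On this range $\pi_t^{*,a}$ is decreasing (it lies beyond its mode), so the first factor increases as $\gamma^{(n)}$ approaches $\gamma^a$.
\item \emph{Second factor.} This is the ratio of $\pi_t^{*,(n)}$ at the endpoint to its peak value. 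Using the explicit peak value from Corollary \ref{Cor:MaxPi} and the monotonicity of the endpoint value from Lemma \ref{lemma:MonoPiPrime}, we would show that it also moves favourably.
\end{itemize}

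The main obstacle is this second factor. As $\gamma^{(n)}$ decreases, both the peak value of $\pi_t^{*,(n)}$ and its value at $\overline{\pi}_t$ change, and their ratio $h(\gamma^{(n)})$ need not be monotone in an obvious way. We expect assumption \eqref{Assum:Lemma1} to be exactly what controls $h$. Our first attempt would be to differentiate $\log h$ in $\gamma^{(n)}$ and rewrite it in terms of the normalised variable $\gamma^{(n)}\overline{\pi}_t$. The aim is for the sign condition to reduce to the inequality in \eqref{Assum:Lemma1}, whose powers of $\frac{p_t(1-q_t)}{(1-p_t)q_t}$ with exponent $1-\underline{\gamma}_t/\overline{\gamma}_t$ match evaluating the exponentials at the endpoints relative to the mode. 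If the two factors do not combine monotonically, the fallback is to differentiate the full product $g_1h$ in $\gamma^{(n)}$ and bound the cross terms using that same assumption.
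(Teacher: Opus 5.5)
Your proposal has two genuine gaps. The first is the step you flag yourself. The monotonicity of $h(\gamma^{(n)})$ is never established, and the mechanism you guess is not the operative one. Assumption \eqref{Assum:Lemma1} plays no role here; it enters only through Lemma \ref{lemma:MonoPiPrime}. You also cannot avoid differentiating $\beta_t^{(n)}$, since it appears in both the numerator and the denominator of $h$. What makes the derivative tractable is that the peak term does not depend on $\gamma^{(n)}$. Put $G(s)=p_te^{-\gamma^{(n)}(u_t-1)s}+(1-p_t)e^{-\gamma^{(n)}(d_t-1)s}$, $f(s)=(u_t-1)p_te^{-\gamma^{(n)}(u_t-1)s}+(d_t-1)(1-p_t)e^{-\gamma^{(n)}(d_t-1)s}$ and $m=\bigl(\frac{1-p_t}{1-q_t}\bigr)^{1-q_t}\bigl(\frac{p_t}{q_t}\bigr)^{q_t}=\min_s G(s)$. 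Then $h=\bigl(\beta_t^{(n)}-C_{t-1}G(\overline{\pi}_t)\bigr)/\bigl(\beta_t^{(n)}-C_{t-1}m\bigr)$, $\partial_{\gamma^{(n)}}G(s)=-sf(s)$, and $\partial_{\gamma^{(n)}}\beta_t^{(n)}=-\frac{C_{t-1}}{\overline{\pi}_t-\underline{\pi}_t}\int_{\underline{\pi}_t}^{\overline{\pi}_t}sf(s)\,ds$. The numerator of $h'$ is
\[
N=C_{t-1}\overline{\pi}_tf(\overline{\pi}_t)\bigl(\beta_t^{(n)}-C_{t-1}m\bigr)+\partial_{\gamma^{(n)}}\beta_t^{(n)}\,C_{t-1}\bigl(G(\overline{\pi}_t)-m\bigr).
\]
Since $f$ is strictly decreasing, $f(\overline{\pi}_t)<0$ and $0<\underline{\pi}_t\le s<\overline{\pi}_t$, you have $sf(s)\ge sf(\overline{\pi}_t)>\overline{\pi}_tf(\overline{\pi}_t)$. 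Hence $\partial_{\gamma^{(n)}}\beta_t^{(n)}<-C_{t-1}\overline{\pi}_tf(\overline{\pi}_t)$. Substituting this and using $G(\overline{\pi}_t)>m$ gives $N<C_{t-1}\overline{\pi}_tf(\overline{\pi}_t)\bigl(\beta_t^{(n)}-C_{t-1}G(\overline{\pi}_t)\bigr)=2\rho_tC_{t-1}\overline{\pi}_tf(\overline{\pi}_t)\,\pi_t^{*,(n)}(\overline{\pi}_t)<0$. This is the paper's argument, phrased there through the mean value theorem, and it needs only positivity of the density.

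Second, you have the modes in the wrong order. $S_\gamma$ is decreasing in $\gamma$, so $\gamma^a<\gamma^{(n)}$ gives $S_{\gamma^{(n)}}<S_{\gamma^a}$. In your first-factor argument the two reversals (``decreases toward'' and ``beyond its mode'') cancel. The correct reason is that $S_{\gamma^{(n)}}$ increases to $S_{\gamma^a}$ from below, where $\pi_t^{*,a}$ is increasing. However, step 3 of your $\gamma^a$-argument is false. The interval $[S_{\gamma^{(n)}},\overline{\pi}_t]$ contains the mode $S_{\gamma^a}$, so $(\pi_t^{*,a})'$ changes sign there and the integral can be positive. At $\gamma^a=\underline{\gamma}_t$ the endpoint $\overline{\pi}_t=S_{\gamma^a}$ is the mode itself, so $\pi_t^{*,a}(\overline{\pi}_t)>\pi_t^{*,a}(S_{\gamma^{(n)}})$. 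In that regime the integral and $\pi_t^{*,a}(\overline{\pi}_t)$ are both positive and both decrease in $\gamma^a$, which pushes their ratio in opposite directions, so nothing follows.

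One clean repair is as follows. Write $\pi_t^{*}(\cdot\,;\gamma)$ for the optimal density under $\gamma$, and apply Proposition \ref{Prop: AcceptanceProb.c} to any pair $\gamma_1<\gamma_2<\gamma^{(n)}$. That proposition says $\pi_t^{*}(s;\gamma_1)/\pi_t^{*}(s;\gamma_2)$ is maximised at $s=\overline{\pi}_t$. Comparing its value at $s=S_{\gamma^{(n)}}$ with its value at $\overline{\pi}_t$ and rearranging gives
\[
\frac{\pi_t^{*}(S_{\gamma^{(n)}};\gamma_1)}{\pi_t^{*}(\overline{\pi}_t;\gamma_1)}\le\frac{\pi_t^{*}(S_{\gamma^{(n)}};\gamma_2)}{\pi_t^{*}(\overline{\pi}_t;\gamma_2)},
\]
which is exactly the required monotonicity of $g_1/g_2$ in $\gamma^a$. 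The inequality is strict, because that proof shows the ratio is strictly below its endpoint value for every $s<\overline{\pi}_t$.
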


This proposition justifies that our scoring scheme based on acceptance probability is well-defined: When the $n$-th guess $\gamma^{(n)}$ deviates further from the client's actual risk aversion $\gamma^a$, the score is strictly lower. If the client specifies a threshold score, it effectively defines a single interval for the candidate guess $\gamma^{(n)}$. On the other hand, from the perspective of computing and inferring $\gamma^a$ when the robo-advisor receives the  score as client feedback on the fixed guess $\gamma^{(n)}$, Proposition \ref{Prop:MonotonicityAcceptProb} shows that for any fixed guess $\gamma^{(n)}$, the acceptance probability as score is strictly decreasing as the actual $\gamma^{a}$ is further from $\gamma^{(n)}$.

For each fixed $\gamma^a$ and threshold score $\alpha$,
denote for simpler notation 
$F_{\gamma^a}(\gamma^{(n)})=\frac{\pi_t^{*,a}(S_{\gamma^{(n)}})}{c_t^{(n)} \pi_t^{*,(n)}(S_{\gamma^{(n)}})}$ for the $n$-th guess $\gamma^{(n)}\in(0,+\infty)$. We consider the domain separately: $F_{\gamma^a}^{L}(\gamma^{(n)})=F_{\gamma^a}(\gamma^{(n)})$ for $\gamma^{(n)}\in(0,\gamma^a)$, and $F_{\gamma^a}^{R}(\gamma^{(n)})=F_{\gamma^a}(\gamma^{(n)})$ for $\gamma^{(n)}\in(\gamma^a,+\infty)$. By Lemma \ref{lemma:MonoPiPrime} the inverse of $F_{\gamma^a}^{L}(\gamma^{(n)})$ and $F_{\gamma^a}^{R}(\gamma^{(n)})$ exist, which we denote by $(F_{\gamma^a}^{L})^{-1}$ and $(F_{\gamma^a}^{R})^{-1}$.
Therefore, the condition that an investment suggestion generated from the guess \( \gamma^{(n)} \) is accepted is \( F_{\gamma^a}(\gamma^{(n)}) > \alpha \), or equivalently \( (F_{\gamma^a}^L)^{-1}(\alpha) < \gamma^{(n)} < (F_{\gamma^a}^R)^{-1}(\alpha) \). We present the following corollary to demonstrate that 
if the client's potential preferences are known to lie within a given range, one can identify an interval of candidate estimates that is acceptable for every actual preference in that range. 
We refer to such an interval as the acceptance interval.

\begin{corollary}\label{Cor:AcceptableForAll}
    Suppose that $(F_{\gamma_r^a}^L)^{-1}(\alpha)<(F_{\gamma_l^a}^R)^{-1}(\alpha)$, $0<\gamma^a_l<\gamma^a_r<\infty$ and $\alpha\in(0,1)$, then $\frac{\pi_t^{*,a}(S_{\gamma^{(n)}})}{c_t^{(n)} \pi_t^{*,(n)}(S_{\gamma^{(n)}})}\geq \alpha$ for any $\gamma^{(n)}\in[(F_{\gamma_r^a}^L)^{-1}(\alpha),(F_{\gamma_l^a}^R)^{-1}(\alpha)]$ and any $\gamma^{a}\in[\gamma^a_l, \gamma^a_r]$.
\end{corollary}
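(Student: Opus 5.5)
The argument uses only the two monotonicity statements of Proposition \ref{Prop:MonotonicityAcceptProb}, written with $G(\gamma^{(n)},\gamma^a):=F_{\gamma^a}(\gamma^{(n)})$. Fix $\alpha\in(0,1)$ and write $a:=(F_{\gamma_r^a}^L)^{-1}(\alpha)$ and $b:=(F_{\gamma_l^a}^R)^{-1}(\alpha)$, with $a<b$ by hypothesis. By definition of the inverses, $G(a,\gamma_r^a)=\alpha$ with $a<\gamma_r^a$, and $G(b,\gamma_l^a)=\alpha$ with $b>\gamma_l^a$. For each fixed $\gamma^a$, unimodality in the guess shows that $\{\gamma^{(n)}:G(\gamma^{(n)},\gamma^a)\ge\alpha\}$ is an interval containing $\gamma^a$. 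I will take an arbitrary $\gamma^a\in[\gamma_l^a,\gamma_r^a]$ and $\gamma^{(n)}\in[a,b]$ and split into the cases $\gamma^{(n)}\le\gamma^a$ and $\gamma^{(n)}\ge\gamma^a$.

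\emph{Case $\gamma^{(n)}\le\gamma^a$.} Then $a\le\gamma^{(n)}\le\gamma^a\le\gamma_r^a$. I will move the true parameter first. For the fixed guess $\gamma^{(n)}$, the second part of Proposition \ref{Prop:MonotonicityAcceptProb} says $G(\gamma^{(n)},\cdot)$ is decreasing on $[\gamma^{(n)},\infty)$, so $G(\gamma^{(n)},\gamma^a)\ge G(\gamma^{(n)},\gamma_r^a)$. Next I move the guess. For the fixed true parameter $\gamma_r^a$, the first part says $G(\cdot,\gamma_r^a)$ is increasing on $(0,\gamma_r^a)$. Since $a\le\gamma^{(n)}\le\gamma_r^a$, this gives $G(\gamma^{(n)},\gamma_r^a)\ge G(a,\gamma_r^a)=\alpha$.

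\emph{Case $\gamma^{(n)}\ge\gamma^a$.} This is symmetric, with $\gamma_l^a\le\gamma^a\le\gamma^{(n)}\le b$. Monotonicity in the true parameter (increasing below the guess) gives $G(\gamma^{(n)},\gamma^a)\ge G(\gamma^{(n)},\gamma_l^a)$. Monotonicity in the guess (decreasing above $\gamma_l^a$) then gives $G(\gamma^{(n)},\gamma_l^a)\ge G(b,\gamma_l^a)=\alpha$.

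I expect the main obstacle to be bookkeeping about domains rather than any new estimate. Proposition \ref{Prop:MonotonicityAcceptProb} is stated on $[\underline{\gamma_t},\overline{\gamma_t}]$ with half-open pieces, while $F$ is defined for guesses in $(0,\infty)$. I will therefore assume all relevant parameters lie in the range where the monotonicity holds, and extend to the closed endpoints by continuity of $G$, which is clear from the explicit formulas. I also need the inverses $(F^{L})^{-1}$ and $(F^{R})^{-1}$ to be well defined at $\alpha$, which uses strict monotonicity and the fact that $F=1$ at $\gamma^{(n)}=\gamma^a$; I will take this as implicit in the hypothesis.
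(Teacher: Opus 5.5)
Your argument is correct and matches what the paper intends. The paper gives no separate proof of this corollary and presents it as a direct consequence of Proposition~\ref{Prop:MonotonicityAcceptProb}. Your chain, $F_{\gamma^a}(\gamma^{(n)})\ge F_{\gamma_r^a}(\gamma^{(n)})\ge F_{\gamma_r^a}(a)=\alpha$ when $\gamma^{(n)}\le\gamma^a$ (resp.\ $F_{\gamma^a}(\gamma^{(n)})\ge F_{\gamma_l^a}(\gamma^{(n)})\ge F_{\gamma_l^a}(b)=\alpha$ when $\gamma^{(n)}\ge\gamma^a$), moves first the true parameter and then the guess, and it is the natural way to make that implication precise. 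Your caveats about staying in the range where Lemma~\ref{lemma:MonoPiPrime} and the proposition apply, and about closing the half-open intervals by continuity, are appropriate and more careful than the paper itself.
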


\subsection{Designing interaction procedures}\label{Sec:InteractionProcedures}

Since a client's risk preference may evolve over time, she may initiate a new interaction when she wishes to reassess her investment preference. 
The robo-advisor then infers her current risk aversion using the scoring mechanism based on the PreFER process.
We omit the time subscript $t$ herein, since the inference of risk aversion is based on investment preferences in a single fixed period when the client initiates an update.
According to Proposition \ref{Prop:MonotonicityAcceptProb}, there exists a one-to-one correspondence, separately on \(\gamma^{(n)}>\gamma^a\) or \(\gamma^{(n)}<\gamma^a\), between the client's feedback and the proposed risk aversion \(\gamma^{(n)}\). Hence, from her feedback to any fixed guess \(\gamma^{(n)}\), the robo-advisor can infer the actual risk aversion, with up to two possible solutions. This inference becomes unique once the client indicates whether the recommendation is perceived as too conservative or too aggressive.

If the client can provide an accurate score and specify whether the recommended strategy is too conservative or too aggressive, then a single interaction suffices to identify \(\gamma^a\). 
In practice, however, inconsistencies in the client's decisions, whether due to mistakes or emotional influences, may hinder the efficiency of the learning process. Recognizing that clients may not be able to articulate their preferences precisely through feedback, the communicated score may fail to perfectly reflect the true risk-aversion parameter.

To capture errors in the observed feedback, let $\gamma^{e,(n)}$ denote the risk-aversion value implied by the client's feedback at the $n$-th interaction. We model this feedback-implied value as
\[
\gamma^{e,(n)}=\gamma^a+\epsilon^{(n)},
\]
where $\{\epsilon^{(n)}\}_{n\geq1}$ is an i.i.d. sequence of real random variables supported on $[-d_\epsilon^L,d_\epsilon^R]$, with zero mean and variance $\sigma_\epsilon^2$. Thus, the client is not assumed to report a risk-aversion parameter directly; rather, $\gamma^{e,(n)}$ represents the parameter value inferred by the robo-advisor from noisy client feedback. The constants $d_\epsilon^L,d_\epsilon^R>0$ represent the degree of noise or inconsistency in the feedback-implied preference.
Assume that $d_{\epsilon}^R<\gamma^a$ (a condition used later to establish convergence), and let \([\underline{\gamma}, \overline{\gamma}]\)  denote the exploratory range of possible risk aversion values. We are only interested in the case where $d_{\epsilon}$ is chosen so that the noise-perturbed risk aversion remains within this exploration interval but may, with positive probability, fall outside the acceptance interval, i.e., 
$\underline{\gamma}\leq \gamma^a-d_{\epsilon}^L< (F^L_{\overline{\gamma}})^{-1}(\alpha)$ and $ (F^R_{\underline{\gamma}})^{-1}(\alpha)<\gamma^a+d_{\epsilon}^R\leq \overline{\gamma}$. Otherwise, the error is acceptable by Corollary \ref{Cor:AcceptableForAll} and a single interaction would suffice to produce a satisfactory parameter estimate.

The interaction procedure within a single period is structured as follows:
\begin{itemize}
    \item Initial Guess and Evaluation:  
   The robo-advisor proposes an initial guess \(\gamma^{(1)}\). The client, whose true risk aversion is \(\gamma^a\), evaluates the recommended strategy. If she is satisfied, the interaction terminates, and the robo-advisor invests using \(\gamma^{(1)}\).
   \item Feedback and Error:  
   If the client is not satisfied, she provides feedback on the recommended prospect. The robo-advisor maps this feedback, through the inverse score relationship, into a feedback-implied risk-aversion value $\gamma^{e,(1)}$, which is modeled as
    \(\gamma^{e,(1)}=\gamma^a+\epsilon^{(1)}.\)
    The robo-advisor then uses $\gamma^{e,(1)}$ to generate an updated guess $\gamma^{(2)}$.
   \item Iterative Update: 
   This interaction continues in the same manner. At each iteration, the client either accepts the recommendation or provides further feedback on the proposed prospect. The robo-advisor converts this feedback into a feedback-implied risk-aversion value
    $\gamma^{e,(n)}=\gamma^a+\epsilon^{(n)}$
    and updates its estimate until the client accepts or a stopping criterion is met.
\end{itemize}

In updating the estimate of the client's risk aversion, we consider two approaches. Approach A is Markovian, uses only the most recent feedback $\gamma^{e,(N-1)}$. 
Approach B employs a cumulative moving average. 
$\gamma^{(N)} = \frac{1}{N-1} \sum_{k=1}^{N-1} \gamma^{e,(k)}, $ for $N\geq 2$. 
This admits the incremental form
    $\gamma^{(N+1)}=\gamma^{(N)}\;+\;\frac{1}{N}\Bigl(\gamma^{e,(N)}-\gamma^{(N)}\Bigr)$, which updates the estimate by a convex combination of the previous value and the newest observation.

Denote $p_A=\P\left((F^L_{\overline{\gamma}})^{-1}(\alpha)-\gamma^a <\epsilon^{(1)}<(F_{\underline{\gamma}}^R)^{-1}(\alpha)-\gamma^a \right)$ for simpler notation. We present the following proposition showing the performance of both approaches of updating the guess. 

\begin{proposition}\label{Prop:PerformanceApproach}
    Under the Markovian update, the expected number of trials until acceptance is $\frac{1}{p_A}$. Under the CMA update, the probability of accepting the $N$-th guess is at least $1 \;-\;\frac{\sigma_\epsilon^2}{N-1}\Bigl(\tfrac1{((F^L)^{-1}(\alpha)-\gamma^a)^2}+\tfrac1{((F^R)^{-1}(\alpha)-\gamma^a)^2}\Bigr)$, which increases to 1 as $N\to\infty$. Moreover, the CMA-based strategy converges in probability to the actual strategy $S_{\gamma^a}$ at rate $1/\sqrt{N}$. 
\end{proposition}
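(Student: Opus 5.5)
Under the Markovian update the $N$-th guess is $\gamma^{(N)}=\gamma^{e,(N-1)}=\gamma^a+\epsilon^{(N-1)}$, so acceptance events in successive rounds depend only on independent noise terms. By Corollary \ref{Cor:AcceptableForAll}, applied with $\gamma_l^a=\underline{\gamma}$ and $\gamma_r^a=\overline{\gamma}$, any guess in $[(F^L_{\overline{\gamma}})^{-1}(\alpha),(F^R_{\underline{\gamma}})^{-1}(\alpha)]$ scores at least $\alpha$. I will take this robust acceptance interval as the operative acceptance criterion, since it is exactly the event in the definition of $p_A$. Each round after the first is then accepted independently with probability $p_A$, because the $\epsilon^{(n)}$ are i.i.d. The number of trials is therefore geometric with success probability $p_A$, and its mean is $1/p_A$. (If the deterministic initial guess $\gamma^{(1)}$ is counted separately, the count shifts by one; I will state the result for the trials driven by feedback.)

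For the CMA update, write $\gamma^{(N)}-\gamma^a=\bar\epsilon_{N-1}:=\frac{1}{N-1}\sum_{k=1}^{N-1}\epsilon^{(k)}$. This average has mean zero and variance $\sigma_\epsilon^2/(N-1)$. Let $a_L:=(F^L)^{-1}(\alpha)-\gamma^a<0$ and $a_R:=(F^R)^{-1}(\alpha)-\gamma^a>0$, where $F=F_{\gamma^a}$ (or the robust version). By Proposition \ref{Prop:MonotonicityAcceptProb}, the guess is accepted whenever $a_L<\bar\epsilon_{N-1}<a_R$. Rejection is therefore contained in $\{\bar\epsilon_{N-1}\le a_L\}\cup\{\bar\epsilon_{N-1}\ge a_R\}$. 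Applying Chebyshev to each tail gives $\P(\bar\epsilon_{N-1}\le a_L)\le \sigma_\epsilon^2/((N-1)a_L^2)$, and similarly for the right tail. A union bound then yields the stated lower bound, which clearly increases to $1$ as $N\to\infty$. The assumption $d_\epsilon^R<\gamma^a$ guarantees $\gamma^{(N)}>0$ almost surely, so the strategy $S_{\gamma^{(N)}}$ is well defined.

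For the rate, set $K:=\frac{1}{u-d}\log\frac{p(1-q)}{(1-p)q}>0$, so that $S_\gamma=K/\gamma$. Then
\[
S_{\gamma^{(N)}}-S_{\gamma^a}=-\frac{K}{\gamma^{(N)}\gamma^a}\,\bar\epsilon_{N-1}.
\]
Because $\gamma^{(N)}\ge \gamma^a-d_\epsilon^L\ge\underline{\gamma}>0$, we get $|S_{\gamma^{(N)}}-S_{\gamma^a}|\le \frac{K}{\underline{\gamma}\gamma^a}|\bar\epsilon_{N-1}|$. Chebyshev then gives
\[
\P\bigl(\sqrt{N}|S_{\gamma^{(N)}}-S_{\gamma^a}|>M\bigr)\le \frac{K^2\sigma_\epsilon^2 N}{\underline{\gamma}^2(\gamma^a)^2(N-1)M^2},
\]
which is small uniformly in $N$ for large $M$. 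This is $O_P(1/\sqrt N)$ convergence, i.e.\ convergence in probability at rate $1/\sqrt N$. The delta method would also give the asymptotic normal limit.

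The main obstacle is the Markovian part: making precise that "acceptance" means landing in the robust interval (which is what $p_A$ encodes) rather than the exact threshold for the unknown $\gamma^a$. I would handle this by invoking Corollary \ref{Cor:AcceptableForAll} and noting that this interval is contained in the exact acceptance region. Everything else is Chebyshev.
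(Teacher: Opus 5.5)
The paper gives no written proof of this proposition (none appears in the appendix). The shape of the stated bound, $\sigma_\epsilon^2/(N-1)$ times a sum of inverse squared distances to the two endpoints, is exactly Chebyshev applied to each tail followed by a union bound. Your route is therefore the intended one, and it is correct as far as it goes: geometric trial counts for the Markovian update, Chebyshev plus a union bound for the CMA average, and $S_\gamma=K/\gamma$ with $\gamma^{(N)}\ge\gamma^a-d_\epsilon^L\ge\underline{\gamma}>0$ for the $1/\sqrt{N}$ rate.

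A few points need tightening.

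\emph{Markovian equality.} Your closing justification does not deliver the equality. Containment of $[(F^L_{\overline{\gamma}})^{-1}(\alpha),(F^R_{\underline{\gamma}})^{-1}(\alpha)]$ in the client's true acceptance region only shows that the per-round success probability $\P\bigl(F_{\gamma^a}(\gamma^a+\epsilon^{(1)})>\alpha\bigr)$ is at least $p_A$. For the number of trials $T$ this gives only $\E[T]\le 1/p_A$. The equality $\E[T]=1/p_A$ holds only under the convention that acceptance means landing in the robust interval, and you should present it as that convention.

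\emph{Positivity.} Positivity of $\gamma^{(N)}$ has nothing to do with $d_\epsilon^R<\gamma^a$. It comes from the lower bound $\gamma^{(N)}\ge\gamma^a-d_\epsilon^L\ge\underline{\gamma}>0$, which you already use in the rate argument.

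\emph{Choice of $F$ in the CMA part.} Drop the parenthetical ``or the robust version''. Chebyshev needs $a_L<0<a_R$, which holds for $F=F_{\gamma^a}$ because $F_{\gamma^a}(\gamma^a)=1>\alpha$. The robust interval need not contain $\gamma^a$, and in that case the bound would not tend to $1$.

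\emph{Stopping.} Because the procedure stops at the first acceptance, say explicitly which event you are bounding. You bound the unconditional event that $\gamma^{(N)}$, built from the full i.i.d.\ noise sequence, lies in the acceptance region. This lower-bounds the probability of having accepted by round $N$. It is not a bound on the acceptance probability conditional on the first $N-1$ guesses having been rejected, since those rejections are correlated with $\bar\epsilon_{N-1}$ and Chebyshev does not control that conditional probability directly.
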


Approach B produces error term whose dispersion around zero is strictly smaller than Approach A. By the central limit theorem, as $N$ increases, the averaged error converges in distribution to a normal law $\mathcal{N}\!\Bigl(0,\frac{\sigma_{\epsilon}^2}{N-1}\Bigr),$
highlighting that aggregating past observations not only reduces variance but also produces an approximately Gaussian perturbation. These findings confirm that the use of historical data enhances estimation stability, in line with our intuition.

In subsequent periods without client interaction, the robo-advisor may update its market outlook while the client's risk aversion remains unchanged. Because the mapping from risk preference to investment strategy depends on the market environment, a previously acceptable recommendation may then fall below the acceptance threshold.
However, assuming that the client's preferences are modeled as the PreFER process, the robo-advisor can still extract valuable information from the accepted recommendation and threshold score in previous periods. Specifically, define the acceptance probability for the guess $\gamma^{(n)}$ as a function of the actual risk aversion $\gamma^a\in(0,+\infty)$,
$G_{\gamma^{(n)}}(\gamma^a)=\frac{\pi_t^{*,a}(S_{\gamma^{(n)}})}{c_t^{(n)} \pi_t^{*,(n)}(S_{\gamma^{(n)}})}$. We then let $G_{\gamma^{(n)}}^{L}(\gamma^{a})=G_{\gamma^{(n)}}(\gamma^{a})$ for $\gamma^{a}\in(0,\gamma^{(n)})$, and $G_{\gamma^{(n)}}^{R}(\gamma^{a})=G_{\gamma^{(n)}}(\gamma^{a})$ for $\gamma^{a}\in(\gamma^{n},+\infty)$. By Lemma \ref{lemma:MonoPiPrime}, the inverse functions \((G_{\gamma^{(n)}}^L)^{-1}\) and \((G_{\gamma^{(n)}}^R)^{-1}\) exist. The client accepts the investment recommendation corresponding to \(\gamma^{(n)}\) if and only if her actual risk aversion satisfies \( (G_{\gamma^{(n)}}^L)^{-1}(\alpha) < \gamma^{a} < (G_{\gamma^{(n)}}^R)^{-1}(\alpha) \). 
In the subsequent trading period, the robo-advisor, now more confident, only needs to explore within this narrower interval. Without requiring further human involvement, it computes the acceptance interval \(\bigl[(F_{\gamma_r^a}^L)^{-1}(\alpha),(F_{\gamma_l^a}^R)^{-1}(\alpha)\bigr]\), where $\gamma_r^a=(G_{\gamma^{(n)}}^R)^{-1}(\alpha)$ and $\gamma_l^a=(G_{\gamma^{(n)}}^L)^{-1}(\alpha)$, under the new market parameters, and proposes a recommendation within this range. By Corollary \ref{Cor:AcceptableForAll}, any such recommendation attains a score above the threshold for all \( \gamma^{a} \in [G_{\gamma^{(n)}}^L)^{-1}(\alpha), (G_{\gamma^{(n)}}^R)^{-1}(\alpha)] \).

Our robo-advising framework integrates an interactive learning process that infers acceptable levels of risk aversion and portfolio choices from investor satisfaction scores, with an investment guideline, based on the PreFER process, for periods without client interaction. See Algorithm \ref{Algo} for a succinct overview.

\begin{algorithm}
\caption{PreFER-Guided Interactive Robo-Advising Algorithm}
\label{Algo}
\begin{algorithmic}
\State \textbf{Input:} $\alpha$ (acceptance threshold), $\sigma_{\epsilon}$ (feedback noise standard deviation)

\State \textbf{Output:} $\{\gamma_t^{*}\}_{t=1}^{\infty}$, $\{\pi_t^{*}\}_{t=1}^{\infty}$ (sequence of accepted risk-aversion parameters and investment strategies across periods.)
\For{$t        = 1$ to $\infty$}
    \If{the client initiates a preference update} \Comment{Interactive Period}
    \State Set the exploratory range for risk aversion $[\underline{\gamma_t},\overline{\gamma_t}]$ and the associated set of admissible strategies $[\underline{\pi}_t,\overline{\pi}_t]$.
        \State $n \leftarrow1$
        \State Recommend strategy $\pi^{\gamma^{(1)}}_t$ associated with risk aversion $\gamma_t^{(1)}\in[\underline{\gamma_t},\overline{\gamma_t}]$
        \While{$\text{client rejects} \ \pi^{\gamma^{(n)}}_t$}
            \State Receive feedback on the recommended prospect \State Infer the feedback-implied risk-aversion $\gamma_t^{e,(n)}$ through the inverse score mapping
            \State Update guess: $\gamma_t^{(n+1)} \leftarrow \gamma_t^{e,(n)}$ or $\gamma_t^{(n+1)} = \left( \prod_{k=1}^{n} \gamma_t^{e,(k)} \right)^{\frac{1}{n}}$
            \State $n \leftarrow n + 1$
            \State Recommend strategy $\pi_t^{\gamma^{(n)}}$ associated with risk aversion $\gamma_t^{(n)}$
        \EndWhile
        \State Client accepts $\pi_t^{\gamma^{(n)}}$ with score above threshold $\alpha$
        \State Set $\gamma_t^{*} \leftarrow \gamma_t^{(n)}$
    \Else 
        \Comment{Non-Interactive Period (Preference carried forward under updated market outlook)}
        \State Let $\gamma^{(\text{prev})} = \gamma^*_{t-1}$
        \State Find $\gamma_l^a$ and $\gamma_r^a$ such that 
        $G_{\gamma^{(\text{prev})}}(\gamma_l^a) = G_{\gamma^{(\text{prev})}}(\gamma_r^a) = \alpha$
        \State Update the feasible interval $[\gamma_l^a, \gamma_r^a]$ 
        \State Compute the acceptance interval \(\bigl[(F_{\gamma_r^a}^L)^{-1}(\alpha),(F_{\gamma_l^a}^R)^{-1}(\alpha)\bigr]\); select \(\gamma_t^*\) within this range
    \EndIf
    \State Execute portfolio $\pi_t^*$ using \(\gamma_t^{*}\) with current market parameters.
\EndFor
\end{algorithmic}
\end{algorithm}

\subsection{Numerical analysis}
We proceed to perform numerical analysis on investigating the behaviour of acceptance probability and the choice of acceptance boundary for non-interactive periods.
Suppose that the client accepts the recommendation corresponding to risk aversion \(\gamma_1=5\) at a threshold score \(\alpha=92\%\). Accordingly, the exploration range for the second period narrows to \([\gamma_l^a, \gamma_r^a]\),  where $\gamma_r^a=(G_{\gamma_1}^R)^{-1}(\alpha_1)$ and $\gamma_l^a=(G_{\gamma_1}^L)^{-1}(\alpha_1)$. Assume that the client's risk aversion remains unchanged, with the client specifying the exploration coefficient \(\rho_2=0.02\) for the next period. In Figure \ref{Fig:AcceptableBound&Prob}, we compute the boundaries of the acceptance interval for the next period, and the acceptance probability corresponding to guess \(\gamma^{(2)}=5.5\) if $\gamma^a=\gamma_l^a$ or $\gamma^a=\gamma_r^a$. Set \(u_1=u_2=1.15\), \(d_1=d_2=0.9\), \(p_1=0.7\), \(\underline{\gamma}=3, \overline{\gamma}=7\) and \(\rho_1=0.1\), we vary $p_2$.

As $p_2$ increases, the acceptance interval becomes narrower, so an acceptable recommendation must correspond more closely to the client's fixed risk preference. Consistently, the acceptance probability associated with an imprecise estimate of risk aversion decreases with $p_2$. The economic reason is that a higher transition probability leads to a more aggressive optimal investment strategy for a given risk-aversion parameter. Consequently, an error in the estimated risk aversion induces a larger deviation from the benchmark strategy and hence a lower acceptance score.
\begin{figure}[H]
   \includegraphics[width=7.96cm]{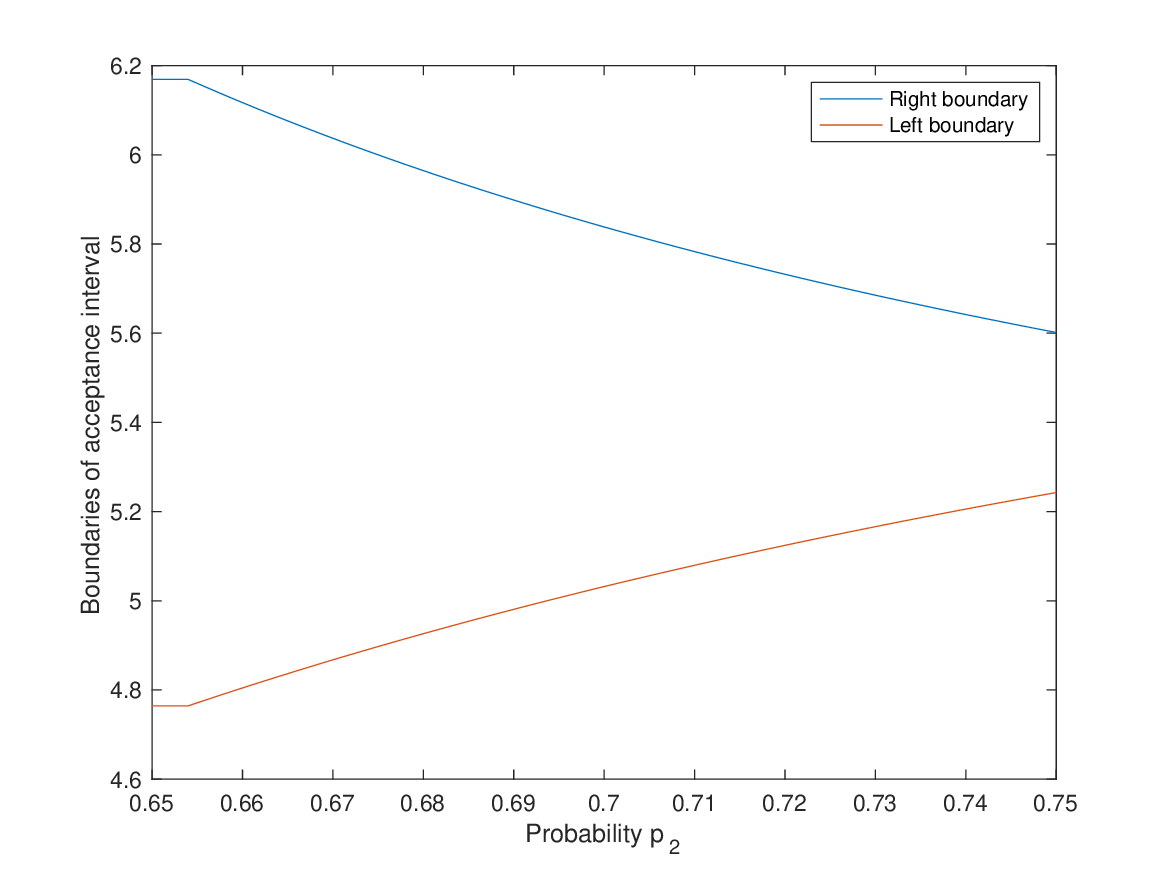}
  \includegraphics[width=7.96cm]{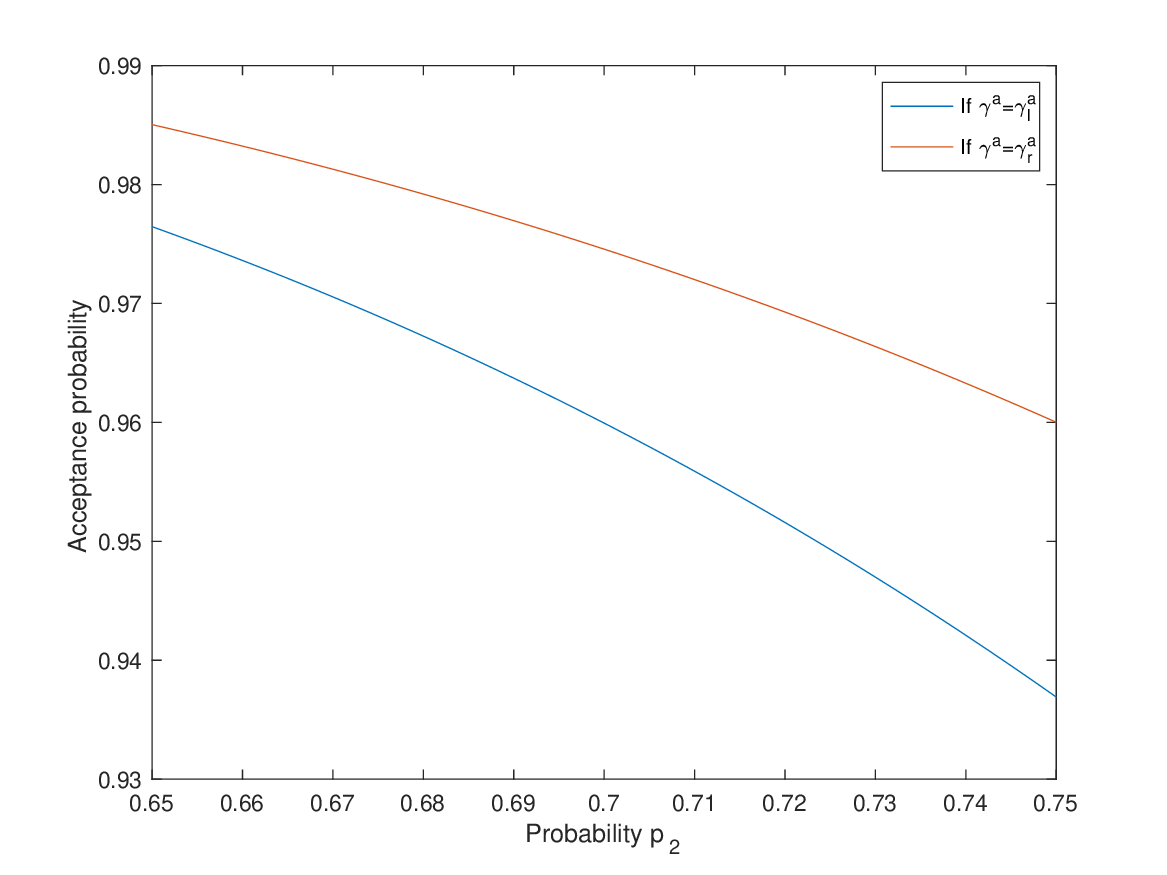} 
  \caption{acceptance interval and acceptance probability when varying transition probability}
  \label{Fig:AcceptableBound&Prob}
\end{figure}

Next, choosing $\gamma^a=5$ in Figure \ref{Fig:BenefitRobo-advising}, we investigate how the threshold score $\alpha$ influences the alignment between the accepted strategy and the actual one with full score, measured by their relative difference in the worst-case scenario, $\inf\limits_{\{\gamma^{(n)}:
\phi_{\gamma^{(n)}}\geq \alpha\}}\big\vert\frac{S_{\gamma^{(n)}}-S_{\gamma^a}}{S_{\gamma^a}}\big\vert$, where $\phi_{\gamma^{(n)}}=\frac{\pi^{*,a}(S_{\gamma^{(n)}})}{c^{(n)}\pi^{*,(n)}(S_{\gamma^{(n)}})}$. 
As the client's requirement becomes stricter with higher $\alpha$, the accepted strategy approaches the actual one even in the worst-case scenario, but this comes at the cost of requiring more guesses and iterative refinements.
The right graph shows the likelihood that, in the absence of robo-advising, the client under the PreFER process achieves the same performance as a robo-advisor operating with a specified threshold score $\alpha$. The kink points occur because the minimum acceptance probability over the exploration interval is strictly positive and is not symmetric around the actual risk aversion. 
\begin{figure}[H]
   \includegraphics[width=7.96cm]{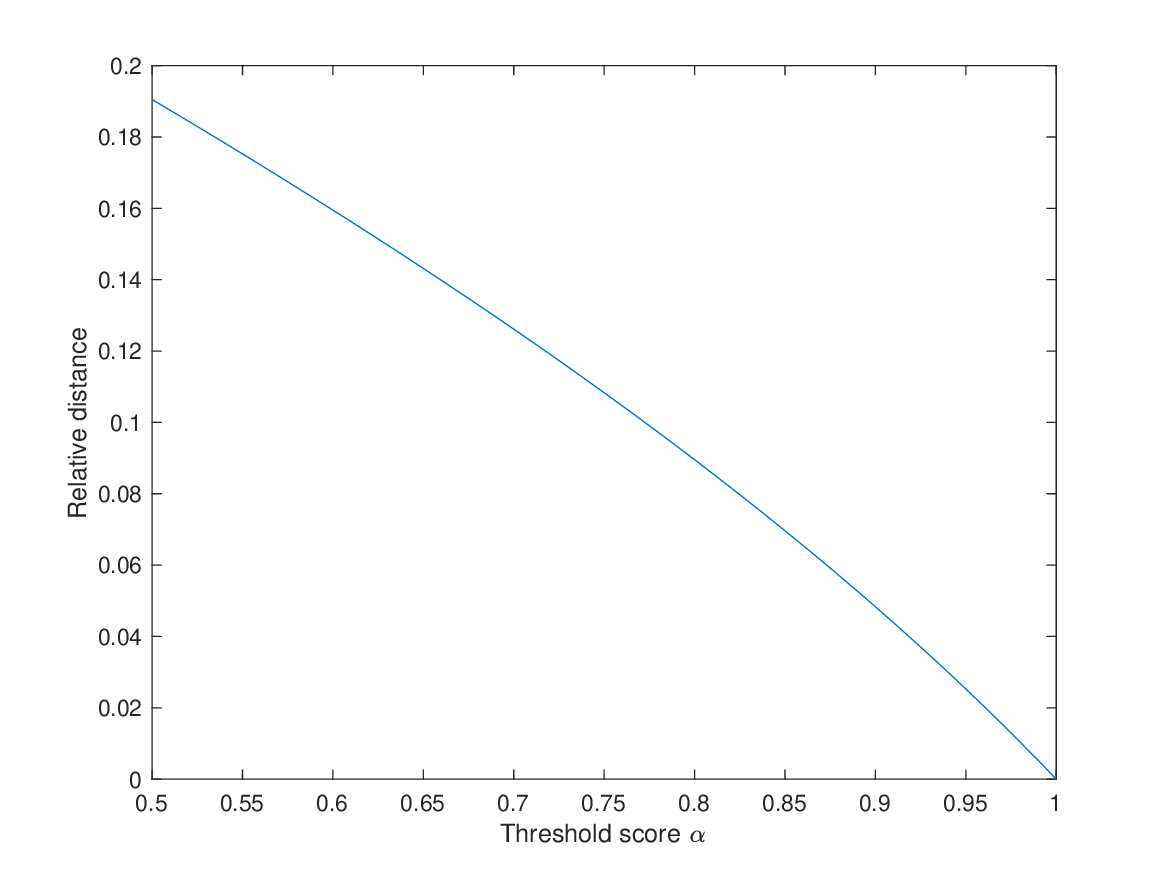}
  \includegraphics[width=7.96cm]{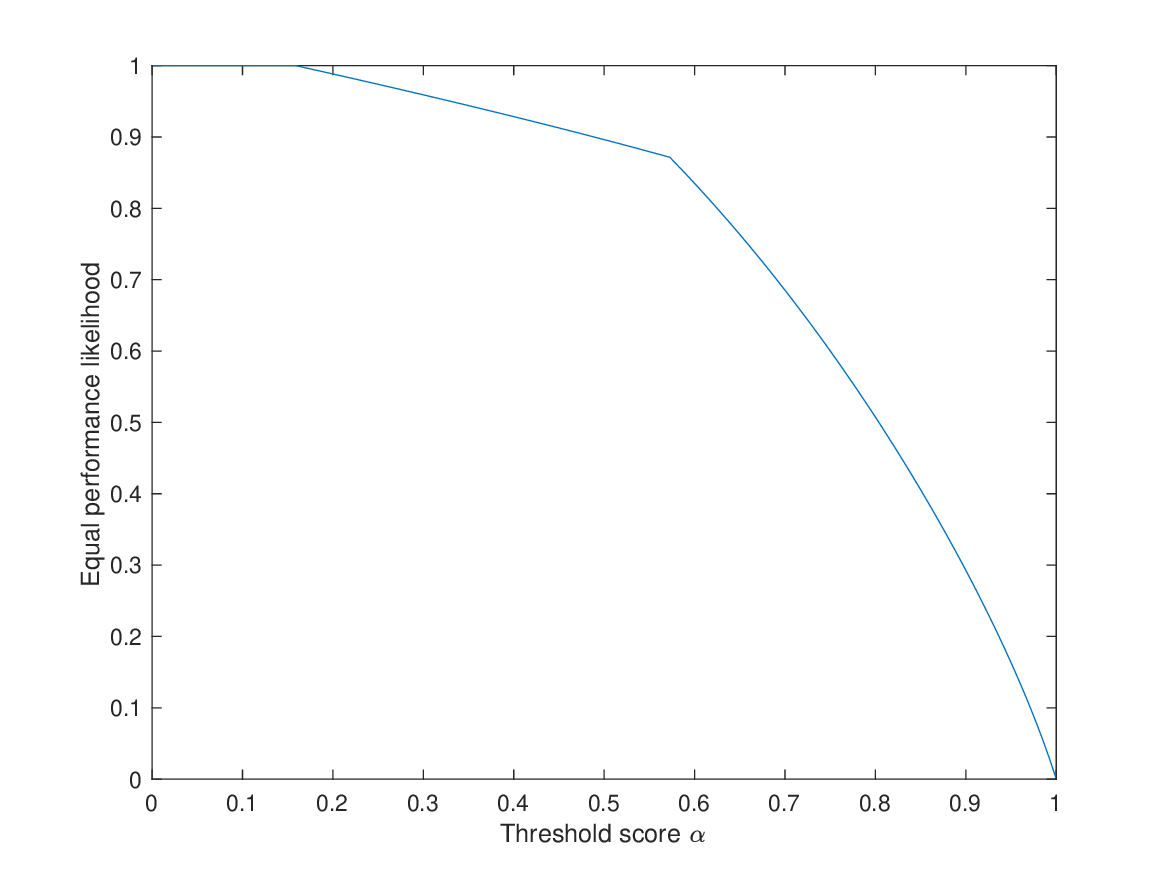} 
  \caption{Benefit of robo-advising: (a) Approaching actual strategy (b) Relative to solo investor}
  \label{Fig:BenefitRobo-advising}
\end{figure}

However, some less-engaged clients may be willing to engage in only a limited number of interactions. Figure \ref{Fig:GuaranteedScore&RequiredInteraction} examines, in the absence of a specified threshold score, what level of acceptance probability can be ensured on average when the amount of information provided by the client is limited.
Let the error $\epsilon^{(n)}$ follow a uniform distribution $\mathcal{U}(-d_{\epsilon}, d_{\epsilon})$, we evaluate the performance under Approach B via Monte Carlo simulation, averaged over $5 \times 10^6$ trials.
As the level of feedback inconsistency, measured by $d_\epsilon$, increases, the client's responses become less precise, reducing the effectiveness of the interactive scheme and resulting in a lower minimum guaranteed score.
In addition, as the client provides more feedback with increasing $n$, the system can ensure a higher score, though the marginal improvement from each additional interaction gradually diminishes.
We further compute in Figure \ref{Fig:GuaranteedScore&RequiredInteraction} the expected number of inferences required for making a satisfactory recommendation for a client whose feedback noise $d_\epsilon$ is affected by idiosyncratic shocks.
Consider a specific type of idiosyncratic life-event shock, such as a health-related or emotional shock, that may affect the client at interactions, occurring independently at each interaction with probability $p^{\text{shock}}$. Set inconsistency level $d_{\epsilon}=1$ which under shock becomes $d_{\epsilon}^{\text{shock}}=1.9$. We observe that when the client is more easily influenced or sets a higher satisfaction threshold, more interactions are required for inference.
\begin{figure}[H]
   \includegraphics[width=7.96cm]{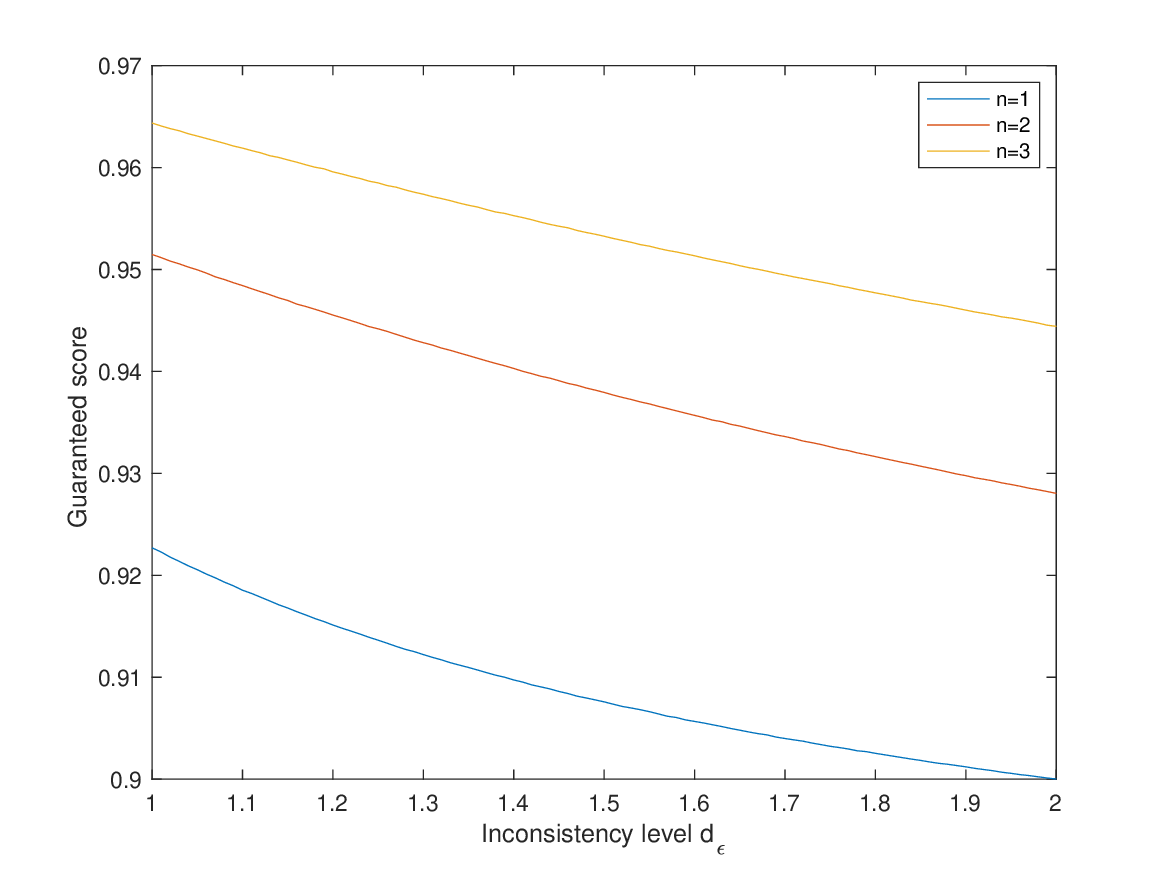}
  \includegraphics[width=7.96cm]{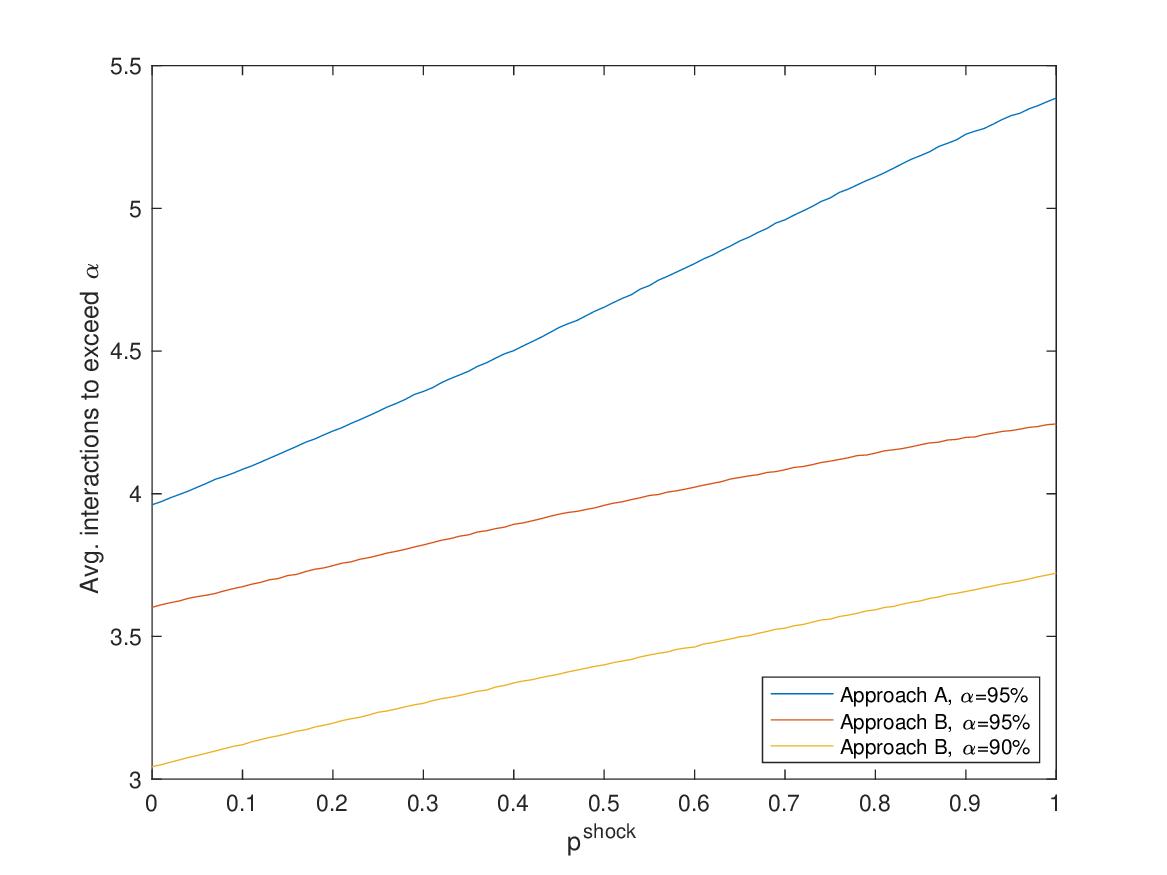} 
  \caption{Performance guarantee and interaction effort for target satisfaction}
  \label{Fig:GuaranteedScore&RequiredInteraction}
\end{figure}


\begin{table}[ht]
\centering
\begin{tabular}{lccccccccc}
\toprule
\textbf{Path} & Int. 1 & Int. 2 & Int. 3 & Int. 4 & Int. 5 & Int. 6 & Int. 7 & Int. 8 & Int. 9 \\
\midrule
Seed 1, A & 87.7\% & \textbf{\textcolor{red}{66.6\%}} & \textbf{\textcolor{red}{82.4\%}} & \textbf{\textcolor{red}{47.8\%}} & 91.4\% & 92.5\% & 91.5\% & 87.9\% & 90.5\% \\
Seed 1, B & 87.7\% & \textbf{\textcolor{red}{66.6\%}} & \textbf{\textcolor{red}{75.4\%}} & \textbf{\textcolor{red}{94.1\%}} & 93.5\% & 95.6\% & 97.0\% & 95.9\% & 95.4\% \\
Seed 2, A & 87.7\% & \textbf{\textcolor{red}{69.3\%}} & 92.3\% & 99.4\% & \textbf{\textcolor{red}{43.1\%}} & \textbf{\textcolor{red}{98.8\%}} & 96.9\% & 58.7\% & 67.1\% \\
Seed 2, B & 87.7\% & \textbf{\textcolor{red}{69.3\%}} & 83.2\% & 90.6\% & \textbf{\textcolor{red}{97.6\%}} & \textbf{\textcolor{red}{97.9\%}} & 97.7\% & 99.3\% & 99.7\% \\
\bottomrule
\end{tabular}
\caption{Simulated client scores affected by random shocks}
\label{tab:simulated_scores_AB}
\end{table}

Table \ref{tab:simulated_scores_AB} presents four simulated score paths across interactions, two under Approach A and two under Approach B, grouped by shared random seeds. All simulations use $p^{\text{\text{shock}}} = 0.4$, and scores are rounded to one decimal place. Red entries indicate interactions in which an event shock occurs. 
Although Approach A may occasionally achieve a higher score earlier than Approach B, it is more vulnerable to sudden drops caused by random shocks. In contrast, Approach B shows a steadier upward trend in satisfaction, as the inconsistency error gradually diminishes when using a cumulative moving average.


\section{Conclusion}\label{sec:conclusion}
We develop an interactive scoring-based preference-learning framework, based on discrete-time Predictable Forward Exploratory Reward (PreFER) processes, to infer the client's risk aversion. By modeling the client's indecisiveness and investment preferences through relaxed stochastic control, we propose an interactive learning procedure in which the client's evaluation of a recommended investment prospect is modeled through an acceptance probability, which serves as a latent satisfaction score. We then analyze the properties of this acceptance probability to justify the scoring scheme and to show how the robo-advisor uses the observed feedback to update its estimate of the client's risk aversion. Our results show that PreFER processes provide a tractable framework for learning client preferences through interaction and carrying the learned preference forward to generate personalized recommendations under evolving market conditions.
At each new time point, conditional on the available information, we identify an acceptance interval from which any recommended strategy achieves a score above the prescribed threshold, even as market conditions evolve. This enables automated updating of the investment recommendation with minimal human intervention.


An important direction for future research is to study the effect of market-model misspecification on the implementation of learned preferences. In particular, while preference learning in the present framework is conducted conditional on the market outlook communicated by the robo-advisor, a robust extension could account for uncertainty in translating the learned preference into investment recommendations. Another natural direction is to extend the inverse-learning analysis beyond the scalar CARA specification to richer or multidimensional preference classes, for which one-to-one identification may no longer hold.

\section*{Acknowledgments}

The authors are grateful to Bahman Angoshtari, Sylvain Carré, Moris Strub, Gechun Liang, Seyoung Park, Epaminondas Kyriakidis, and Yong Hyun Shin for their helpful comments and suggestions.
This work was presented at the 13th Bachelier World Congress, the 34th European Conference on Operational Research, the Stochastic Finance at Warwick Seminars, and the 9th Asian Quantitative Finance Conference.
The authors thank the participants for their comments.

\section*{Appendix A: Multiple risky assets}\label{subsec:MultiStock}

Suppose that there are \(N\) binomial stocks in the market, the return $R_t^{(i)}$ of each stock can increase to the price level \(u_t^{(i)}\) with probability \(p_t^{(i)}\) or decrease to \(d_t^{(i)}\) with probability \(1 - p_t^{(i)}\). The investor chooses the amount \(f_t^{(i)}\) to invest in the stock \(i\), and the rest in the bond for each trading period $[t-1,t)$. Then the dynamics of the wealth process becomes $X_t=X_{t-1}+\sum\limits_{i=1}^N f_t^{(i)}(R_t^{(i)}-1)$.

In this more general setting, we consider relaxed control as a distributional control process, characterized by its density function \(\pi^{N} = \pi_t(s_t^{(1)},\dots,s_t^{(N)}), t \in \mathbb{N}\), which represents the joint probability density of investment across all stocks. Accordingly, the expected performance of a strategy is computed by integrating against the sequence of density functions involved, which after bundling the nested integrals and density products, is then 
$$
\mathbb{E}_{\pi^N}\bigl[U(X_t)\,\big\vert\,\mathcal{F}_{t}\bigr]
\;=\;
\int_{\Pi_1}\!\!\int_{\Pi_2}\!\cdots\int_{\Pi_t}
U\Bigl(x \;+\;\sum_{i=1}^t\sum_{k=1}^N s_i^k\bigl(R_i^k-1\bigr)\Bigr)
\;\prod_{i=1}^t \pi_i^N(s_i)\;ds_1\,ds_2\cdots ds_t,
$$
where $
\Pi_i \;=\; \prod_{k=1}^N \bigl[\underline{\pi}_i^{(k)},\,\overline{\pi}_i^{(k)}\bigr], 
s_i = (s_i^1,\dots,s_i^N), 
$
$
\pi_i^N(s_i)\;=\;\pi_i^N\bigl(s_i^1,\dots,s_i^N\bigr),
$
and
$ 
ds_i = ds_i^1\,ds_i^2\cdots ds_i^N.
$
The term inside the operator \(\esssup\) of equation \eqref{eq:InverseEqn_Dynamic} becomes
\begin{align*}
    \E_{\pi^*} \left[ U_t \left( X_{t-1} + \sum\limits_{i=1}^N f_t^{(i),*}(R_t^{(i)}-1)\right)-\lambda_t \int_{\Pi_t}(\pi_t^*)^2(s_t)ds_t\bigg\vert \mathcal{F}_{t-1}\right].
\end{align*}

To explicitly model the correlation structure among different stocks, we assume that the probability space $(\Omega,\mathcal{F},\P)$ supports another stochastic processes $(\xi_t^{cn})_{t\in \mathbb{N}}$ which is not tradeable by any investor. 
Each $\xi_t^{cn}$ represents a non-traded stochastic factor for period $[t-1,t)$ and follows a Bernoulli distribution with transition probability $p_t^{cn}$, $(p_t^{cn})_{t\in \mathbb{N}}$ are allowed to be stochastic. 
The event $\{ \xi_t^{cn} = 1 \}$ represents a bull market regime and $\{ \xi_t^{cn} = 0 \}$ represents a bear market regime for period $[t-1,t)$. 

We define the filtration $\mathbb{F} = (\mathcal{F}_t)_{t \in \mathbb{N}_0}$ by letting $\mathcal{F}_t$ be the augmented $\sigma$-algebra generated by $(\xi_j^{cn})_{j=1}^t$,  $(p_j^{cn})_{j=1}^{t+1}$, $(B^{(i)}_j)_{j=1}^t$, and the market parameters $(u^{(i)}_j, d^{(i)}_j, p^{(i)}_j)_{j=1}^{t+1}$ of all the stocks.
Denote the transition probability of the stochastic factor by $p^{cn}_t:=\P[\xi_t^{cn}=1\vert \mathcal{F}_{t-1}]$. The outcome of the non-traded stochastic factor and stock returns are only known at the end of each trading period. 
Let $\sigma(\xi_t^{cn})$ denote the $\sigma$-algebra generated by the random variable $\xi_t^{cn}$.
We introduce the conditional probabilities for an upward move of the $i$th stock in the bull or bear market, defined by $p_t^{(i),1}=\frac{\P[\xi_t^{cn}=1, B_t^{(i)}=1\vert \mathcal{F}_{t-1}]}{\P[\xi_t^{cn}=1\vert \mathcal{F}_{t-1}]}$ and $p_t^{(i),0}=\frac{\P[\xi_t^{cn}=0, B_t^{(i)}=1\vert \mathcal{F}_{t-1}]}{\P[\xi_t^{cn}=0\vert \mathcal{F}_{t-1}]}$. 
One can easily deduce that 
$$\P[B_t^{(i)}=1\big\vert \mathcal{F}_{t-1}\vee \sigma(\xi_t^{cn})]=p_t^{(i),1}\mathbbm{1}_{\{\xi_t^{cn}=1\}}+p_t^{(i),0}\mathbbm{1}_{\{\xi_t^{cn}=0\}}.$$

We suppose that $p_t^{(i),1}>p_t^{(i),0}$ almost surely for $i = 1, \dots,N$, that is, all stocks are more likely to perform well in a bull market than in a bear market regime. 
While probabilities of an upward move of stocks depend on the non-traded stochastic factor, price levels are assumed to be independent thereof. 
We further assume that $B^{(i)}_t$ and $B^{(\ell)}_t$ are conditionally independent given $\mathcal{F}_{t-1}\vee\sigma(\xi^{cn}_t)$ for any $i \neq \ell$.
That is, the dependence structure between different stocks is completely characterized by the non-traded stochastic factor.

Next, we focus on the case of \( N = 2 \) and CARA risk preferences for simplicity and as an illustration. We then solve the equation
\begin{align*}
U_{t-1}(X_{t-1})=\sup\limits_{\pi_t\in\mathcal{A}(t)}&\int_{\underline{\pi}_t^{(1)}}^{\overline{\pi}_t^{(1)}}\int_{\underline{\pi}_t^{(2)}}^{\overline{\pi}_t^{(2)}}\E\left[U_t\left(X_{t-1}+\sum\limits_{i=1}^2 s^{(i)}(R_t^{(i)}-1)\right)\Bigg\vert \mathcal{F}_{t-1}\right]\pi_t(s^{(1)},s^{(2)})ds^{(1)}ds^{(2)}\\& +\lambda_t H(\pi_t).
\end{align*}

Consider $\rho_t>0$ that is $\mathcal{F}_t$-measurable, define
$P_t=C_{t-1}J_t$ and $C_t=\prod\limits_{n=1}^{t}P_n$ with $C_0=1$ iteratively, where $J_t=\frac{1}{\beta_t-\frac{1}{4\rho_t}\int_{\underline{\pi}_t^{(1)}}^{\overline{\pi}_t^{(1)}}\int_{\underline{\pi}_t^{(2)}}^{\overline{\pi}_t^{(2)}}h_t^2(s^{(1)},s^{(2)})ds^{(1)}ds^{(2)}},$ and
\begin{equation}\label{Expression-h-N=2}
    \begin{aligned}
    h_t(s^{(1)},s^{(2)})=-C_{t-1}\Bigg(&p_t^{cn}\prod\limits_{i=1}^2\left(p_t^{(i),1}e^{-\gamma s^{(i)}(u_t^{(i)}-1)}+(1-p_t^{(i),1})e^{-\gamma s^{(i)}(d_t^{(i)}-1)}\right)\\&\ +(1-p_t^{cn})\prod\limits_{i=1}^2\left(p_t^{(i),0}e^{-\gamma s^{(i)}(u_t^{(i)}-1)}+(1-p_t^{(i),0})e^{-\gamma s^{(i)}(d_t^{(i)}-1)}\right)\Bigg)+\beta_t.
\end{aligned}
\end{equation}
with 
\begin{align*}
    \beta_t=\frac{2\rho_t-\int_{\underline{\pi}_t^{(1)}}^{\overline{\pi}_t^{(1)}} \int_{\underline{\pi}_t^{(2)}}^{\overline{\pi}_t^{(2)}}\left(h_t(s^{(1)},s^{(2)})-\beta_t\right)ds^{(1)}ds^{(2)}}{
(\overline{\pi}_t^{(1)}-\underline{\pi}_t^{(1)})(\overline{\pi}_t^{(2)}-\underline{\pi}_t^{(2)})}.
\end{align*}
We consider temperature parameter of the form $\lambda_t:=\lambda(X_t)=\rho_t P_te^{-\gamma X_{t-1}}$ for control problem over trading period $[t-1,t)$,

Again, to ensure the well-posedness of the PreFER processes, we assume that for any prescribed exploration coefficient $\rho_t$, the intervals \([ \underline{\pi}_t^{(1)}, \overline{\pi}_t^{(1)} ]\) and \([ \underline{\pi}_t^{(2)}, \overline{\pi}_t^{(2)} ]\) must be chosen such that the following conditions are satisfied: 
\begin{align}\label{AssumptionForPositive-N=2}
    \beta_t-\frac{1}{4\rho_t}\int_{\underline{\pi}_t^{(1)}}^{\overline{\pi}_t^{(1)}}\int_{\underline{\pi}_t^{(2)}}^{\overline{\pi}_t^{(2)}}h_t^2(s^{(1)},s^{(2)})ds^{(1)}ds^{(2)} > 0, \ h_t(\overline{\pi}_t^{(1)},\overline{\pi}_t^{(2)}) > 0 \  \text{and} \  h_t(\underline{\pi}_t^{(1)},\underline{\pi}_t^{(2)}) > 0.
\end{align}

The following theorem provides an explicit construction of the PreFER process for the case of two binomial stocks.
\begin{theorem}\label{Thm:ExploForwardRL-N=2}
For $t\in \mathbb{N}$, the process $U_t(x)=-C_te^{-\gamma x}$, is a PreFER process.
The density function of the optimal strategy $f_t^{*}$ for $t\in \mathbb{N}$ is given by
\begin{align}\label{OptimalStrategyRL-N=2}
    \pi_t^{*}(s^{(1)},s^{(2)})=\frac{h_t(s^{(1)},s^{(2)})}{2\rho_t},
\end{align}
which generates the corresponding optimal wealth process
\begin{align}\label{OptimalWealthRL-N=2}
     X_t^{*}= X_0+\sum_{n=1}^{t}\left(f_n^{(1),*}(R_n^{(1)}-1)+f_n^{(2),*}(R_n^{(2)}-1)\right).
\end{align}
\end{theorem}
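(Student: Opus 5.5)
We follow the same route as in the single-stock case (Theorem \ref{Thm:ExploForwardRL}) and argue by induction on $t$, verifying Definition \ref{def:ExploratoryFrowardPreferences} one period at a time. Assume $U_{t-1}(x)=-C_{t-1}e^{-\gamma x}$ with $C_{t-1}>0$ $\mathcal{F}_{t-2}$-measurable. We look for $U_t$ of the same exponential form and solve the one-period inverse problem \eqref{eq:InverseEqn_Dynamic} over joint densities $\pi_t$ on the rectangle $\Pi_t$.

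First we compute the conditional expectation inside the integral. Condition on $\mathcal{F}_{t-1}\vee\sigma(\xi_t^{cn})$ and use the assumed conditional independence of $B_t^{(1)}$ and $B_t^{(2)}$ given that $\sigma$-algebra. Then
\[
\E\bigl[-C_{t-1}e^{-\gamma(x+\sum_i s^{(i)}(R_t^{(i)}-1))}\mid\mathcal{F}_{t-1}\bigr]=e^{-\gamma x}\bigl(h_t(s^{(1)},s^{(2)})-\beta_t\bigr),
\]
and the right-hand side is exactly the mixture-of-products expression in \eqref{Expression-h-N=2}. This step is where the two-stock structure enters, and once it is done the problem becomes formally identical to the one-dimensional case.

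Next we substitute $\lambda_t=\rho_tP_te^{-\gamma X_{t-1}}$. For this step we tentatively take $U_t$ to be $-e^{-\gamma x}$ times the constant that will make the algebra close, namely $C_t=C_{t-1}P_t$ rescaled appropriately; it is cleanest to work with $U_t$ normalized so that the integrand is $(h_t-\beta_t)$. The objective is then $e^{-\gamma x}$ times the functional
\[
\int_{\Pi_t}(h_t-\beta_t)\pi_t\,ds-\rho_t\int_{\Pi_t}\pi_t^2\,ds .
\]
This functional is strictly concave in $\pi_t$ and is to be maximized subject to $\int_{\Pi_t}\pi_t=1$ and $\pi_t\ge 0$. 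A pointwise Lagrangian argument (equivalently, completing the square) gives the unconstrained maximizer $\pi_t=(h_t-\beta_t+\mu)/(2\rho_t)$. The normalization $\int\pi_t=1$ forces $\mu=\beta_t$, which is precisely the defining identity for $\beta_t$. So $\pi_t^*=h_t/(2\rho_t)$, as in \eqref{OptimalStrategyRL-N=2}.

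The step I expect to be the main obstacle is verifying nonnegativity, i.e.\ admissibility of $\pi_t^*$. The function $h_t-\beta_t$ is $-C_{t-1}$ times a positive combination of exponentials that are each convex in $(s^{(1)},s^{(2)})$, so $h_t$ is concave on $\Pi_t$. Its minimum over the rectangle is therefore attained at one of the four corners. Assumption \eqref{AssumptionForPositive-N=2} directly controls only two of them, $(\underline{\pi}^{(1)},\underline{\pi}^{(2)})$ and $(\overline{\pi}^{(1)},\overline{\pi}^{(2)})$. For the two mixed corners I would try to show that $h_t$ is monotone along each edge of the rectangle, using the fact that the benchmark strategy is interior, so that the minimum is forced onto the diagonal corners. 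If this fails, I would reinterpret the assumption as positivity at all corners.

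Finally, plugging $\pi_t^*$ back in gives the optimal value
\[
e^{-\gamma x}\Bigl(-\beta_t+\tfrac{1}{4\rho_t}\int_{\Pi_t}h_t^2\Bigr)\rho_t P_t\text{-scaled}.
\]
We choose $C_t$ so that this value equals $-C_{t-1}e^{-\gamma x}$; this is exactly the role of $J_t$, and the first condition in \eqref{AssumptionForPositive-N=2} guarantees $C_t>0$. For a suboptimal $\pi_t$ the objective is at most this value, which gives (iii), and equality holds at $\pi_t^*$, which gives (iv). For measurability, $C_t$, $\beta_t$ and $h_t$ are built from $\mathcal{F}_{t-1}$-measurable quantities, so condition (i) holds. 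Condition (ii) holds because $U_t$ is a negative multiple of an exponential. The wealth formula \eqref{OptimalWealthRL-N=2} then follows directly from the self-financing dynamics.
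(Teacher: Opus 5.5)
Apart from the nonnegativity step, your argument matches the paper's proof. This covers the conditional-independence computation leading to \eqref{Expression-h-N=2} and the factorization of the period-$t$ objective as $P_te^{-\gamma X_{t-1}}\bigl(\int_{\Pi_t}(h_t-\beta_t)\pi_t-\rho_t\int_{\Pi_t}\pi_t^2\bigr)$. Here $C_t=C_{t-1}P_t$ exactly, with no rescaling, and the prefactor is $P_te^{-\gamma X_{t-1}}$, not ``$\rho_tP_t$''. It also covers the identity $\int_{\Pi_t}h_t=2\rho_t$ that defines $\beta_t$, and the choice $P_t=C_{t-1}J_t$. Your completing-the-square justification is tighter than the paper's ``usual verification technique''. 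For every $\pi$ with $\int_{\Pi_t}\pi=1$,
\[
\int_{\Pi_t}(h_t-\beta_t)\pi-\rho_t\int_{\Pi_t}\pi^2=\frac{1}{4\rho_t}\int_{\Pi_t}h_t^2-\beta_t-\rho_t\int_{\Pi_t}\Bigl(\pi-\frac{h_t}{2\rho_t}\Bigr)^2 ,
\]
which gives (iii) at once and gives (iv) as soon as $h_t\ge 0$.

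The gap is the one you flag, and edge monotonicity cannot close it. An interior benchmark works against monotonicity: along $s^{(2)}=\underline{\pi}_t^{(2)}$, $s^{(1)}\mapsto h_t$ is strictly concave, and for weakly dependent stocks its maximizer is near $\pi_t^{0,(1)}$, inside the edge. More decisively, positivity at the two diagonal corners does not imply positivity at the other two.

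To see this, let $p_t^{(i),1}\downarrow p_t^{(i),0}$, so that $h_t\to\beta_t-C_{t-1}\phi_1(s^{(1)})\phi_2(s^{(2)})$ with $\phi_i(s)=p_t^{(i),0}e^{-\gamma s(u_t^{(i)}-1)}+(1-p_t^{(i),0})e^{-\gamma s(d_t^{(i)}-1)}$. Pick intervals with interior benchmark such that $\phi_1(\overline{\pi}_t^{(1)})>\phi_1(\underline{\pi}_t^{(1)})$ and $\phi_2(\underline{\pi}_t^{(2)})>\phi_2(\overline{\pi}_t^{(2)})$. Then $\phi=\phi_1\phi_2$ attains its maximum over $\Pi_t$ at $(\overline{\pi}_t^{(1)},\underline{\pi}_t^{(2)})$, strictly above both diagonal-corner values.

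Put $\kappa=2\rho_t/(C_{t-1}|\Pi_t|)$ and let $\langle\cdot\rangle$ denote the average over $\Pi_t$. Then $\beta_t=C_{t-1}(\kappa+\langle\phi\rangle)$, so $h_t(c)>0$ iff $\phi(c)<\kappa+\langle\phi\rangle$. The first condition in \eqref{AssumptionForPositive-N=2} reads $(\kappa+\langle\phi\rangle)^2>\langle\phi^2\rangle$. Choose $\rho_t$ so that $\kappa+\langle\phi\rangle$ sits slightly below $\max_{\Pi_t}\phi$. All three stated conditions then hold, since $(\max\phi)^2>\langle\phi^2\rangle$, yet $h_t(\overline{\pi}_t^{(1)},\underline{\pi}_t^{(2)})<0$. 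By continuity this survives small $p_t^{(i),1}-p_t^{(i),0}>0$. In that case $\pi_t^*$ is not a density, and the identity above shows every admissible $\pi$ falls strictly short of $\frac{1}{4\rho_t}\int_{\Pi_t}h_t^2-\beta_t$, so (iv) fails.

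Your fallback is therefore necessary, not optional. Assume $h_t>0$ at all four corners; by your concavity reduction this is equivalent to $h_t>0$ on $\Pi_t$, and the rest of your proof then goes through. The paper's proof of Lemma \ref{LemmaSup-N=2} only asserts that \eqref{AssumptionForPositive-N=2} guarantees $\pi_t\ge 0$, so it has the same gap.
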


\section*{Appendix B: CRRA risk preferences}
In this example, we focus on CRRA utilities, the utility function is given by $U_0(x)=(1-\frac{1}{\theta})^{-1}x^{1-\frac{1}{\theta}}, x \in \mathbb{R}_+$, where the constant parameter $\theta$ represents the risk tolerance. As is common in power utility models, the investment policy $f_{t}$ is taken to be a fraction of wealth, the wealth process then evolves according to $X_t=X_{t-1}(1+f_{t}(R_{t}-1))$.

Consider $\rho_t>0$ that is $\mathcal{F}_t$-measurable. To avoid introducing additional notation, we define analogously in this example
$P_t=C_{t-1}J_t$ and $C_t=\prod\limits_{n=1}^{t}P_n$ with $C_0=1$ iteratively, where 
$J_t=\frac{1}{\frac{1}{4\rho_t}\int_{\underline{\pi}_t}^{\overline{\pi}_t}h_t^2(s)ds-\beta_t},$
\begin{align}\label{Expression-h-CRRA}
    h_t(s)=C_{t-1}p_t\left(1+s(u_t-1)\right)^{1-\frac{1}{\theta}}-C_{t-1}(1-p_t)\left(1+s(d_t-1)\right)^{1-\frac{1}{\theta}}+\beta_t,
\end{align}
and 
\begin{align}\label{Expression-beta-CRRA}
    \beta_t=\frac{2\rho_t-C_{t-1}\left(\frac{p_t\left(1+\overline{\pi}_t(u_t-1)\right)^{2-\frac{1}{\theta}}}{(2-\frac{1}{\theta})(u_t-1)}-\frac{p_t\left(1+\underline{\pi}_t(u_t-1)\right)^{2-\frac{1}{\theta}}}{(2-\frac{1}{\theta})(u_t-1)}+\frac{(1-p_t)\left(1+\overline{\pi}_t(d_t-1)\right)^{2-\frac{1}{\theta}}}{(2-\frac{1}{\theta})(d_t-1)}-\frac{(1-p_t)\left(1+\underline{\pi}_t(d_t-1)\right)^{2-\frac{1}{\theta}}}{(2-\frac{1}{\theta})(d_t-1)}\right)}{\overline{\pi}_t-\underline{\pi}_t}.
\end{align}
We consider temperature parameter of the form $\lambda_t:=\lambda(X_t)=\rho_t P_t(1-\frac{1}{\theta})X_{t-1}^{1-\frac{1}{\theta}}$ for control problem over trading period $[t-1,t)$.

To ensure the well-posedness of the PreFER processes under CRRA risk preferences, we assume that for any prescribed exploration coefficient \(\rho_t\), the interval \([ \underline{\pi}_t, \overline{\pi}_t ]\) must be chosen such that the following conditions are satisfied: 
\begin{align}\label{AssumptionForPositive-CRRA}
    \frac{1}{4\rho_t} \int_{\underline{\pi}_t}^{\overline{\pi}_t} h_t^2(s) \, ds - \beta_t> 0, \quad h_t(\overline{\pi}_t) > 0 \quad \text{and} \quad h_t(\underline{\pi}_t) > 0.
\end{align}

We next present the main theorem for constructing PreFER process in the case of CRRA risk preferences.
\begin{theorem}\label{Thm:ExploForwardRL-CRRA}
For $t\in \mathbb{N}$, the process $U_t(x)=C_t(1-\frac{1}{\theta})^{-1}x^{1-\frac{1}{\theta}}$, is a PreFER process.
The density function of the optimal strategy $f_t^{*}$ for $t\in \mathbb{N}$ is given by
\begin{align}\label{OptimalStrategyRL-CRRA}
    \pi_t^{*}(s)=\frac{C_{t-1}p_t\left(1+s(u_t-1)\right)^{1-\frac{1}{\theta}}+C_{t-1}(1-p_t)\left(1+s(d_t-1)\right)^{1-\frac{1}{\theta}}+\beta_t}{2\rho_t},
\end{align}
which generates the corresponding optimal wealth process
\begin{align}\label{OptimalWealthRL-CRRA}
     X_t^{*}= X_0+\sum_{n=1}^{t}f_n^{*}(R_n-1).
\end{align}
\end{theorem}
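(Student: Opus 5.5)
The proof mirrors Theorem \ref{Thm:ExploForwardRL}, with the power function replacing the exponential, and proceeds by induction on $t$. I assume $U_{t-1}(x)=C_{t-1}(1-\frac1\theta)^{-1}x^{1-\frac1\theta}$ and solve the one-period inverse problem \eqref{eq:InverseEqn_Dynamic}.

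\emph{Reduction to a wealth-free problem.} With the proportional strategy, $X_t=X_{t-1}(1+s(R_t-1))$. Because of homotheticity, the ansatz $U_t(x)=C_t(1-\frac1\theta)^{-1}x^{1-\frac1\theta}$ together with $\lambda_t=\rho_tP_t(1-\frac1\theta)X_{t-1}^{1-\frac1\theta}$ makes the whole objective equal to $X_{t-1}^{1-\frac1\theta}$ times a functional of $\pi_t$ alone. That functional is
\[
\int_{\underline\pi_t}^{\overline\pi_t}\Big[\tfrac{C_t}{1-\frac1\theta}\,g(s)\Big]\pi_t(s)\,ds-\rho_tP_t(1-\tfrac1\theta)\int_{\underline\pi_t}^{\overline\pi_t}\pi_t^2(s)\,ds,
\]
where $g(s)=p_t(1+s(u_t-1))^{1-\frac1\theta}+(1-p_t)(1+s(d_t-1))^{1-\frac1\theta}$. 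Since $C_t=C_{t-1}P_t$, the factor $P_t$ appears in both terms. I factor it out, so that the objective becomes $P_t$ times a quadratic functional whose linear coefficient involves only $C_{t-1}$.

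\emph{Pointwise maximization.} Next I maximize this strictly concave quadratic functional over densities on $[\underline\pi_t,\overline\pi_t]$, introducing a Lagrange multiplier for $\int\pi_t=1$. The first-order condition gives $\pi_t^*(s)=h_t(s)/(2\rho_t)$, which is affine in $g$. The constant shift $\beta_t$ is pinned down by the normalization; integrating $g$ explicitly produces the $(2-\frac1\theta)$-power boundary terms in \eqref{Expression-beta-CRRA}. Assumption \eqref{AssumptionForPositive-CRRA} gives $h_t>0$ at both endpoints. Monotonicity and concavity of $g$ then give $h_t>0$ on the whole interval, so the nonnegativity constraint is inactive. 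Concavity of the functional makes this stationary point the global maximizer. Optimality in the sense of Definition \ref{def:ExploratoryFrowardPreferences}(iii) then follows because every admissible $\pi$ yields a value no larger than this maximum, which is the required supermartingale inequality.

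\emph{Identifying $C_t$.} I substitute $\pi_t^*$ back into the objective. The optimal value is $P_t$ times $\frac{1}{4\rho_t}\int h_t^2-\beta_t$, up to the factor $(1-\frac1\theta)^{-1}$ and $C_{t-1}$-scalings; this uses $\int\pi^*=1$ to eliminate the cross term. Setting this value equal to $U_{t-1}(X_{t-1})$ forces $P_t=C_{t-1}J_t$, with $J_t=1/(\frac{1}{4\rho_t}\int h_t^2-\beta_t)$. The first condition in \eqref{AssumptionForPositive-CRRA} ensures $J_t>0$, so $U_t$ is again a utility function. Measurability of $U_t$ and $\lambda_t$ with respect to $\mathcal{F}_{t-1}$ follows because all the parameters are predictable. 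The martingale equality (iv) holds by construction, and iterating over $t$ yields the wealth process \eqref{OptimalWealthRL-CRRA}.

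The main obstacle is the sign bookkeeping in the power case. The factor $(1-\frac1\theta)^{-1}$ may be negative when $\theta<1$, and this sign determines whether $\beta_t$ enters with a plus or a minus and whether the functional is maximized rather than minimized. This is exactly why $J_t$ here has the reversed sign compared with the CARA case. I would track the sign of $1-\frac1\theta$ explicitly throughout and check that the resulting expression for $\pi_t^*$ matches \eqref{OptimalStrategyRL-CRRA}. The positivity argument on the whole interval, using the concavity of $g$, also needs care.
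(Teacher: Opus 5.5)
Your route (homothetic reduction, a Lagrange multiplier on a concave quadratic functional, then $P_t=C_{t-1}J_t$ from the martingale equality) is the paper's route: Lemma~\ref{LemmaSupCRRA} plus verification of (iii)--(iv). There is, however, a concrete gap. The step ``the first-order condition gives $\pi_t^*=h_t/(2\rho_t)$'' does not follow from the functional you wrote. Set $a=1-\frac{1}{\theta}$. Your reduced objective is $\frac{P_t}{a}\bigl[C_{t-1}\int g\,\pi_t-\rho_t a^2\int\pi_t^2\bigr]$. For $a>0$ its maximizer is $\bigl(C_{t-1}g+\tilde\beta_t\bigr)/(2\rho_t a^2)$. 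That is \eqref{OptimalStrategyRL-CRRA}, \eqref{Expression-beta-CRRA} and $J_t$ with $\rho_t$ replaced by $\rho_t(1-\frac{1}{\theta})^2$. Your phrase ``up to the factor $(1-\frac{1}{\theta})^{-1}$ and $C_{t-1}$-scalings'' hides exactly this mismatch.

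The utility term and the entropy term must carry the same constant. The paper's proof actually works with $U_t(x)=C_t(1-\frac{1}{\theta})x^{1-\frac{1}{\theta}}$. Then $\E_\pi[U_t(X_t)\,\vert\,\mathcal{F}_{t-1}]-\lambda_t\int\pi_t^2=P_t(1-\frac{1}{\theta})X_{t-1}^{1-\frac{1}{\theta}}\bigl(C_{t-1}\int g\,\pi_t-\rho_t\int\pi_t^2\bigr)$, which is precisely the functional of Lemma~\ref{LemmaSupCRRA}. Equivalently, you can keep $(1-\frac{1}{\theta})^{-1}$ in $U_t$, but then it must also go into $\lambda_t$.

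The wealth process also does not come from ``iterating''. Iterating the proportional dynamics you adopt gives $X_t^*=X_0\prod_{n=1}^t(1+f_n^*(R_n-1))$, which is what the paper's proof uses ($X^*_{t+1}=X^*_t(1+f^*_{t+1}(R_{t+1}-1))$). It does not give the additive formula \eqref{OptimalWealthRL-CRRA}.

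Second, the case $\theta<1$ is not a matter of sign bookkeeping. Since $U_t$ must be increasing, $C_t>0$ and hence $P_t>0$. If $1-\frac{1}{\theta}<0$, the common prefactor is negative. Then $\lambda_t<0$, violating $\lambda_t\in\mathbb{R}_+$ in Definition~\ref{def:ExploratoryFrowardPreferences}, and $\int\pi_t^2$ enters the objective with a positive sign. The objective is then convex in $\pi_t$, and its supremum over densities is $+\infty$ (concentrate the mass), so (iii) fails outright. The result needs $\theta>1$.

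This is also the only case where your positivity argument works. There $g$ is concave, so $h_t\ge\min\{h_t(\underline{\pi}_t),h_t(\overline{\pi}_t)\}>0$. Monotonicity of $g$ is not available, since $g$ peaks at the interior benchmark.

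Relatedly, the reversed sign of $J_t$ relative to the CARA case comes from $U_t>0$ here, versus $U_t<0$ under CARA. That is why the paper's verification reduces to $1\ge J_{t+1}\E_\pi[\cdots]$ rather than $-1\ge J_{t+1}\E_\pi[\cdots]$. It is not caused by $1-\frac{1}{\theta}$ possibly being negative.
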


Again, \(\pi^{*}_t(s)\) reaches its maximum at the benchmark strategy obtained by setting \(\rho = 0\), corresponding to the non-exploratory optimal strategy in the classical portfolio selection problems. Indeed, the derivative of $\pi^{*}_t(s)$ over $s$ is computed by
$$(\pi_t^{*})^{'}(s)=\frac{C_{t-1}}{2\rho_t}(1-\frac{1}{\theta})
\left(p_t(u_t-1)\left(1+s(u_t-1)\right)^{-\frac{1}{\theta}}+(1-p_t)(d_t-1)\left(1+s(d_t-1)\right)^{-\frac{1}{\theta}}\right),$$
letting $(\pi^{*}_t)^{'}(s^*)=0$ we obtain
$s^*=\frac{1-\left(\frac{p_t(1-q_t)}{q_t(1-p_t)}\right)^{\theta}}{\left(\frac{p_t(1-q_t)}{q_t(1-p_t)}\right)^{\theta}(d_t-1)-(u_t-1)}=\frac{1}{u_t-1}\left(\frac{p_t^{\theta}q_t^{-\theta}}{p_t^{\theta}q_t^{1-\theta}+(1-p_t)^{\theta}(1-q_t)^{1-\theta}}-1\right)$,
which equals to the optimal strategy when there is no exploration.

\section*{Appendix. Proofs}
\renewcommand{\thesubsection}{\Alph{subsection}}

\subsection{Proof of Proposition \ref{Prop:MaxControlDist}}
    In order to solve for the optimal distributional control that satisfies \eqref{PreFEROptimalControl}, we define
    the Lagrangian function:
\begin{align*}
    L(\pi_t,\beta_t)=\int_{\underline{\pi}_t}^{\overline{\pi}_t}\left(\E \left[ U_t \left( X_{t-1} + s(R_t-1)\right)\vert \mathcal{F}_{t-1}\right]-\lambda_t \pi_t(s)\right)\pi_t(s)ds+\beta_t \left(\int_{\underline{\pi}_t}^{\overline{\pi}_t}\pi_t(s)ds-1\right),
\end{align*} 
then by applying the method of Lagrange multipliers, we obtain the following system of equations:
$$\int_{\underline{\pi}_t}^{\overline{\pi}_t}\left(\E \left[ U_t \left( X_{t-1} + s(R_t-1)\right)\vert \mathcal{F}_{t-1}\right]-2\lambda_t \pi_t(s)+ \beta_t \right)ds =0,$$
$$\int_{\underline{\pi}_t}^{\overline{\pi}_t}\pi_t(s)ds - 1=0.$$
Applying the usual verification technique, we can solve the equations of the constrained optimization problem and obtain an optimal control whose density function is given by
$$\pi_t^{*}(s)=\frac{E \left[ U_t \left( X_{t-1} + s(R_t-1)\right)\vert \mathcal{F}_{t-1}\right]+\beta_t}{2\lambda_t},$$ with $\beta_t=\frac{2\lambda_t-\int_{\underline{\pi}_t}^{\overline{\pi}_t}E \left[ U_t \left( X_{t-1} + s(R_t-1)\right)\vert \mathcal{F}_{t-1}\right]ds}{\overline{\pi}_t-\underline{\pi}_t}.$

Note that maximizing \(\pi_t^{*}(s)\) is equivalent to maximizing \( E \left[ U_t \left( X_{t-1} + s(R_t-1)\right) \mid \mathcal{F}_{t-1} \right] \), which corresponds to the optimization problem in Equation \eqref{PFPPOptimalControl}. This establishes the claim.
\qed

\subsection{Proof of Theorem \ref{Thm:ExploForwardRL}}
We start with the following auxiliary result for constructing the PreFER processes within the CARA family.
\begin{lemma}\label{LemmaSup}
Suppose that \( f_t \) is the random control process associated with the probability density \(\pi_t\). For any $t\in\mathbb{N}$, we have
\begin{align}\label{SupremumEqn}
    \sup_{\pi_t\in \mathcal{A}(t)}\E_{\pi}\left[-C_{t-1}e^{-\gamma f_t(R_t-1)}-\rho_t\int_{\underline{\pi}_t}^{\overline{\pi}_t}\pi_t^2(s)ds\Bigg\vert \mathcal{F}_{t-1}\right]=\frac{1}{4\rho_t}\int_{\underline{\pi}_t}^{\overline{\pi}_t}h_t^2(s)ds-\beta_t.
\end{align}
\end{lemma}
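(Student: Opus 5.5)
The supremum in \eqref{SupremumEqn} is a concave quadratic functional of the density $\pi_t$ over a convex set, so I would compute it by completing the square. First I rewrite the objective. Conditioning on $\mathcal{F}_{t-1}$ and integrating out the Bernoulli variable $B_t$, Fubini gives
\[
\E\bigl[-C_{t-1}e^{-\gamma s(R_t-1)}\big\vert\mathcal{F}_{t-1}\bigr]=-C_{t-1}\bigl(p_te^{-\gamma(u_t-1)s}+(1-p_t)e^{-\gamma(d_t-1)s}\bigr)=h_t(s)-\beta_t .
\]
So, using $\int\pi_t=1$, the objective becomes
\[
\int_{\underline{\pi}_t}^{\overline{\pi}_t}\bigl(h_t(s)-\beta_t\bigr)\pi_t(s)\,ds-\rho_t\int_{\underline{\pi}_t}^{\overline{\pi}_t}\pi_t^2(s)\,ds
=\int_{\underline{\pi}_t}^{\overline{\pi}_t}\bigl(h_t(s)\pi_t(s)-\rho_t\pi_t^2(s)\bigr)ds-\beta_t .
\]
Here $\beta_t$ is deterministic given $\mathcal{F}_{t-1}$, and the entropy term does not depend on the realisation of $f_t$.

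Next I complete the square pointwise:
\[
h_t\pi_t-\rho_t\pi_t^2=\frac{h_t^2}{4\rho_t}-\rho_t\Bigl(\pi_t-\frac{h_t}{2\rho_t}\Bigr)^2 .
\]
Integrating gives the upper bound
\[
\frac{1}{4\rho_t}\int_{\underline{\pi}_t}^{\overline{\pi}_t}h_t^2(s)\,ds-\beta_t
\]
for every admissible $\pi_t$, with equality if and only if $\pi_t=h_t/(2\rho_t)$ almost everywhere.

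It remains to check that this candidate is admissible. Two properties are needed.

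\emph{Normalisation.} From \eqref{Expression-beta}, $\beta_t$ satisfies
\[
(\overline{\pi}_t-\underline{\pi}_t)\beta_t+\int_{\underline{\pi}_t}^{\overline{\pi}_t}\bigl(h_t(s)-\beta_t\bigr)ds=2\rho_t ,
\]
because the explicit integral of the two exponentials is exactly the bracket appearing in \eqref{Expression-beta}. This says precisely that $\int h_t=2\rho_t$, so the candidate integrates to one.

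\emph{Nonnegativity.} The map $s\mapsto h_t(s)-\beta_t$ is $-C_{t-1}$ times a sum of positive convex exponentials, hence concave. Therefore $h_t$ is concave on $[\underline{\pi}_t,\overline{\pi}_t]$, and its minimum over the interval is attained at an endpoint. Assumption \eqref{AssumptionForPositive} gives $h_t(\underline{\pi}_t)>0$ and $h_t(\overline{\pi}_t)>0$, so $h_t>0$ on the whole interval.

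\emph{Measurability.} Since $C_{t-1},p_t,u_t,d_t,\rho_t$ are $\mathcal{F}_{t-1}$-measurable, the candidate density is measurable in the required sense, so it belongs to $\mathcal{A}(t)$.

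The bound is therefore attained, and the supremum equals the claimed value. The main point needing care is the nonnegativity step. Without it the square completion would only give an upper bound, not the supremum over densities, and one would face a constrained quadratic problem with a truncated solution. The concavity argument combined with the endpoint conditions in \eqref{AssumptionForPositive} is what removes this issue.
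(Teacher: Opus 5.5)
Your proof is correct. It reaches the same optimizer $\pi_t^*=h_t/(2\rho_t)$ as the paper, but by a different and more self-contained route.

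The paper sets up a Lagrangian with multiplier $\beta_t$ and writes first-order conditions, stated in integrated rather than pointwise form. It reads off the candidate density and appeals to ``the usual verification technique'' for optimality. It then asserts nonnegativity directly from \eqref{AssumptionForPositive} and obtains the value by substituting the candidate back.

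Your pointwise identity $h_t\pi_t-\rho_t\pi_t^2=\frac{h_t^2}{4\rho_t}-\rho_t\bigl(\pi_t-\frac{h_t}{2\rho_t}\bigr)^2$, applied after absorbing $\beta_t$ through $\int\pi_t=1$, is exactly that omitted verification. It gives global optimality over all admissible densities, and uniqueness up to null sets, in one line, without calculus of variations. It also isolates admissibility of the candidate as the only remaining thing to check.

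Your concavity argument supplies a step the paper skips: why positivity of $h_t$ at the two endpoints implies positivity on the whole interval. The Lagrangian derivation, for its part, explains where the form of $\pi_t^*$ and the constant $\beta_t$ come from, which your verification takes as given from \eqref{Expression-beta}.

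One fact you use silently is $C_{t-1}>0$; without it $h_t$ would be convex rather than concave, and the endpoint argument would fail. It follows by induction from $C_0=1$ and the first inequality in \eqref{AssumptionForPositive}, which makes each $J_n$, and hence each $P_n$ and $C_n=C_{n-1}P_n$, positive.
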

\begin{proof}
    Since $\E_{\pi}\left[-C_{t-1}e^{-\gamma f_t(R_t-1)}\big\vert \mathcal{F}_{t-1}\right]=-C_{t-1}
    \int_{\underline{\pi}_t}^{\overline{\pi}_t}
    \left(p_te^{-\gamma s(u_t-1)}+(1-p_t)e^{-\gamma s(d_t-1)}\right)\pi_t(s)ds$, we set the Lagrangian function
\begin{align*}
    L(\pi_t,\beta_t)=-&\int_{\underline{\pi}_t}^{\overline{\pi}_t}\left(C_{t-1}\left(p_te^{-\gamma(u_1-1)s}+(1-p_t)e^{-\gamma(d_1-1)s}\right)+\rho_t \pi_t(s)\right)\pi_t(s)ds\\& \qquad+\beta_t \left(\int_{\underline{\pi}_t}^{\overline{\pi}_t}\pi_t(s)ds-1\right),
\end{align*} 
then by the method of Lagrange multipliers, we obtain the following system of equations:
$$\int_{\underline{\pi}_t}^{\overline{\pi}_t}\left(-C_{t-1}p_te^{-\gamma(u_t-1)s}-C_{t-1}(1-p_t)e^{-\gamma(d_t-1)s}-2\rho_t \pi_t(s)+ \beta_t \right)ds =0,$$
$$\int_{\underline{\pi}_t}^{\overline{\pi}_t}\pi_t(s)ds - 1=0.$$
Applying the usual verification technique, we can solve the equations of the constrained optimization problem and obtain an optimal control whose density function is given by
$$\pi_t^{*}(s)=\frac{-C_{t-1}p_te^{-\gamma(u_t-1)s}-C_{t-1}(1-p_t)e^{-\gamma(d_t-1)s}+\beta_t}{2\rho_t},$$ with $\beta_t=\frac{2\rho_t+C_{t-1}\int_{\underline{\pi}_t}^{\overline{\pi}_t}\left(p_te^{-\gamma(u_t-1)s}+(1-p_t)e^{-\gamma(d_t-1)s}\right)ds}{\overline{\pi}_t-\underline{\pi}_t},$ which is further equal to \eqref{Expression-beta} after direct computation.
Additionally, to satisfy the requirements of a probability density function, it is crucial that \(\pi_t(s) \geq 0\). This condition is guaranteed by Assumption \eqref{AssumptionForPositive}.

Next, by plugging $\pi_t^{*}(s)$ back in the LHS of \eqref{SupremumEqn}, it becomes
\begin{align*}
&\frac{1}{4\rho_t}\int_{\underline{\pi}_t}^{\overline{\pi}_t}(-C_{t-1}p_te^{-\gamma(u_t-1)s}-C_{t-1}(1-p_t)e^{-\gamma(d_t-1)s}+\beta_t)^2ds\\ & -\frac{\beta_t}{2\rho_t}\int_{\underline{\pi}_t}^{\overline{\pi}_t}(-C_{t-1}p_te^{-\gamma(u_t-1)s}-C_{t-1}(1-p_t)e^{-\gamma(d_t-1)s}+\beta_t)ds\\=&\frac{1}{4\rho_t}\int_{\underline{\pi}_t}^{\overline{\pi}_t}(-C_{t-1}p_te^{-\gamma(u_t-1)s}-C_{t-1}(1-p_t)e^{-\gamma(d_t-1)s}+\beta_t)^2ds-\beta_t,
\end{align*}
which proves the claim, by letting $h_t(s)$ be given by (\ref{Expression-h}).
\end{proof}

We continue with the proof of Theorem \ref{Thm:ExploForwardRL}.

    Conditions $(i)$ and $(ii)$ of Definition \ref{def:ExploratoryFrowardPreferences} follow directly.
Indeed, let $x \in \mathbb{R}$, we first note that for any $t\in \mathbb{N}$ and $s\in[\underline{\pi}_t,\overline{\pi}_t]$, the sequence of $(J_n)_{n=1}^t$ are Borel-measurable functions of the market parameters $(p_n)_{n=1}^t, (u_n)_{n=1}^t,(d_n)_{n=1}^t, (\rho_n)_{n=1}^t$ which are $\mathcal{F}_{t-1}$-measurable. 
It then follows that the process $U_t(x)=-C_te^{-\gamma x}$ is $\mathcal{F}_{t-1}$-measurable.

Next, we show Condition $(iii)$ of Definition \ref{def:ExploratoryFrowardPreferences}. 
Let $t\in \mathbb{N}$ and consider any arbitrary ${X}\in {\mathcal{X}}(x)$, we must show that
\begin{align*}
    \E_{\pi}\left[U_{t}(X_{t})\vert \mathcal{F}_{t}\right]\geq \E_{\pi}\left[U_{t+1}(X_{t+1})-\rho_{t+1} P_{t+1}e^{-\gamma X_{t}}\int_{\underline{\pi}_{t+1}}^{\overline{\pi}_{t+1}}\pi_{t+1}^2(s)ds\bigg\vert \mathcal{F}_{t}\right].
\end{align*}

Straightforward computation shows
\begin{align*}
    &\E_{\pi}\left[U_{t+1}(X_{t+1})-\rho_{t+1} P_{t+1}e^{-\gamma X_{t}}\int_{\underline{\pi}_t}^{\overline{\pi}_t}\pi_{t+1}^2(s)ds\bigg\vert \mathcal{F}_{t}\right]\\=&P_{t+1}
    \E_{\pi}\left[e^{-\gamma X_{t}}\vert \mathcal{F}_{t}\right]
    \E_{\pi}\left[-C_te^{-\gamma f_{t+1}(R_{t+1}-1)}-\rho_{t+1} \int_{\underline{\pi}_{t+1}}^{\overline{\pi}_{t+1}}\pi_{t+1}^2(s)ds\bigg\vert \mathcal{F}_{t}\right].
\end{align*}

Furthermore, $U_t(X_t)=-C_te^{-\gamma X_{t}}$, and $P_{t+1}=C_tJ_{t+1}$. Hence, it remains to show that
\begin{align*}
    -1\geq J_{t+1}\E_{\pi}\left[-C_te^{-\gamma f_{t+1}(R_{t+1}-1)}-\rho_{t+1} \int_{\underline{\pi}_{t+1}}^{\overline{\pi}_{t+1}}\pi_{t+1}^2(s)ds\bigg\vert \mathcal{F}_{t}\right],
\end{align*}
which easily follows using Lemma \ref{LemmaSup}.

To show Condition $(iv)$, we let $X_t^{*}$, $t \in \mathbb{N}_0$, be given by (\ref{OptimalWealthRL}). We need to establish
\begin{align*}
    \E_{\pi}\left[U_{t}(X_{t}^*)\vert \mathcal{F}_{t}\right]= \E_{\pi}\left[U_{t+1}(X_{t+1}^{*})-\rho_{t+1} P_{t+1}e^{-\gamma X_{t}^*}\int_{\underline{\pi}_{t+1}}^{\overline{\pi}_{t+1}}(\pi_{t+1}^*)^2(s)ds\bigg\vert \mathcal{F}_{t}\right].
\end{align*}
Since ${X}_{t+1}^{*}={X}_{t}^{*}+f_{t+1}^{*}(R_{t+1}-1)$, the above equality reduces to 
\begin{align*}
    -1= J_{t+1}\E_{\pi}\left[-C_te^{-\gamma f_{t+1}^*(R_{t+1}-1)}-\rho_{t+1} \int_{\underline{\pi}_{t+1}}^{\overline{\pi}_{t+1}}(\pi_{t+1}^*)^2(s)ds\bigg\vert \mathcal{F}_{t}\right].
\end{align*}
This then directly follows with Lemma \ref{LemmaSup}.
\qed

\subsection{Proof of Corollary \ref{Cor:MaxPi}}
Differentiating with respect to \(s\), we have  
\[
(\pi_t^*)'(s) = C_{t-1}\gamma\frac{p_t (u_t-1)e^{-\gamma(u_t-1)s} + (1-p_t) (d_t-1)e^{-\gamma(d_t-1)s}}{2\rho_t}.
\]  
Setting \((\pi_t^*)'(s) = 0\), we obtain the equation  
\[
p_t(u_t-1)e^{-\gamma(u_t-1)s} + (1-p_t)(d_t-1)e^{-\gamma(d_t-1)s} = 0.
\]  
We can solve the equation and 
since the second derivative satisfies \((\pi_t^*)''(s) < 0\), it follows that the maximum is attained at \(\pi_t^0\), as defined by \eqref{BenchmarkStrategy}.  

By plugging the maximizer \(\pi_t^0\) back in $\pi_t^*(s)$, we compute the maximum
\begin{align*}
    \pi_t^{*}(\pi_t^0)
    =&\frac{-C_{t-1}p_t\left(\frac{p_t(1-q_t)}{(1-p_t)q_t}\right)^{-(1-q_t)}-C_{t-1}(1-p_t)
    \left(\frac{p_t(1-q_t)}{(1-p_t)q_t}\right)^{q_t}+\beta_t}{2\rho_t}
    \\=&\frac{-C_{t-1}\left(\frac{1-p_t}{1-q_t}\right)^{1-q_t}\left(\frac{p_t}{q_t}\right)^{q_t}+\beta_t}{2\rho_t}.
\end{align*}

Furthermore, since $\frac{\partial}{\partial \rho_t}(\pi_t^*)''(s)
= \frac{C_{t-1}\gamma^{2}}{2\,\rho_t^{2}}
\Big[p_t (u_t-1)^2 e^{-\gamma(u_t-1)s}+(1-p_t)(d_t-1)^2 e^{-\gamma(d_t-1)s}\Big]$ and \((\pi_t^*)''(s) < 0\), the magnitude of the curvature $\vert(\pi_t^*)''\vert$ and hence the concavity increases as $\rho_t$ decreases.

On the other hand, by substituting the expression of $\beta_t$, we have
\begin{align*}
    \pi_t^{*}(\pi_t^0)=\frac{C_{t-1}}{2\rho_t}\left(\frac{\int_{\underline{\pi}_t}^{\overline{\pi}_t}\left(p_te^{-\gamma(u_t-1)s}+(1-p_t)e^{-\gamma(d_t-1)s}\right)ds}{\overline{\pi}_t-\underline{\pi}_t}-\left(\frac{1-p_t}{1-q_t}\right)^{1-q_t}\left(\frac{p_t}{q_t}\right)^{q_t}\right)+\frac{1}{\overline{\pi}_t-\underline{\pi}_t}.
\end{align*}
Define $g^{A}(s)=p_te^{-\gamma(u_t-1)s}+(1-p_t)e^{-\gamma(d_t-1)s}$.
By the mean value theorem, the first term inside the bracket can be expressed as $g^{A}(\tilde{S})$, for some $\tilde{S}\in[\underline{\pi},\overline{\pi}]$, while the second term is the minimum of function $g^{A}(s)$. Therefore, $\pi_t^{*}(\pi_t^0)$ is increasing as $\rho_t$ decreases.
\qed

\subsection{Proof of Lemma \ref{lemma:MonoPiPrime}}
We first compute the partial derivative of $(\pi_t^*)'(s)$ in $\gamma$ and obtain
$$\frac{\partial (\pi_t^{*})^{'}(s)}{\partial \gamma}=C_{t-1}\frac{p_t(u_t-1)e^{-\gamma(u_t-1)s}(1-\gamma (u_t-1)s)+(1-p_t)(d_t-1)e^{-\gamma(d_t-1)s}(1-\gamma (d_t-1)s)}{2\rho_t}.$$
Then for each fixed $s\in [\underline{\pi}_t,\overline{\pi}_t]$, $\frac{\partial (\pi_t^{*})^{'}(s)}{\partial \gamma}<0$ iff
\begin{align}\label{InequalityDerivative}
    p_t(u_t-1)e^{-\gamma(u_t-1)s}(1-\gamma (u_t-1)s)<(1-p_t)(1-d_t)e^{-\gamma(d_t-1)s}(1-\gamma (d_t-1)s).
\end{align}

For any fixed $\gamma\in[\underline{\gamma}_t,\overline{\gamma}_t]$, if $s\ge \frac{1}{\gamma\,(u_t-1)}$ then $1-\gamma(u_t-1)s\le 0$, and the LHS $\le 0$, while the RHS $>0$. Hence (\ref{InequalityDerivative}) holds automatically.
For $0< s< \tfrac{1}{\gamma (u_t-1)}$,
both sides are positive, so take logs of the ratio
\begin{align*}
    \Phi(s,\gamma)&=\log\frac{p\,(u_t-1)\,e^{-\gamma (u_t-1) s}(1-\gamma (u_t-1) s)}{(1-p)\,(1-d_t)\,e^{\gamma (1-d_t) s}(1+\gamma (1-d_t) s)}\\&=\log\frac{p (u_t-1)}{(1-p)(1-d_t)}-\gamma (u_t-d_t)s+\log(1-\gamma (u_t-1) s)-\log(1+\gamma (1-d_t) s).
\end{align*}
Then
$
\frac{\partial \Phi(s,\gamma)}{\partial s}=-\gamma\!\left(u_t-d_t+\frac{u_t-1}{1-\gamma (u_t-1) s}+\frac{1-d_t}{1+\gamma (1-d_t) s}\right)<0.
$ 
Thus $\Phi$ is strictly decreasing on $s\in[0,\tfrac{1}{\gamma (u_t-1)})$ and (\ref{InequalityDerivative}) holds if $\Phi(\underline{\pi}_t,\gamma)\le 0$.

Analogously, for any fixed $s\in[\underline{\pi}_t,\overline{\pi}_t]$, (\ref{InequalityDerivative}) holds automatically if $\gamma \geq \frac{1}{s(u_t-1)}$,
while if $0< \gamma <\frac{1}{s(u_t-1)}$, we have
$\frac{\partial \Phi(s,\gamma)}{\partial \gamma}<0$ for $\gamma\in[0,\tfrac{1}{s (u_t-1)})$ and (\ref{InequalityDerivative}) holds if $\Phi(s,\underline{\gamma}_t)\le 0$.

By plugging in the expressions of $\underline{\pi}_t$, we have
$\Phi(\underline{\pi}_t,\underline{\gamma}_t)=\log\frac{1- (1-q_t)\frac{\underline{\gamma}_t}{\overline{\gamma}_t}\log \frac{p_t(1-q_t)}{(1-p_t)q_t}}{1+ q_t\frac{\underline{\gamma}_t}{\overline{\gamma}_t}\log \frac{p_t(1-q_t)}{(1-p_t)q_t}}+(1-\frac{\underline{\gamma}_t}{\overline{\gamma}_t})\log \frac{p_t(1-q_t)}{(1-p_t)q_t}$, which is strictly negative under Assumption \eqref{Assum:Lemma1}.

Combining the above, we conclude that (\ref{InequalityDerivative}) holds for any $\gamma\in[\underline{\gamma}_t,\overline{\gamma}_t]$ and $s\in[\underline{\pi}_t,\overline{\pi}_t]$ when
$
\frac{\underline{\gamma}_t}{\overline{\gamma}_t}(1-q_t)\,\log\!\left(\frac{p_t(1-q_t)}{(1-p_t)q_t}\right)>1;
$
if this condition is violated, it still holds provided $\Phi(\underline{\pi}_t,\underline{\gamma}_t)< 0$.
This then proves the claim.

Finally, since $\pi_t^*(s)$ integrates to 1 over the interval $s\in[\underline{\pi}_t, \overline{\pi}_t]$ regardless of the value of $\gamma$, then by the monotonicity of $(\pi_t^*)'(s)$ in $\gamma$ and the monotonicity of $\pi_t^*(s)$ in $s$, it readily follows that $\pi_t^*(\underline{\pi}_t)$ is increasing in $\gamma$, while $\pi_t^*(\overline{\pi}_t)$ is decreasing in $\gamma$, and there must exist at least an intersection point at which $\pi_t^{*,a}(s)=\pi_t^{*,(n)}(s)$ on the interval $[\underline{\pi}_t, \overline{\pi}_t]$.
To show the uniqueness of the solution, we denote the first intersection point of $\pi_t^{*,a}(s)$ and $\pi_t^{*,(n)}(s)$ by $s_1$. Assume that $\gamma^{a} < \gamma^{(n)}$, the case when $\gamma^{a} > \gamma^{(n)}$ can be extended analogously, then we have $S_{\gamma^{a}} > S_{\gamma^{(n)}}$ and $(\pi_t^{*,a})^{'}(s)>(\pi_t^{*,(n)})^{'}(s)$. It follows that $\pi_t^{*,a}(s)>\pi_t^{*,(n)}(s)$ for all $s\in(s_1,\overline{\pi}_t]$.
\qed

\subsection{Proof of Proposition \ref{Prop: AcceptanceProb.c}}

We prove the case of $\gamma^a<\gamma^{(n)}$ first. 
By Corollary \ref{Cor:MaxPi}, the maximizer of functions $\pi_t^{*,a}(s)$ and $\pi_t^{*,(n)}(s)$ are resepctively
$S_{\gamma^{a}}$ and $S_{\gamma^{(n)}}.$ When $\gamma^a<\gamma^{(n)}$, we have $S_{\gamma^{a}} > S_{\gamma^{(n)}}$.

We have from Lemma \ref{lemma:MonoPiPrime} that $(\pi_t^{*,a})'(s)>(\pi_t^{*,(n)})'(s)$, $\pi_t^{*,a}(\underline{\pi}_t)<\pi_t^{*,(n)}(\underline{\pi}_t)$ and $\pi_t^{*,a}(\overline{\pi}_t)>\pi_t^{*,(n)}(\overline{\pi}_t)$.
Then the supremum must be attained after the intersection point between the two density functions, which is denoted by $s_1$ and satisfies $\pi_t^{*,a}(s_1)=\pi_t^{*,(n)}(s_1)$.

Note that $0<\pi_t^{*,(n)}(s)<\pi_t^{*,a}(s)$ on $s\in[s_1,\overline{\pi}_t]$, and there are 3 cases that need to be discussed. Case 1: If $s_1\in[S_{\gamma^{a}},\overline{\pi}_t]$, then
$(\pi_t^{*,(n)})'(s)<(\pi_t^{*,a})'(s)<0$ on $s\in[s_1,\overline{\pi}_t]$ and hence $\frac{\partial}{\partial s}\left(\frac{\pi_t^{*,a}(s)}{\pi_t^{*,(n)}(s)}\right)=\frac{(\pi_t^{*,a})'(s)\pi_t^{*,(n)}(s)-\pi_t^{*,a}(s)(\pi_t^{*,(n)})'(s)}{(\pi_t^{*,(n)}(s))^2}>0$.
Case 2: If $s_1\in[S_{\gamma^{(n)}}, S_{\gamma^{a}}]$, then $\frac{\pi_t^{*,a}(s)}{\pi_t^{*,(n)}(s)}$ is strictly increasing on $s\in[S_{\gamma^{a}},\overline{\pi}_t]$ by the same argument in Case 1; while for $s\in[s_1,S_{\gamma^{a}}]$, since $(\pi_t^{*,(n)})'(s)<0<(\pi_t^{*,a})'(s)$, it can be easily shown that $\frac{\partial}{\partial s}\left(\frac{\pi_t^{*,a}(s)}{\pi_t^{*,(n)}(s)}\right)>0$. Case 3: If $s_1\in[\underline{\pi}_t, S_{\gamma^{(n)}}]$, by the same argument as in Case 2, it follows that $\frac{\partial}{\partial s}\left(\frac{\pi_t^{*,a}(s)}{\pi_t^{*,(n)}(s)}\right)>0$ on $s\in[S_{\gamma^{(n)}}, \overline{\pi}_t]$; while for $s\in[s_1,S_{\gamma^{(n)}})$, since $\frac{\pi_t^{*,a}(s)}{\pi_t^{*,(n)}(s)}=\frac{\pi_t^{*,a}(s_1)+\int_{s_1}^s(\pi_t^{*,a})'(r)dr}{\pi_t^{*,(n)}(s_1)+\int_{s_1}^s(\pi_t^{*,(n)})'(r)dr }$, write $F(s)=\int_{s_1}^s(\pi_t^{*,a})'(r)dr$ and $G(s)=\int_{s_1}^s(\pi_t^{*,(n)})'(r)dr$
we have 
\begin{align*}
    \frac{\partial}{\partial s}\left(\frac{\pi_t^{*,a}(s)}{\pi_t^{*,(n)}(s)}\right)&=\frac{(\pi_t^{*,a})'(s)(\pi_t^{*,(n)}(s_1)+ G(s))-(\pi_t^{*,(n)})'(s)(\pi_t^{*,a}(s_1)+F(s))}{(\pi_t^{*,(n)}(s_1)+G(s) )^2}\\&=\frac{((\pi_t^{*,a})'(s)-(\pi_t^{*,(n)})'(s))\pi_t^{*,(n)}(s_1)
    +(\pi_t^{*,a})'(s)G(s) -(\pi_t^{*,(n)})'(s)F(s)}{(\pi_t^{*,(n)}(s_1)+G(s) )^2}
\end{align*}
It follows from $0<(\pi_t^{*,(n)})'(s)<(\pi_t^{*,a})'(s)$ that $((\pi_t^{*,a})'(s)-(\pi_t^{*,(n)})'(s))\pi_t^{*,(n)}(s_1)>0$. On the other hand, $(\pi_t^{*,a})'(s)G(s) -(\pi_t^{*,(n)})'(s)F(s)=G(s)^2\left(\frac{F}{G}\right)^{'}(s)$, then $\frac{\partial}{\partial s}\left(\frac{\pi_t^{*,a}(s)}{\pi_t^{*,(n)}(s)}\right)>0$ iff $$\left(\frac{F}{G}\right)^{'}(s)>-\frac{((\pi_t^{*,a})'(s)-(\pi_t^{*,(n)})'(s))\pi_t^{*,(n)}(s_1)}{G(s)^2}.$$ Note that the RHS is strictly negative, we show in the following that $\left(\frac{F}{G}\right)^{'}(s)>0$, and conclude with $\frac{\partial}{\partial s}\left(\frac{\pi_t^{*,a}(s)}{\pi_t^{*,(n)}(s)}\right)>0$ in all cases, hence the supremum is attained at $\overline{\pi}_t$;

Indeed, by the monotone L'Hôpital rule, if $\frac{(\pi_t^{*,a})'(s)}{(\pi_t^{*,(n)})'(s)}$ is increasing, then $\frac{F(s)}{G(s)}$ is also increasing. Let $\varphi(s)=\frac{(\pi_t^{*,a})'(s)}{(\pi_t^{*,(n)})'(s)}$, the derivative of $\log\varphi(s)$ has the same sign as $\varphi(s)$ because $\varphi>0$.
\begin{align*}
    (\log\varphi)^{'}(s)=-\gamma^{a}\frac{B_{\gamma^{a}}(s)}{A_{\gamma^{a}}(s)}+\gamma^{(n)}\frac{B_{\gamma^{(n)}}(s)}{A_{\gamma^{(n)}}(s)},
\end{align*}
where $A_{\gamma}(s)=p_t(u_t-1)e^{-\gamma (u_t-1)s}+(1-p_t)(d_t-1)e^{-\gamma (d_t-1)s}$, and $B_{\gamma}(s)=p_t(u_t-1)^2e^{-\gamma (u_t-1)s}+(1-p_t)(d_t-1)^2e^{-\gamma (d_t-1)s}$.
Differentiate $\frac{B_{\gamma}}{A_{\gamma}}$ over $\gamma$: $
\frac{\partial}{\partial \gamma}\Big(\frac{B_{\gamma}}{A_{\gamma}}\Big)=\frac{s\,(B_{\gamma}^{2}-A_{\gamma}C_{\gamma})}{A_{\gamma}^{2}},
$ where $C_{\gamma}=p_t(u_t-1)^{3}e^{-\gamma (u_t-1) s}+(1-p_t)(d_t-1)^{3}e^{-\gamma (d_t-1) s}$. The sign is that of $B^{2}-AC$, which can be expanded and simplified to
\begin{align*}
B^{2}-AC
=-\,p_t(1-p_t)\,e^{-\gamma(u_t+d_t-2)s}\,(u_t-1) (d_t-1)\,(u_t-d_t)^{2}>0.
\end{align*}
Therefore, $\frac{\partial}{\partial \gamma}\Big(\frac{B_{\gamma}}{A_{\gamma}}\Big)>0$. Moreover, since $(\pi_t^{*,a})^{'}(s)>0$ and $(\pi_t^{*,(n)})^{'}(s)>0$ for $s\in[s_1,S_{\gamma^{(n)}})$, we have
$\frac{B_{\gamma^{a}}(s)}{A_{\gamma^{a}}}>0$ and $\frac{B_{\gamma^{(n)}}(s)}{A_{\gamma^{(n)}}(s)}>0$, which yields
$(\log\varphi)^{'}>0$.

The case of $\gamma^a>\gamma^{(n)}$ follows directly by applying analogous steps.
\qed

\subsection{Proof of Proposition \ref{Prop:MonotonicityAcceptProb}}
We only prove for the case of $\gamma^{a}<\gamma^{(n)}$ herein. The case when $\gamma^{a}>\gamma^{(n)}$ can be extended in a similar manner.
We first compute the derivative of $g_1(\gamma^{(n)},\gamma^{a})$ w.r.t. $\gamma^{(n)}$.
\begin{equation*}
    \begin{aligned}
    \frac{\partial g_1(\gamma^{(n)},\gamma^{a})}{\partial \gamma^{(n)}}=\log\left(\frac{p_t(1-q_t)}{(1-p_t)q_t}\right)\frac{\gamma^{a}C_{t-1}}{(\gamma^{(n)})^2}\Bigg(q_t&(1-p_t)\left(\frac{p_t(1-q_t)}{(1-p_t)q_t}\right)^{\frac{\gamma^{a}}{\gamma^{(n)}}q_t}\\&-(1-q_t)p_t\left(\frac{p_t(1-q_t)}{(1-p_t)q_t}\right)^{-\frac{\gamma^{a}}{\gamma^{(n)}}(1-q_t)}
    \Bigg).
\end{aligned}
\end{equation*}
Since the market offers positive expected excess return, it can be readily shown that the term in the bracket, and hence the derivative $\frac{\partial g_1(\gamma^{(n)},\gamma^{a})}{\partial \gamma^{(n)}}$, is strictly negative when $\gamma^{(n)}>\gamma^{a}$.

We next discuss the derivative of $h(\gamma^{(n)})$ in $\gamma^{(n)}$. The numerator of $\frac{\partial h(\gamma^{(n)})}{\partial \gamma^{(n)}}$ is computed by
\begin{align*}
    N(\gamma^{(n)})=&
    \overline{\pi}_tC_{t-1}\left((u_t-1)p_te^{-\gamma^{(n)}(u_t-1)\overline{\pi}_t}+(d_t-1)(1-p_t)e^{-\gamma^{(n)}(d_t-1)\overline{\pi}_t}\right)\\ & \hspace{1cm} \times
    \left(-C_{t-1}\left(\frac{1-p_t}{1-q_t}\right)^{1-q_t}\left(\frac{p_t}{q_t}\right)^{q_t}+\beta_t^{(n)}\right)
    \\&+\frac{\partial \beta_t^{(n)}}{\partial \gamma^{(n)}}\left(C_{t-1}p_te^{-\gamma^{(n)}(u_t-1)\overline{\pi}_t}+C_{t-1}(1-p_t)e^{-\gamma^{(n)}(d_t-1)\overline{\pi}_t}-C_{t-1}\left(\frac{1-p_t}{1-q_t}\right)^{1-q_t}\left(\frac{p_t}{q_t}\right)^{q_t}\right).
\end{align*}
We know that $C_{t-1}p_te^{-\gamma^{(n)}(u_t-1)\overline{\pi}_t}+C_{t-1}(1-p_t)e^{-\gamma^{(n)}(d_t-1)\overline{\pi}_t}-C_{t-1}\left(\frac{1-p_t}{1-q_t}\right)^{1-q_t}\left(\frac{p_t}{q_t}\right)^{q_t}>0$ and $(u_t-1)p_te^{-\gamma^{(n)}(u_t-1)\overline{\pi}_t}+(d_t-1)(1-p_t)e^{-\gamma^{(n)}(d_t-1)\overline{\pi}_t}<0$ from Corollary \ref{Cor:MaxPi}. Furthermore,
by the positivity of the probability density, we have $-C_{t-1}\left(\frac{1-p_t}{1-q_t}\right)^{1-q_t}\left(\frac{p_t}{q_t}\right)^{q_t}+\beta_t^{(n)}>0$, and $\beta_t^{(n)}>C_{t-1}p_te^{-\gamma^{(n)}(u_t-1)\overline{\pi}_t}+C_{t-1}(1-p_t)e^{-\gamma^{(n)}(d_t-1)\overline{\pi}_t}$.


We next show 
\begin{align}\label{InequalityMagnitude}
    \overline{\pi}_tC_{t-1}\left((u_t-1)p_te^{-\gamma^{(n)}(u_t-1)\overline{\pi}_t}+(d_t-1)(1-p_t)e^{-\gamma^{(n)}(d_t-1)\overline{\pi}_t}\right)<-\frac{\partial \beta_t^{(n)}}{\partial \gamma^{(n)}}.
\end{align}

By the mean-value theorem,
\begin{align*}
    \frac{\partial \beta_t^{(n)}}{\partial \gamma^{(n)}}=&\frac{-C_{t-1}}{\overline{\pi}_t-\underline{\pi}_t}\int_{\underline{\pi}_t}^{\overline{\pi}_t}s\left((u_t-1)p_te^{-\gamma^{(n)}(u_t-1)s}+(d_t-1)(1-p_t)e^{-\gamma^{(n)}(d_t-1)s}\right)ds\\=&-\tilde{S}C_{t-1}\left((u_t-1)p_te^{-\gamma^{(n)}(u_t-1)\tilde{S}}+(d_t-1)(1-p_t)e^{-\gamma^{(n)}(d_t-1)\tilde{S}}\right)
\end{align*}
for some $\tilde{S}\in[\underline{\pi},\overline{\pi}]$.
Consider function $f(s)=(u_t-1)p_te^{-\gamma^{(n)}(u_t-1)s}+(d_t-1)(1-p_t)e^{-\gamma^{(n)}(d_t-1)s}$, then \eqref{InequalityMagnitude} becomes
$\frac{\overline{\pi}_t}{\tilde{S}}f(\overline{\pi}_t)<f(\tilde{S}).$
Since $f(s)$ is strictly decreasing in $s$, we have $f(\overline{\pi}_t)\leq f(\tilde{S}).$

Moreover,
$f(\overline{\pi}_t)<0$ and $\frac{\overline{\pi}_t}{\tilde{S}}>1$, then
    $\frac{\overline{\pi}_t}{\tilde{S}}f(\overline{\pi}_t)<f(\overline{\pi}_t)\leq f(\tilde{S}).$
This establishes \eqref{InequalityMagnitude}, from which it further follows that $N(\gamma^{(n)})<0$. Since $g_1(\gamma^{(n)},\gamma^{a})$, $g_2(\gamma^{a})$, and $h(\gamma^{(n)})$ are all positive, the derivative of $\frac{\pi^{*,a}(S_{\gamma^{(n)}})}{c_t^{(n)}\,\pi^{*,(1)}(S_{\gamma^{(n)}})}$ with respect to $\gamma^{(n)}$ is strictly negative on $[\gamma^{a},\overline{\gamma}_t]$.

To show the second part of the proposition is equivalent to proving that 
$\frac{g_1(\gamma^{(n)},\gamma^{a})}{g_2(\gamma^{a})}$ is strictly increasing w.r.t. $\gamma^{a}$ when assuming $\gamma^{a}<\gamma^{(n)}$.
Given that 
$$\frac{g_1(\gamma^{(n)},\gamma^{a})}{g_2(\gamma^{a})}=\frac{-C_{t-1}p_te^{-\gamma^a(u_t-1)S_{\gamma^{(n)}}}-C_{t-1}(1-p_t)e^{-\gamma^a(d_t-1)S_{\gamma^{(n)}}}+\beta_t^{a}}{-C_{t-1}p_te^{-\gamma^a(u_t-1)\overline{\pi}_t}-C_{t-1}(1-p_t)e^{-\gamma^a(d_t-1)\overline{\pi}_t}+\beta_t^{a}},$$ and that $S_{\gamma^{a}}>S_{\gamma^{(n)}}$ when $\gamma^{a}<\gamma^{(n)}$, $\frac{\partial (\pi_t^{*})^{'}(s)}{\partial \gamma}<0$ from Lemma \ref{lemma:MonoPiPrime} can be further applied to complete the proof.
\qed

\subsection{Proof of Theorem \ref{Thm:ExploForwardRL-N=2}}
We first present an auxiliary result on the construction of PreFER processes under CARA preferences with two risky assets.
\begin{lemma}\label{LemmaSup-N=2}
Suppose that \( (f_t^{(1)}, f_t^{(2)}) \) is the random control process associated with the probability density \(\pi_t\). For any $t\in\mathbb{N}$, we have
\begin{equation}\label{SupremumEqn-N=2}
\begin{aligned}
    &\sup_{\pi_t\in \mathcal{A}(t)}\E_{\pi}\left[-C_{t-1}
    e^{-\gamma \sum\limits_{i=1}^2 f_t^{(i)}(R_t^{(i)}-1)}-\rho_t\int_{\underline{\pi}_t^{(1)}}^{\overline{\pi}_t^{(1)}}\int_{\underline{\pi}_t^{(2)}}^{\overline{\pi}_t^{(2)}}\pi_t^2(s^{(1)},s^{(2)})ds^{(1)}ds^{(2)}\Bigg\vert \mathcal{F}_{t-1}\right]\\=&\frac{1}{4\rho_t}\int_{\underline{\pi}_t^{(1)}}^{\overline{\pi}_t^{(1)}}\int_{\underline{\pi}_t^{(2)}}^{\overline{\pi}_t^{(2)}}h_t^2(s^{(1)},s^{(2)})ds^{(1)}ds^{(2)}-\beta_t.
\end{aligned}
\end{equation}
\end{lemma}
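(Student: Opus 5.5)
We follow the proof of Lemma \ref{LemmaSup}, with the interval replaced by the rectangle $\Pi_t=[\underline{\pi}_t^{(1)},\overline{\pi}_t^{(1)}]\times[\underline{\pi}_t^{(2)},\overline{\pi}_t^{(2)}]$. The first step is to reduce the expected utility term to a deterministic linear functional of $\pi_t$. We condition first on $\mathcal{F}_{t-1}\vee\sigma(\xi_t^{cn})$. Since $B_t^{(1)}$ and $B_t^{(2)}$ are conditionally independent given this $\sigma$-algebra, for fixed $(s^{(1)},s^{(2)})$ the exponential factorizes:
\begin{align*}
\E\Bigl[e^{-\gamma\sum_{i=1}^2 s^{(i)}(R_t^{(i)}-1)}\Big\vert\mathcal{F}_{t-1}\vee\sigma(\xi_t^{cn})\Bigr]=\prod_{i=1}^2\Bigl(p_t^{(i),\xi}e^{-\gamma s^{(i)}(u_t^{(i)}-1)}+(1-p_t^{(i),\xi})e^{-\gamma s^{(i)}(d_t^{(i)}-1)}\Bigr),
\end{align*}
where $\xi=\xi_t^{cn}\in\{0,1\}$. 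Averaging over $\xi_t^{cn}$ with weights $p_t^{cn}$ and $1-p_t^{cn}$ shows that $-C_{t-1}$ times this expectation equals $h_t(s^{(1)},s^{(2)})-\beta_t$ in \eqref{Expression-h-N=2}; call this function $g_t$. Fubini then gives
\begin{align*}
\E_{\pi}\Bigl[-C_{t-1}e^{-\gamma\sum_i f_t^{(i)}(R_t^{(i)}-1)}\Big\vert\mathcal{F}_{t-1}\Bigr]=\int_{\Pi_t} g_t(s)\,\pi_t(s)\,ds .
\end{align*}
All quantities involved are $\mathcal{F}_{t-1}$-measurable.

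The problem is now to maximize the strictly concave functional $J(\pi)=\int_{\Pi_t}\bigl(g_t\pi-\rho_t\pi^2\bigr)ds$ over densities on $\Pi_t$. We form the Lagrangian with multiplier $\beta_t$ for the constraint $\int_{\Pi_t}\pi=1$. Pointwise stationarity gives $g_t-2\rho_t\pi+\beta_t=0$, so the candidate is $\pi_t^*=(g_t+\beta_t)/(2\rho_t)=h_t/(2\rho_t)$. Imposing normalization determines $\beta_t$ by the stated (implicit, but linear) formula, since $\int_{\Pi_t}(h_t-\beta_t)=\int_{\Pi_t} g_t$ does not depend on $\beta_t$. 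Nonnegativity of $\pi_t^*$ should follow from \eqref{AssumptionForPositive-N=2}. The function $g_t$ is a mixture of products of one-dimensional convex-in-$s^{(i)}$ terms, so it is concave in each coordinate. On a rectangle its minimum is then attained at a corner; the assumption controls the two diagonal corners $(\underline{\pi}^{(1)},\underline{\pi}^{(2)})$ and $(\overline{\pi}^{(1)},\overline{\pi}^{(2)})$, and I would argue that these are the relevant ones. As in the one-dimensional case, I would otherwise take this as given by the well-posedness assumption.

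For verification, I would write any admissible $\pi=\pi_t^*+\eta$ with $\int_{\Pi_t}\eta=0$. Using $g_t=2\rho_t\pi_t^*-\beta_t$, this gives $J(\pi)=J(\pi_t^*)-\rho_t\int_{\Pi_t}\eta^2\le J(\pi_t^*)$, so $\pi_t^*$ attains the supremum. It remains to substitute:
\begin{align*}
J(\pi_t^*)=\int_{\Pi_t}\Bigl(\frac{(h_t-\beta_t)h_t}{2\rho_t}-\frac{h_t^2}{4\rho_t}\Bigr)ds=\frac{1}{4\rho_t}\int_{\Pi_t}h_t^2\,ds-\beta_t\int_{\Pi_t}\frac{h_t}{2\rho_t}\,ds,
\end{align*}
and the last integral equals $1$ by normalization. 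This yields \eqref{SupremumEqn-N=2}. The main obstacle is the first step: correctly handling the regime-dependent conditional independence, where conditioning on $\xi_t^{cn}$ before factorizing is essential. The positivity of $\pi_t^*$ on the whole rectangle is the secondary delicate point.
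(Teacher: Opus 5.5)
Your proof follows the paper's route. The paper also forms the Lagrangian with multiplier $\beta_t$, reads off $\pi_t^*=h_t/(2\rho_t)$ from stationarity, fixes $\beta_t$ by normalization, substitutes back to get $\frac{1}{4\rho_t}\int_{\Pi_t}h_t^2-\beta_t$, and evaluates the expectation by conditioning on $\mathcal{F}_{t-1}\vee\sigma(\xi_t^{cn})$ and using conditional independence. It does that last step last rather than first, which is immaterial. Your identity $J(\pi_t^*+\eta)=J(\pi_t^*)-\rho_t\int_{\Pi_t}\eta^2$ is a genuine improvement, because the paper never checks that the stationary point is a maximizer.

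The step that would fail is your suggestion that the two diagonal corners are the relevant ones for positivity. Take $p_t^{(i),1}=p_t^{(i),0}=:p_t^{(i)}$, or a small perturbation of this so that $p_t^{(i),1}>p_t^{(i),0}$ still holds. Then $g_t=-C_{t-1}\phi_1(s^{(1)})\phi_2(s^{(2)})$, with $\phi_i(s)=p_t^{(i)}e^{-\gamma s(u_t^{(i)}-1)}+(1-p_t^{(i)})e^{-\gamma s(d_t^{(i)}-1)}$ positive, convex, and minimized at the interior benchmark. So $h_t$ is smallest at the corner where each $\phi_i$ is largest, and this can be the off-diagonal corner $(\underline{\pi}_t^{(1)},\overline{\pi}_t^{(2)})$.

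Concretely, choose the intervals so that $\phi_1=1.1\min\phi_1$ at $\underline{\pi}_t^{(1)}$ and $\phi_1$ is close to $\min\phi_1$ at $\overline{\pi}_t^{(1)}$, with the endpoints reversed for $\phi_2$. In units of $C_{t-1}\min\phi_1\min\phi_2$:
\begin{itemize}
\item $C_{t-1}\phi_1\phi_2$ is about $1.1$ at both diagonal corners and equals $1.21$ at $(\underline{\pi}_t^{(1)},\overline{\pi}_t^{(2)})$;
\item choosing $\rho_t$ so that $\beta_t\approx1.18$ satisfies all three conditions in \eqref{AssumptionForPositive-N=2}, the first one because the oscillation of $\phi_1\phi_2$ is small;
\item yet $h_t<0$ near that off-diagonal corner.
\end{itemize}
Then $\pi_t^*$ is not a density. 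Your own identity then shows the supremum over genuine densities is strictly below the right-hand side of \eqref{SupremumEqn-N=2}: any admissible $\pi$ has $\eta\geq|\pi_t^*|$ on $\{\pi_t^*<0\}$, so $\int_{\Pi_t}\eta^2$ is bounded away from zero.

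The paper's proof has the same gap, since it simply asserts that \eqref{AssumptionForPositive-N=2} guarantees $\pi_t^*\geq0$. The remedy is to assume $h_t>0$ at all four corners of $\Pi_t$. Your separate-concavity argument shows this suffices, and with that hypothesis your proof is complete.
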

\begin{proof}
    Set the Lagrangian function
\begin{align*}
L(\pi_t,\beta_t)=&\int_{\underline{\pi}_t^{(1)}}^{\overline{\pi}_t^{(1)}} \int_{\underline{\pi}_t^{(2)}}^{\overline{\pi}_t^{(2)}}\left(E\left[-C_{t-1}e^{-\gamma \sum\limits_{i=1}^2 s^{(i)}(R_t^{(i)}-1)}\Big\vert\mathcal{F}_{t-1}\right]-\rho_t \pi_t(s^{(1)},s^{(2)})\right)\pi_t(s^{(1)},s^{(2)})ds^{(1)}ds^{(2)}\\\qquad &+\beta_t \left(\int_{\underline{\pi}^{(1)}}^{\overline{\pi}_t^{(1)}}\int_{\underline{\pi}_t^{(2)}}^{\overline{\pi}_t^{(2)}}\pi_t(s^{(1)},s^{(2)})ds^{(1)}ds^{(2)}-1\right).
\end{align*}
Then by the method of Lagrange multipliers, we solve the associated system of equations:
$$\int_{\underline{\pi}_t^{(1)}}^{\overline{\pi}_t^{(1)}} \int_{\underline{\pi}_t^{(2)}}^{\overline{\pi}_t^{(2)}}\left(E\left[-C_{t-1}e^{-\gamma\sum\limits_{i=1}^N s^{(i)}(R^{(i)}-1)}\Big\vert\mathcal{F}_{t-1}\right]-2\rho_t \pi_t (s^{(1)},s^{(2)})+\beta_t\right)ds^{(1)}ds^{(2)}=0,$$
$$\int_{\underline{\pi}_t^{(1)}}^{\overline{\pi}_t^{(1)}}\int_{\underline{\pi}_t^{(2)}}^{\overline{\pi}_t^{(2)}}\pi_t(s^{(1)},s^{(2)})ds^{(1)}ds^{(2)}-1=0,$$ and obtain the optimal control whose density function is given by
$$\pi_t^{*}(s^{(1)},s^{(2)})=\frac{1}{2\rho_t}\left(E\left[-C_{t-1}e^{-\gamma\sum\limits_{i=1}^N s^{(i)}(R^{(i)}-1)}\Bigg\vert\mathcal{F}_{t-1}\right]+\beta_t\right),$$ with $\beta_t=\frac{2\rho_t-\int_{\underline{\pi}_t^{(1)}}^{\overline{\pi}_t^{(1)}} \int_{\underline{\pi}_t^{(2)}}^{\overline{\pi}_t^{(2)}}E\left[-C_{t-1}e^{-\gamma\sum\limits_{i=1}^2 s^{(i)}(R^{(i)}-1)}\Big\vert\mathcal{F}_{t-1}\right]ds^{(1)}ds^{(2)}}{
(\overline{\pi}_t^{(1)}-\underline{\pi}_t^{(1)})(\overline{\pi}_t^{(2)}-\underline{\pi}_t^{(2)})}$.

Additionally, to satisfy the requirements of a probability density function, it is crucial that \(\pi_t(s^{(1)},s^{(2)}) \geq 0\). This condition is guaranteed by Assumption \eqref{AssumptionForPositive-N=2}.

Next, by plugging $\pi_t^{*}(s^{(1)},s^{(2)})$ back in the LHS of \eqref{SupremumEqn-N=2}, it becomes
\begin{align*}
&\frac{1}{4\rho_t}\int_{\underline{\pi}_t^{(1)}}^{\overline{\pi}_t^{(1)}}\int_{\underline{\pi}_t^{(2)}}^{\overline{\pi}_t^{(2)}}\left(E\left[-C_{t-1}e^{-\gamma\sum\limits_{i=1}^2 s^{(i)}(R^{(i)}-1)}\Big\vert\mathcal{F}_{t-1}\right]+\beta_t\right)^2ds^{(1)}ds^{(2)}\\ & -\frac{\beta_t}{2\rho_t}\int_{\underline{\pi}_t^{(1)}}^{\overline{\pi}_t^{(1)}}\int_{\underline{\pi}_t^{(2)}}^{\overline{\pi}_t^{(2)}}\left(E\left[-C_{t-1}e^{-\gamma\sum\limits_{i=1}^2 s^{(i)}(R^{(i)}-1)}\Big\vert\mathcal{F}_{t-1}\right]+\beta_t\right)ds^{(1)}ds^{(2)}\\=&\frac{1}{4\rho_t}\int_{\underline{\pi}_t^{(1)}}^{\overline{\pi}_t^{(1)}}\int_{\underline{\pi}_t^{(2)}}^{\overline{\pi}_t^{(2)}}\left(E\left[-C_{t-1}e^{-\gamma\sum\limits_{i=1}^2 s^{(i)}(R^{(i)}-1)}\Big\vert\mathcal{F}_{t-1}\right]+\beta_t\right)^2ds^{(1)}ds^{(2)}-\beta_t.
\end{align*}

On the other hand, Since for any $s^{(1)}\in [\underline{\pi}_t^{(1)}, \overline{\pi}_t^{(1)}]$ and $s^{(2)}\in [\underline{\pi}_t^{(2)}, \overline{\pi}_t^{(2)}]$,
\begin{align*}
    &\E\left[e^{-\gamma \sum\limits_{i=1}^2 s^{(i)}(R_t^{(i)}-1)}\Big\vert \mathcal{F}_{t-1}\right]
    \\=&\E\left[\E\left[e^{-\gamma s^{(1)}(R_t^{(1)}-1)}\Big\vert \mathcal{F}_{t-1}\vee \sigma(\xi_t^{cn})\right]\E\left[e^{-\gamma s^{(2)}(R_t^{(2)}-1)}\Big\vert \mathcal{F}_{t-1}\vee \sigma(\xi_t^{cn})\right]\Bigg\vert \mathcal{F}_{t-1}\right]\\=&p_t^{cn}\prod\limits_{i=1}^2\left(p_t^{(i),1}e^{-\gamma s^{(i)}(u_t^{(i)}-1)}+(1-p_t^{(i),1})e^{-\gamma s^{(i)}(d_t^{(i)}-1)}\right)\\&+(1-p_t^{cn})\prod\limits_{i=1}^2\left(p_t^{(i),0}e^{-\gamma s^{(i)}(u_t^{(i)}-1)}+(1-p_t^{(i),0})e^{-\gamma s^{(i)}(d_t^{(i)}-1)}\right),
\end{align*}
where the last line holds because for $k=0,1$ and $i=1,2$, $$\E\left[e^{-\gamma  s^{(i)}(R_t^{(i)}-1)}\Big\vert \mathcal{F}_{t-1}\vee \sigma(\xi_t^{cn})\right]\mathbbm{1}_{\{\xi_t^{cn}=k\}}=p_t^{(i),k}e^{-\gamma s^{(i)}(u_t^{(i)}-1)}+(1-p_t^{(i),k})e^{-\gamma s^{(i)}(d_t^{(i)}-1)}.$$
This then proves the claim \eqref{SupremumEqn-N=2} by letting $h_t(s^{(1)},s^{(2)})$ be given by (\ref{Expression-h-N=2}).
\end{proof}

We continue with the proof of Theorem \ref{Thm:ExploForwardRL-N=2}.
    Conditions $(i)$ and $(ii)$ of Definition \ref{def:ExploratoryFrowardPreferences} follow directly.
Indeed, let $x \in \mathbb{R}$, we first note that for any $t\in \mathbb{N}$, $s^{(1)}\in[\underline{\pi}^{(2)}_t,\overline{\pi}^{(2)}_t]$ and $s^{(2)}\in[\underline{\pi}^{(2)}_t,\overline{\pi}^{(2)}_t]$, the sequence of $(J_n)_{n=1}^t$ are Borel-measurable functions of the market parameters $(p_n^{(i),k})_{n=1}^t, (u_n^{(i)})_{n=1}^t,(d_n^{(i)})_{n=1}^t$ for $i=1,2$ and $k=0,1$, and $(p_n^{cn})_{n=1}^t, (\rho_n)_{n=1}^t$ which are $\mathcal{F}_{t-1}$-measurable. 
It then follows that the process $U_t(x)=-C_te^{-\gamma x}$ is $\mathcal{F}_{t-1}$-measurable.

Next, we show Condition $(iii)$ of Definition \ref{def:ExploratoryFrowardPreferences}. 
Let $t\in \mathbb{N}$ and consider any arbitrary ${X}\in {\mathcal{X}}(x)$, we must show that
\begin{align*}
    \E_{\pi}\left[U_{t}(X_{t})\vert \mathcal{F}_{t}\right]\geq \E_{\pi}\left[U_{t+1}(X_{t+1})-\rho_{t+1} P_{t+1}e^{-\gamma X_{t}}\int_{\underline{\pi}^{(1)}_t}^{\overline{\pi}^{(1)}_t}\int_{\underline{\pi}^{(2)}_t}^{\overline{\pi}^{(2)}_t}\pi_{t+1}^2(s^{(1)},s^{(2)})ds^{(1)}ds^{(2)}\bigg\vert \mathcal{F}_{t}\right].
\end{align*}

Straightforward computation shows
\begin{align*}
    &\E_{\pi}\left[U_{t+1}(X_{t+1})-\rho_{t+1} P_{t+1}e^{-\gamma X_{t}}\int_{\underline{\pi}^{(1)}_t}^{\overline{\pi}^{(1)}_t}\int_{\underline{\pi}^{(2)}_t}^{\overline{\pi}^{(2)}_t}\pi_{t+1}^2(s^{(1)},s^{(2)})ds^{(1)}ds^{(2)}\bigg\vert \mathcal{F}_{t}\right]\\=&P_{t+1}\E_{\pi}\left[e^{-\gamma X_{t}} \vert \mathcal{F}_{t}\right]
    \E_{\pi}\left[-C_te^{-\gamma \sum\limits_{i=1}^2 f_t^{(i)}(R_t^{(i)}-1)}-\rho_{t+1} \int_{\underline{\pi}^{(1)}_t}^{\overline{\pi}^{(1)}_t}\int_{\underline{\pi}^{(2)}_t}^{\overline{\pi}^{(2)}_t}\pi_{t+1}^2(s^{(1)},s^{(2)})ds^{(1)}ds^{(2)}\bigg\vert \mathcal{F}_{t}\right].
\end{align*}

Furthermore, $U_t(X_t)=-C_te^{-\gamma X_{t}}$, and $P_{t+1}=C_tJ_{t+1}$. Hence, it remains to show that
\begin{align*}
    -1\geq J_{t+1}\E_{\pi}\left[-C_te^{-\gamma \sum\limits_{i=1}^2 f_t^{(i)}(R_t^{(i)}-1)}-\rho_{t+1} \int_{\underline{\pi}^{(1)}_t}^{\overline{\pi}^{(1)}_t}\int_{\underline{\pi}^{(2)}_t}^{\overline{\pi}^{(2)}_t}\pi_{t+1}^2(s^{(1)},s^{(2)})ds^{(1)}ds^{(2)}\bigg\vert \mathcal{F}_{t}\right],
\end{align*}
which easily follows using Lemma \ref{LemmaSup-N=2}.

To show Condition $(iv)$, we let $X_t^{*}$, $t \in \mathbb{N}_0$, be given by (\ref{OptimalWealthRL}). We need to establish
\begin{align*}
    \E_{\pi}\left[U_t(X_t^{*})\vert \mathcal{F}_{t}\right]
    = \E_{\pi}\left[U_{t+1}(X_{t+1}^{*})-\rho_{t+1} P_{t+1}e^{-\gamma X_{t}^*}\int_{\underline{\pi}_t}^{\overline{\pi}_t}(\pi_{t+1}^*)^2(s^{(1)},s^{(2)})ds^{(1)}ds^{(2)}\bigg\vert \mathcal{F}_{t}\right].
\end{align*}
Since ${X}_{t+1}^{*}={X}_{t}^{*}+\sum\limits_{i=1}^2 \pi_t^{(i),*}(R_t^{(i)}-1)$, the above equality reduces to 
\begin{align*}
    -1= J_{t+1}\E_{\pi}\left[-C_te^{-\gamma \sum\limits_{i=1}^2 f_t^{(i),*}(R_t^{(i)}-1)}-\rho_{t+1} \int_{\underline{\pi}^{(1)}_t}^{\overline{\pi}^{(1)}_t}\int_{\underline{\pi}^{(2)}_t}^{\overline{\pi}^{(2)}_t}(\pi_{t+1}^{*})^2(s^{(1)},s^{(2)})ds^{(1)}ds^{(2)}\bigg\vert \mathcal{F}_{t}\right].
\end{align*}
This then directly follows with Lemma \ref{LemmaSup-N=2}.
\qed

\subsection{Proof of Theorem \ref{Thm:ExploForwardRL-CRRA}}
We first present an auxiliary result for constructing PreFER processes within the CRRA family.
\begin{lemma}\label{LemmaSupCRRA}
Suppose that \( f_t \) is the random control process associated with the probability density \(\pi_t\). For any $t\in\mathbb{N}$, we have
\begin{align}\label{SupremumEqn-CRRA}
    \sup_{\pi_t\in \mathcal{A}(t)}\E_{\pi}\left[C_{t-1}\left(1+f_t(R_t-1)\right)^{1-\frac{1}{\theta}}-\rho_t\int_{\underline{\pi}_t}^{\overline{\pi}_t}\pi_t^2(s)ds\Bigg\vert \mathcal{F}_{t-1}\right]=\frac{1}{4\rho_t}\int_{\underline{\pi}_t}^{\overline{\pi}_t}h_t^2(s)ds-\beta_t.
\end{align}
\end{lemma}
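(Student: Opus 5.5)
The plan is to repeat the Lagrangian argument used for Lemma \ref{LemmaSup}, now with the CRRA one-period payoff. Conditioning on $\mathcal{F}_{t-1}$, the expectation of the utility term against the density $\pi_t$ is
\[
\int_{\underline{\pi}_t}^{\overline{\pi}_t} g_t(s)\,\pi_t(s)\,ds,
\qquad
g_t(s):=C_{t-1}\Bigl(p_t\bigl(1+s(u_t-1)\bigr)^{1-\frac{1}{\theta}}+(1-p_t)\bigl(1+s(d_t-1)\bigr)^{1-\frac{1}{\theta}}\Bigr).
\]
The bounds $\underline{\pi}_t,\overline{\pi}_t$ are chosen so that $1+s(R_t-1)>0$ for every $s$ in the interval. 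The objective is then the strictly concave quadratic functional
\[
\int_{\underline{\pi}_t}^{\overline{\pi}_t}\bigl(g_t(s)\pi_t(s)-\rho_t\pi_t^2(s)\bigr)\,ds,
\]
maximized subject to $\int_{\underline{\pi}_t}^{\overline{\pi}_t}\pi_t(s)\,ds=1$ and $\pi_t\ge 0$.

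\emph{Step 1: the optimizer.} Form the Lagrangian $L(\pi_t,\beta_t)$ exactly as in the CARA case. Pointwise stationarity gives
\[
\pi_t^*(s)=\frac{g_t(s)+\beta_t}{2\rho_t}.
\]
Imposing the normalization determines $\beta_t$. The antiderivatives of the two power terms produce the $\bigl(1+s(\cdot-1)\bigr)^{2-\frac{1}{\theta}}/\bigl((2-\frac{1}{\theta})(\cdot-1)\bigr)$ expressions, so $\beta_t$ reduces to \eqref{Expression-beta-CRRA}. Care is needed with the sign convention relating $h_t$ in \eqref{Expression-h-CRRA} to $g_t+\beta_t$, which appears inconsistent for the $(1-p_t)$ term; I would check this against \eqref{OptimalStrategyRL-CRRA}, which uses $+$, and set $h_t=g_t+\beta_t$ throughout. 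Nonnegativity of $\pi_t^*$ follows from Assumption \eqref{AssumptionForPositive-CRRA}. The density $\pi_t^*$ is concave in $s$ because $1-\frac{1}{\theta}\in(0,1)$ or the prefactor flips the sign, so positivity at both endpoints gives positivity on the whole interval.

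\emph{Step 2: verification that this is the supremum.} Rather than rely on the heuristic multiplier argument, complete the square. For any admissible $\pi_t$, write $\pi_t=\pi_t^*+\eta$ with $\int\eta=0$. Then
\[
\int\bigl(g_t\pi_t-\rho_t\pi_t^2\bigr)
=\int\bigl(g_t\pi_t^*-\rho_t(\pi_t^*)^2\bigr)+\int(g_t-2\rho_t\pi_t^*)\eta-\rho_t\int\eta^2 .
\]
The middle term equals $-\beta_t\int\eta=0$. Hence the objective is at most its value at $\pi_t^*$, with equality if and only if $\eta=0$. This is the step I regard as the genuine content, and it is also where the constraint $\pi_t\ge0$ must be checked to be inactive. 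That check is already secured by Step 1.

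\emph{Step 3: the value.} Substitute $\pi_t^*=h_t/(2\rho_t)$ into the objective. The value is
\[
\int\bigl(h_t-\beta_t\bigr)\frac{h_t}{2\rho_t}-\frac{1}{4\rho_t}\int h_t^2
=\frac{1}{4\rho_t}\int h_t^2-\frac{\beta_t}{2\rho_t}\int h_t .
\]
Since $\int h_t=2\rho_t$, this equals $\frac{1}{4\rho_t}\int_{\underline{\pi}_t}^{\overline{\pi}_t}h_t^2(s)\,ds-\beta_t$, which is the right-hand side of \eqref{SupremumEqn-CRRA}.

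The main obstacle is bookkeeping rather than analysis. One must keep the sign conventions for $h_t$ and $\beta_t$ consistent, and confirm that the integrability and positivity requirements hold on the chosen interval. The requirements are $1+s(R_t-1)>0$, so that the power functions are defined, and the positivity condition \eqref{AssumptionForPositive-CRRA}.
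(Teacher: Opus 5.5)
Your proposal is correct and follows essentially the same route as the paper: the Lagrangian yields $\pi_t^*=(g_t+\beta_t)/(2\rho_t)$, the normalization fixes $\beta_t$, positivity comes from \eqref{AssumptionForPositive-CRRA}, and substituting back with $\int h_t=2\rho_t$ gives the value. Your completing-the-square step is an explicit form of the paper's ``usual verification technique,'' and you are right that the $(1-p_t)$ term in \eqref{Expression-h-CRRA} should carry a plus sign, as in \eqref{OptimalStrategyRL-CRRA} and \eqref{Expression-beta-CRRA}. One small tightening: with the objective exactly as stated ($C_{t-1}>0$, no $(1-\frac{1}{\theta})^{-1}$ factor), $\pi_t^*$ is concave only for $\theta>1$, so the endpoint-positivity argument needs $\theta>1$; this is implicitly required anyway, since $\lambda_t=\rho_tP_t(1-\frac{1}{\theta})X_{t-1}^{1-\frac{1}{\theta}}$ must be positive.
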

\begin{proof}
    We first set the Lagrangian function for the constrained optimization problem,
\begin{align*} L(\pi_t,\beta_t)=&\int_{\underline{\pi}_t}^{\overline{\pi}_t}\left(C_{t-1}\left(p_t\left(1+s(u_t-1)\right)^{1-\frac{1}{\theta}}+(1-p_t)\left(1+s(d_t-1)\right)^{1-\frac{1}{\theta}}\right)-\rho_t \pi_t(s)\right)\pi_t(s)ds\\& \qquad+\beta_t \left(\int_{\underline{\pi}_t}^{\overline{\pi}_t}\pi_t(s)ds-1\right),
\end{align*} 
then by the method of Lagrange multipliers, we obtain the following system of equations:
$$\int_{\underline{\pi}_t}^{\overline{\pi}_t}\left(C_{t-1}p_t\left(1+s(u_t-1)\right)^{1-\frac{1}{\theta}}+C_{t-1}(1-p_t)\left(1+s(d_t-1)\right)^{1-\frac{1}{\theta}}-2\rho_t \pi_t(s)+ \beta_t \right)ds =0,$$
$$\int_{\underline{\pi}_t}^{\overline{\pi}_t}\pi_t(s)ds - 1=0.$$
Applying the usual verification technique, we obtain an optimal control whose density function is given by
$$\pi_t^{*}(s)=\frac{C_{t-1}p_t\left(1+s(u_t-1)\right)^{1-\frac{1}{\theta}}+C_{t-1}(1-p_t)\left(1+s(d_t-1)\right)^{1-\frac{1}{\theta}}+\beta_t}{2\rho_t},$$ with
$\beta_t=\frac{2\rho_t-C_{t-1}\int_{\underline{\pi}_t}^{\overline{\pi}_t}\left(p_t\left(1+s(u_t-1)\right)^{1-\frac{1}{\theta}}+(1-p_t)\left(1+s(d_t-1)\right)^{1-\frac{1}{\theta}}\right)ds}{\overline{\pi}_t-\underline{\pi}_t},$
which is further equal to \eqref{Expression-beta-CRRA} after direct computation.
Additionally, to satisfy the requirements of a probability density function, it is crucial that \(\pi_t(s) \geq 0\). This condition is guaranteed by Assumption \eqref{AssumptionForPositive-CRRA}.

Next, by plugging $\pi_t^{*}(s)$ back in the LHS of \eqref{SupremumEqn-CRRA}, it becomes
\begin{align*}
&\frac{1}{4\rho_t}\int_{\underline{\pi}_t}^{\overline{\pi}_t}(C_{t-1}p_t\left(1+s(u_t-1)\right)^{1-\frac{1}{\theta}}-C_{t-1}(1-p_t)\left(1+s(d_t-1)\right)^{1-\frac{1}{\theta}}+\beta_t)^2ds\\ & -\frac{\beta_t}{2\rho_t}\int_{\underline{\pi}_t}^{\overline{\pi}_t}(C_{t-1}p_t\left(1+s(u_t-1)\right)^{1-\frac{1}{\theta}}-C_{t-1}(1-p_t)\left(1+s(d_t-1)\right)^{1-\frac{1}{\theta}}+\beta_t)ds\\=&\frac{1}{4\rho_t}\int_{\underline{\pi}_t}^{\overline{\pi}_t}(C_{t-1}p_t\left(1+s(u_t-1)\right)^{1-\frac{1}{\theta}}-C_{t-1}(1-p_t)\left(1+s(d_t-1)\right)^{1-\frac{1}{\theta}}+\beta_t)^2ds-\beta_t,
\end{align*}
which further equals to $\frac{1}{4\rho_t}\int_{\underline{\pi}_t}^{\overline{\pi}_t}h_t^2(s)ds-\beta_t$, by letting $h_t(s)$ be given by (\ref{Expression-h-CRRA}).
\end{proof}

We continue with the proof of Theorem \ref{Thm:ExploForwardRL-CRRA}.

    Conditions $(i)$ and $(ii)$ of Definition \ref{def:ExploratoryFrowardPreferences} follow directly.
Indeed, let $x \in \mathbb{R}_+$, we first note that for any $t\in \mathbb{N}$ and $s\in[\underline{\pi}_t,\overline{\pi}_t]$, the sequence of $(J_n)_{n=1}^t$ are Borel-measurable functions of the market parameters $(p_n)_{n=1}^t, (u_n)_{n=1}^t,(d_n)_{n=1}^t, (\rho_n)_{n=1}^t$ which are $\mathcal{F}_{t-1}$-measurable. 
It then follows that the process $U_t(x)=C_t(1-\frac{1}{\theta})x^{1-\frac{1}{\theta}}$ is $\mathcal{F}_{t-1}$-measurable.

Next, we show Condition $(iii)$ of Definition \ref{def:ExploratoryFrowardPreferences}. 
Let $t\in \mathbb{N}$ and consider any arbitrary ${X}\in {\mathcal{X}}(x)$, we must show that
\begin{align*}
    \E_{\pi}\left[U_{t}(X_{t})\vert \mathcal{F}_{t}\right]\geq \E_{\pi}\left[U_{t+1}(X_{t+1})-\rho_{t+1} P_{t+1}(1-\frac{1}{\theta})X_{t}^{1-\frac{1}{\theta}}\int_{\underline{\pi}_t}^{\overline{\pi}_t}\pi_{t+1}^2(s)ds\bigg\vert \mathcal{F}_{t}\right].
\end{align*}

Straightforward computation yields
\begin{align*}
    &\E_{\pi}\left[U_{t+1}(X_{t+1})-\rho_{t+1} P_{t+1}(1-\frac{1}{\theta})X_{t}^{1-\frac{1}{\theta}}\int_{\underline{\pi}_t}^{\overline{\pi}_t}\pi_{t+1}^2(s)ds\bigg\vert \mathcal{F}_{t}\right]\\=&P_{t+1}(1-\frac{1}{\theta})X_{t}^{1-\frac{1}{\theta}}
    \E_{\pi}\left[C_t\left(1+f_{t+1}(R_{t+1}-1)\right)^{1-\frac{1}{\theta}}-\rho_{t+1} \int_{\underline{\pi}_t}^{\overline{\pi}_t}\pi_{t+1}^2(s)ds\bigg\vert \mathcal{F}_{t}\right].
\end{align*}

Furthermore, $U_t(X_t)=C_t(1-\frac{1}{\theta})X_t^{1-\frac{1}{\theta}}$, and $P_{t+1}=C_tJ_{t+1}$. Hence, it remains to show that
\begin{align*}
    1\geq J_{t+1}\E_{\pi}\left[C_t\left(1+f_{t+1}(R_{t+1}-1)\right)^{1-\frac{1}{\theta}}-\rho_{t+1} \int_{\underline{\pi}_t}^{\overline{\pi}_t}\pi_{t+1}^2(s)ds\bigg\vert \mathcal{F}_{t}\right],
\end{align*}
which easily follows using Lemma \ref{LemmaSupCRRA}.

To show Condition $(iv)$, we let $X_t^{*}$, $t \in \mathbb{N}_0$, be given by (\ref{OptimalWealthRL}). We need to establish
\begin{align*}
\E_{\pi}\left[U_t(X_t^{*})\vert \mathcal{F}_{t}\right]
    = \E_{\pi}\left[U_{t+1}(X_{t+1}^{*})-\rho_{t+1} P_{t+1}e^{-\gamma X_{t}^*}\int_{\underline{\pi}_t}^{\overline{\pi}_t}(\pi_{t+1}^*)^2(s)ds\bigg\vert \mathcal{F}_{t}\right].
\end{align*}
Since ${X}_{t+1}^{*}={X}_{t}^{*}(1+f_{t+1}^{*}(R_{t+1}-1))$, the above equality reduces to 
\begin{align*}
    1= J_{t+1}\E_{\pi}\left[C_t\left(1+f_{t+1}(R_{t+1}-1)\right)^{1-\frac{1}{\theta}}-\rho_{t+1} \int_{\underline{\pi}_t}^{\overline{\pi}_t}(\pi_{t+1}^*)^2(s)ds\bigg\vert \mathcal{F}_{t}\right].
\end{align*}
This then directly follows with Lemma \ref{LemmaSupCRRA}.
\qed

\bibliography{BibFile}

@article{gomez2015netflix,
  title={The netflix recommender system: Algorithms, business value, and innovation},
  author={Gomez-Uribe, Carlos A and Hunt, Neil},
  journal={ACM Transactions on Management Information Systems (TMIS)},
  volume={6},
  number={4},
  pages={1--19},
  year={2015},
  publisher={ACM New York, NY, USA}
}

@inproceedings{covington2016deep,
  title={Deep neural networks for youtube recommendations},
  author={Covington, Paul and Adams, Jay and Sargin, Emre},
  booktitle={Proceedings of the 10th ACM Conference on Recommender Systems},
  pages={191--198},
  year={2016}
}

@incollection{ricci2010introduction,
  title={Introduction to recommender systems handbook},
  author={Ricci, Francesco and Rokach, Lior and Shapira, Bracha},
  booktitle={Recommender Systems Handbook},
  pages={1--35},
  year={2010},
  publisher={Springer}
}

@inproceedings{zhao2018deep,
  title={Deep reinforcement learning for page-wise recommendations},
  author={Zhao, Xiangyu and Xia, Long and Zhang, Liang and Ding, Zhuoye and Yin, Dawei and Tang, Jiliang},
  booktitle={Proceedings of the 12th ACM Conference on Recommender Systems},
  pages={95--103},
  year={2018}
}

@inproceedings{chen2019top,
  title={Top-k off-policy correction for a REINFORCE recommender system},
  author={Chen, Minmin and Beutel, Alex and Covington, Paul and Jain, Sagar and Belletti, Francois and Chi, Ed H},
  booktitle={Proceedings of the Twelfth ACM International Conference on Web Search and Data Mining},
  pages={456--464},
  year={2019}
}

@article{cheng2023eliciting,
  title={Eliciting Risk Aversion with Inverse Reinforcement Learning via Interactive Questioning},
  author={Cheng, Ziteng and Coache, Anthony and Jaimungal, Sebastian},
  journal={arXiv preprint arXiv:2308.08427},
  year={2025}
}

@article{angoshtari2020predictable,
  title={Predictable forward performance processes: The binomial case},
  author={Angoshtari, Bahman and Zariphopoulou, Thaleia and Zhou, Xun Yu},
  journal={SIAM Journal on Control and Optimization},
  volume={58},
  number={1},
  pages={327--347},
  year={2020},
  publisher={SIAM}
}

@article{musiela2006investments,
  title={Investments and forward utilities},
  author={Musiela, Marek and Zariphopoulou, Thaleia},
  journal={Preprint},
  year={2006}
}

@article{musiela2008optimal,
  title={Optimal asset allocation under forward exponential performance criteria},
  author={Musiela, Marek and Zariphopoulou, Thaleia},
  journal={Markov Processes and Related Topics: A Festschrift for Thomas G. Kurtz},
  volume={4},
  pages={285--300},
  year={2008},
  publisher={Institute of Mathematical Statistics Beachwood, OH, USA}
}

@article{musiela2009portfolio,
  title={Portfolio choice under dynamic investment performance criteria},
  author={Musiela, Marek and Zariphopoulou, Thaleia},
  journal={Quantitative Finance},
  volume={9},
  number={2},
  pages={161--170},
  year={2009},
  publisher={Taylor \& Francis}
}

@article{musiela2010portfolio,
  title={Portfolio choice under space-time monotone performance criteria},
  author={Musiela, Marek and Zariphopoulou, Thaleia},
  journal={SIAM Journal on Financial Mathematics},
  volume={1},
  number={1},
  pages={326--365},
  year={2010},
  publisher={SIAM}
}

@article{henderson2007horizon,
  title={Horizon-unbiased utility functions},
  author={Henderson, Vicky and Hobson, David},
  journal={Stochastic Processes and their Applications},
  volume={117},
  number={11},
  pages={1621--1641},
  year={2007},
  publisher={Elsevier}
}

@article{choulli2007minimal,
  title={Minimal Hellinger martingale measures of order q},
  author={Choulli, Tahir and Stricker, Christophe and Li, Jia},
  journal={Finance and Stochastics},
  volume={11},
  number={3},
  pages={399--427},
  year={2007},
  publisher={Springer}
}

@article{kallblad2018dynamically,
  title={Dynamically consistent investment under model uncertainty: the robust forward criteria},
  author={K{\"a}llblad, Sigrid and Ob{\l}{\'o}j, Jan and Zariphopoulou, Thaleia},
  journal={Finance and Stochastics},
  volume={22},
  number={4},
  pages={879--918},
  year={2018},
  publisher={Springer}
}

@article{källblad2020black,
  title={Black's Inverse Investment Problem and Forward Criteria with Consumption},
  author={Källblad, Sigrid},
  journal={SIAM Journal on Financial Mathematics},
  volume={11},
  number={2},
  pages={494--525},
  year={2020},
  publisher={SIAM}
}

@article{hu2020systems,
  title={Systems of ergodic {BSDE}s arising in regime switching forward performance processes},
  author={Hu, Ying and Liang, Gechun and Tang, Shanjian},
  journal={SIAM Journal on Control and Optimization},
  volume={58},
  number={4},
  pages={2503--2534},
  year={2020},
  publisher={SIAM}
}

@article{he2021forward,
  title={Forward rank-dependent performance criteria: Time-consistent investment under probability distortion},
  author={He, Xue Dong and Strub, Moris S and Zariphopoulou, Thaleia},
  journal={Mathematical Finance},
  volume={31},
  number={2},
  pages={683--721},
  year={2021},
  publisher={Wiley Online Library}
}

@article{jia2022policya,
  title={Policy evaluation and temporal-difference learning in continuous time and space: A martingale approach},
  author={Jia, Yanwei and Zhou, Xun Yu},
  journal={Journal of Machine Learning Research},
  volume={23},
  number={154},
  pages={1--55},
  year={2022}
}

@article{cui2023discrete,
  title={Discrete-Time Mean-Variance Strategy Based on Reinforcement Learning},
  author={Cui, Xiangyu and Li, Xun and Shi, Yun and Zhao, Si},
  journal={arXiv preprint arXiv:2312.15385},
  year={2023}
}

@article{jia2023q,
  title={q-Learning in continuous time},
  author={Jia, Yanwei and Zhou, Xun Yu},
  journal={Journal of Machine Learning Research},
  volume={24},
  number={161},
  pages={1--61},
  year={2023}
}

@article{jia2022policyb,
  title={Policy gradient and actor-critic learning in continuous time and space: Theory and algorithms},
  author={Jia, Yanwei and Zhou, Xun Yu},
  journal={Journal of Machine Learning Research},
  volume={23},
  number={275},
  pages={1--50},
  year={2022}
}

@article{Bo2022,
Author={Bo, Lijun and Capponi, Agostino and Zhou, Chao},
Title={Power forward performance in semimartingale markets with stochastic integrated factors}, Journal={Mathematics of Operations Research}, 
volume={48},
number={1},
pages={288-312},
year = {2023},
}

@article{vzitkovic2009dual,
  title={A dual characterization of self-generation and exponential forward performances},
  author={{\v{Z}}itkovi{\'c}, Gordan},
  journal={The Annals of Applied Probability},
  volume={19},
  number={6},
  pages={2176--2210},
  year={2009},
  publisher={Institute of Mathematical Statistics}
}

@book{ripley2009stochastic,
  title={Stochastic simulation},
  author={Ripley, Brian D},
  year={2009},
  publisher={John Wiley \& Sons}
}

@article{el2018consistent,
  title={Consistent utility of investment and consumption: a forward/backward {SPDE} viewpoint},
  author={El Karoui, Nicole and Hillairet, Caroline and Mrad, Mohamed},
  journal={Stochastics},
  volume={90},
  number={6},
  pages={927--954},
  year={2018},
  publisher={Taylor \& Francis}
}

@article{bianchi2024human,
  title={Human-robot interactions in investment decisions},
  author={Bianchi, Milo and Bri{\`e}re, Marie},
  journal={Management Science},
  year={2024},
  publisher={INFORMS}
}

@article{liang2023predictableRelative,
  title={Predictable relative forward performance processes: Multi-agent and mean field games for portfolio management},
  author={Liang, Gechun and Strub, Moris S and Wang, Yuwei},
  journal={arXiv preprint arXiv:2311.04841},
  year={2023}
}

@article{shkolnikov2016asymptotic,
  title={Asymptotic analysis of forward performance processes in incomplete markets and their ill-posed {HJB} equations},
  author={Shkolnikov, Mykhaylo and Sircar, Ronnie and Zariphopoulou, Thaleia},
  journal={SIAM Journal on Financial Mathematics},
  volume={7},
  number={1},
  pages={588--618},
  year={2016},
  publisher={SIAM}
}

@article{waldon2024forward,
  title={Forward robust portfolio selection: The binomial case},
  author={Waldon, Harrison},
  journal={Probability, Uncertainty and Quantitative Risk},
  volume={9},
  number={1},
  pages={107--122},
  year={2024},
  publisher={Probability, Uncertainty and Quantitative Risk}
}

@article{Angoshtari2024,
title = {Rank-dependent predictable forward performance processes},
author = {Bahman Angoshtari and Shida Duan},
journal = {Probability, Uncertainty and Quantitative Risk},
volume = {9},
number = {2},
pages = {181-218},
year = {2024},
publisher={Probability, Uncertainty and Quantitative Risk}
}

@article{angoshtari2022predictable,
  title={Predictable Forward Performance Processes in Complete Markets},
  author={Angoshtari, Bahman},
  journal={Probability, Uncertainty and Quantitative Risk},
  year={2023},
  volume={8},
  number={2},
  pages={141--176},
}

@article{strub2021evolution,
  title={Evolution of the {A}rrow--{P}ratt measure of risk-tolerance for predictable forward utility processes},
  author={Strub, Moris S and Zhou, Xun Yu},
  journal={Finance and Stochastics},
  volume={25},
  number={2},
  pages={331--358},
  year={2021},
  publisher={Springer}
}

@article{nadtochiy2017optimal,
  title={Optimal investment for all time horizons and {M}artin boundary of space-time diffusions},
  author={Nadtochiy, Sergey and Tehranchi, Michael},
  journal={Mathematical Finance},
  volume={27},
  number={2},
  pages={438--470},
  year={2017},
  publisher={Wiley Online Library}
}

@article{chong2019ergodic,
  title={An ergodic {BSDE} approach to forward entropic risk measures: representation and large-maturity behavior},
  author={Chong, Wing Fung and Hu, Ying and Liang, Gechun and Zariphopoulou, Thaleia},
  journal={Finance and Stochastics},
  volume={23},
  number={1},
  pages={239--273},
  year={2019},
  publisher={Springer}
}

@article{wu2024reinforcement,
  title={Reinforcement learning for continuous-time mean-variance portfolio selection in a regime-switching market},
  author={Wu, Bo and Li, Lingfei},
  journal={Journal of Economic Dynamics and Control},
  volume={158},
  pages={104787},
  year={2024},
  publisher={Elsevier}
}

@article{gao2022state,
  title={State-dependent temperature control for Langevin diffusions},
  author={Gao, Xuefeng and Xu, Zuo Quan and Zhou, Xun Yu},
  journal={SIAM Journal on Control and Optimization},
  volume={60},
  number={3},
  pages={1250--1268},
  year={2022},
  publisher={SIAM}
}

@article{guo2022entropy,
  title={Entropy regularization for mean field games with learning},
  author={Guo, Xin and Xu, Renyuan and Zariphopoulou, Thaleia},
  journal={Mathematics of Operations research},
  volume={47},
  number={4},
  pages={3239--3260},
  year={2022},
  publisher={INFORMS}
}

@article{wang2020reinforcement,
  title={Reinforcement learning in continuous time and space: A stochastic control approach},
  author={Wang, Haoran and Zariphopoulou, Thaleia and Zhou, Xun Yu},
  journal={Journal of Machine Learning Research},
  volume={21},
  number={198},
  pages={1--34},
  year={2020}
}

@article{dai2023learning,
  title={Learning equilibrium mean-variance strategy},
  author={Dai, Min and Dong, Yuchao and Jia, Yanwei},
  journal={Mathematical Finance},
  volume={33},
  number={4},
  pages={1166--1212},
  year={2023},
  publisher={Wiley Online Library}
}

@article{wang2020continuous,
  title={Continuous-time mean--variance portfolio selection: A reinforcement learning framework},
  author={Wang, Haoran and Zhou, Xun Yu},
  journal={Mathematical Finance},
  volume={30},
  number={4},
  pages={1273--1308},
  year={2020},
  publisher={Wiley Online Library}
}

@incollection{d2021robo,
  title={Robo-advising},
  author={D’Acunto, Francesco and Rossi, Alberto G},
  booktitle={The Palgrave handbook of technological finance},
  pages={725--749},
  year={2021},
  publisher={Springer}
}

@article{choulli2017explicit,
  title={Explicit description of HARA forward utilities and their optimal portfolios},
  author={Choulli, Tahir and Ma, Junfeng},
  journal={Theory of Probability \& Its Applications},
  volume={61},
  number={1},
  pages={57--93},
  year={2017},
  publisher={SIAM}
}

@article{nadtochiy2014class,
  title={A class of homothetic forward investment performance processes with non-zero volatility},
  author={Nadtochiy, Sergey and Zariphopoulou, Thaleia},
  journal={Inspired by Finance: The Musiela Festschrift},
  pages={475--504},
  year={2014},
  publisher={Springer}
}

@article{liang2017representation,
  title={Representation of homothetic forward performance processes in stochastic factor models via ergodic and infinite horizon {BSDE}},
  author={Liang, Gechun and Zariphopoulou, Thaleia},
  journal={SIAM Journal on Financial Mathematics},
  volume={8},
  number={1},
  pages={344--372},
  year={2017},
  publisher={SIAM}
}

@article{Avanesyan2020,
Author={Avanesyan, Levon and Shkolnikov, Mykhaylo and Sircar, Ronnie},
Title={Construction of a class of forward performance processes in stochastic factor models, and an extension of Widder theorem}, Journal={Finance and Stochastics}, 
VOLUME = {24},
      YEAR = {2020},
    NUMBER = {4},
     PAGES = {981–1011}
}

@article {el2014,
        TITLE = {Affine long term yield curves: an application of the {R}amsey
              rule with progressive utility},
        AUTHOR = {El Karoui, Nicole and Hillairet, Caroline and Mrad, Mohamed},
        JOURNAL = {International Journal of Financial Engineering},
    VOLUME = {1},
      YEAR = {2014},
    NUMBER = {1},
     PAGES = {1450003, 24}
      }

@article{mohamed2013exact,
  title={An Exact Connection between Two Solvable {SDEs} and a Nonlinear Utility Stochastic {PDE}},
  author={El Karoui, Nicole and Mrad, Mohamed},
  journal={SIAM Journal on Financial Mathematics},
  volume={4},
  number={1},
  pages={697},
  year={2013},
  publisher={Society for Industrial and Applied Mathematics}
}

@book{robert1999monte,
  title={Monte Carlo statistical methods},
  author={Robert, Christian P and Casella, George and Casella, George},
  volume={2},
  year={1999},
  publisher={Springer}
}

@article{devroye2006nonuniform,
  title={Nonuniform random variate generation},
  author={Devroye, Luc},
  journal={Handbooks in Operations Research and Management Science},
  volume={13},
  pages={83--121},
  year={2006},
  publisher={Elsevier}
}

@article{von1951variate,
  author  = {John von Neumann},
  title   = {Various Techniques Used in Connection with Random Digits},
  journal = {National Bureau of Standards Applied Mathematics Series},
  volume  = {12},
  pages   = {36--38},
  year    = {1951}
}

@article{liang2023predictable,
  title={Predictable forward performance processes: Infrequent evaluation and applications to human-machine interactions},
  author={Liang, Gechun and Strub, Moris S and Wang, Yuwei},
  journal={Mathematical Finance},
  volume={33},
  number={4},
  pages={1248--1286},
  year={2023},
  publisher={Wiley Online Library}
}

@incollection{musiela2016forward,
  title={Forward exponential indifference valuation in an incomplete binomial model},
  author={Musiela, Marek and Sokolova, Ekaterina and Zariphopoulou, Thaleia},
  booktitle={Advanced Modelling in Mathematical Finance},
  pages={277--302},
  year={2016},
  publisher={Springer}
}

@article{alsabah2021robo,
  title={Robo-advising: Learning investors' risk preferences via portfolio choices},
  author={Alsabah, Humoud and Capponi, Agostino and Ruiz Lacedelli, Octavio and Stern, Matt},
  journal={Journal of Financial Econometrics},
  volume={19},
  number={2},
  pages={369--392},
  year={2021},
  publisher={Oxford University Press}
}

@inproceedings{wang2021robo,
  title={Robo-advising: Enhancing investment with inverse optimization and deep reinforcement learning},
  author={Wang, Haoran and Yu, Shi},
  booktitle={2021 20th IEEE International Conference on Machine Learning and Applications (ICMLA)},
  pages={365--372},
  year={2021},
  organization={IEEE}
}

@article{capponi2022personalized,
  title={Personalized robo-advising: Enhancing investment through client interaction},
  author={Capponi, Agostino and Olafsson, Sveinn and Zariphopoulou, Thaleia},
  journal={Management Science},
  volume={68},
  number={4},
  pages={2485--2512},
  year={2022},
  publisher={INFORMS}
}

\end{document}